\newcommand{\poly}{{\rm poly}}
\def\A{\mathcal{A}}
\def\D{\mathcal{D}}
\def\F{\mathcal{F}}
\def\K{\mathcal{K}}
\def\reals{{\mathbb R}}
\def\eps{{\varepsilon}}
\newcommand{\argmin}{\rm argmin}
\newcommand{\oldparagraph}[1]{\vspace{5pt}\noindent\textbf{#1}}
\def\VP{\text{Vis}}        
\def\OPT{\text{OPT}}
\definecolor{forestgreen}{rgb}{0.13, 0.55, 0.13}
\def\dfn#1{\emph{\textcolor{forestgreen}{\textbf{#1}}}}
\newcommand{\old}[1]{{{}}}
\newcommand{\temp}[1]{{{}}}
\newcommand{\fullversion}[1]{{{}}}
\newcommand{\omrit}[1]{}
\newcommand{\omritin}[1]{}
\newcommand{\matya}[1]{}
\newcommand{\matyain}[1]{}
\newcommand{\joe}[1]{}
\newcommand{\joein}[1]{}
\newcommand{\rathish}[1]{}
\newcommand{\rathishin}[1]{}
\newcommand{\todoin}[1]{}
\newcommand{\rev}[1]{}
\title{Robustly Guarding Polygons}
\author{Rathish Das}{University of Houston}{rathish@central.uh.edu}{https://orcid.org/0000-0002-2416-6422}{}
\author{Omrit Filtser}{The Open University of Israel}{omrit.filtser@gmail.com}{https://orcid.org/0000-0002-3978-1428}{}
\author{Matthew J. Katz}{Ben-Gurion University of the Negev}{matya@cs.bgu.ac.il}{https://orcid.org/0000-0002-0672-729X}{Partially supported by Grant 495/23 from the Israel Science Foundation and Grant 2019715/CCF-20-08551 from the US-Israel Binational Science Foundation/US National Science Foundation.}
\author{Joseph S.B. Mitchell}{Stony Brook University}{joseph.mitchell@stonybrook.edu}{https://orcid.org/0000-0002-0152-2279}{Partially supported by the National Science Foundation (CCF-2007275).}
\authorrunning{R. Das, O. Filtser, M.\,J. Katz, and J.\,S.B. Mitchell} %TODO mandatory. First: Use abbreviated first/middle names. Second (only in severe cases): Use first author plus 'et al.'
\keywords{geometric optimization, approximation algorithms, guarding} %TODO mandatory; please add comma-separated list of keywords
\begin{document}

\maketitle

\begin{abstract}
       We propose precise notions of what it means to guard a domain ``robustly'', under a variety of models. While approximation algorithms for minimizing the number of (precise) point guards in a polygon is a notoriously challenging area of investigation, we show that imposing various degrees of robustness on the notion of visibility coverage leads to a more tractable (and realistic) problem for which we can provide approximation algorithms with constant factor guarantees. 
\end{abstract}

\section{Introduction}
A fundamental set cover problem that arises in geometric domains is the classic ``art gallery'' or ``guarding'' problem: Given a geometric domain (e.g., polygon $P$), place a set of points (``guards'') within $P$, such that every point of $P$ is seen by at least one of the guards. This problem has many variants and has been studied extensively from many perspectives, including combinatorics, complexity, approximation algorithms, and algorithm engineering for solving real instances to provable optimality or near-optimality.

Approximation algorithms for guarding have been extensively pursued for decades (see related work), where the various variants differ from one another in (i) the underlying domain, e.g., simple polygon vs. polygon with holes, (ii) the portion of the domain that must be guarded, e.g., only its boundary, the entire domain, or a discrete set of points in it (iii) the type of guards, e.g., static, mobile, or with various restrictions on their coverage area, (iv) the restrictions on the location of the guards, e.g., only at vertices (vertex guards) or anywhere (point-guards), and (v) the underlying notion of vision, e.g., line of sight or rectangle visibility. 
Despite all this work, even an $O(\log \OPT)$-approximation algorithm for point-guarding a simple $n$-gon, without (weak) additional assumptions, is unknown (see below). 

In this paper, we present and discuss a new and natural notion of vision called \emph{robust vision}. Under the standard notion of vision, two points in a polygon $P$ see each other if and only if the line segment between them is contained in $P$. However, in the context of guarding, where e.g. the guards are not necessarily stationary entities or perhaps their location is imprecise, it makes sense to require that a guard $g$ that is responsible for guarding a point $p$ can see $p$ from any point in some vicinity of its specified location.  In this case we say that $g$ robustly guards $p$. It is also possible that the location of an entity to be guarded is imprecise or alternatively the entity may move in the vicinity of its specified location, and we would like to ensure that the guard in charge of this entity does not lose sight of it, i.e., we would like it to guard the entity robustly. (Here, we mostly focus on the former case of robust vision.) 
Robust guarding is a generalization of standard guarding in that when the requirement of robustness tends to zero, robust guarding reduces to standard guarding.

Robust guarding is especially important in light of recent results showing that there are polygons that can be guarded with 2 guards, but only if both guards are very precisely placed at points with irrational coordinates~\cite{meijer2022sometimes} (see also \cite{abrahamsen2017irrational}). In our formulation of robust guarding, we explicitly model the fact that a guard may be imprecisely placed at locations within a polygon, and may in fact move around within some neighborhood; we insist, then, for a point $p$ to be ``seen'' that it must be seen no matter where the guard may be within a disk that subtends at least some minimum angle when viewed from $p$. The breakthrough result~\cite{abrahamsen2021art} showing that the guarding problem is complete within the existential theory of the reals is strong evidence of the algebraic difficulty of computing exact optimal sets of guards, even in polygons in the plane.

\oldparagraph{Our techniques and results.} We summarize our contributions and methods:

(1) We introduce notions of ``robust vision'' within a polygonal domain $P$ and analyze the optimal guarding problem from this new perspective. In particular, for an appropriately small parameter $\alpha>0$, we say that a guard at point $g$ \emph{$\alpha$-robustly guards} a point $p\in P$ if $p$ sees (under ordinary visibility) all points within a Euclidean disk of radius $\alpha\|g-p\|$ centered at $g$. In the figure below, $g$ $\alpha$-robustly guards $p$, but not $p'$ or $p''$. 

\begin{figure}[h]
	\centering
	\includegraphics{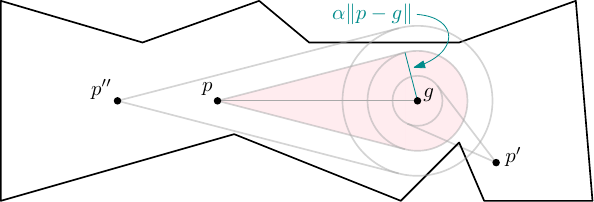}
\end{figure}

Note that as $\alpha$ approaches $0$, the degree of robustness decreases, and at the limit we get standard guarding, where $g$ guards $p$ if and only if $\overline{gp}\subset P$.
\matya{slightly elaborated}
We characterize the $\alpha$-robust visibility region, $\VP_\alpha(g)$, of all points $\alpha$-robustly visible from $g$ (\Cref{sec:robust-visibility-region}), as well as the region $\VP^{\text{inv}}_\alpha(p)$ of all points $g$ from which $p$ is $\alpha$-robustly visible (\Cref{sec:inv_vis}). In particular, we prove that $\VP_\alpha(g)$, which in general is not a polygon, is star-shaped and $O(\alpha)$-fat. Moreover, we show that both regions can be computed efficiently.

(2) We show that, as with ordinary guarding, the problem of computing a minimum cardinality set of guards in $P$ that $\alpha$-robustly see all of $P$ is APX-hard, making it unlikely that there exists a PTAS or an exact polynomial-time algorithm for the problem.

(3) We present an $O(1)$-approximation algorithm for robustly guarding a general polygonal domain $P$. (The approximation factor depends on the robustness parameter $\alpha$, and the algorithm is a bicriteria, allowing a slight relaxation of $\alpha$.) This is to be contrasted with the situation for ordinary guarding, for which even finding a logarithmic-factor approximation algorithm for placing guards at points within a simple polygon requires some additional (weak) assumptions.

Specifically, \Cref{thm:robust-greedy-approx} states that, given a polygon $P$ with $n$ vertices, one can compute in $\poly(n)$ time the cardinality of, and an implicit representation of, a set of $O(\alpha^{-6})|\OPT_\alpha|$ points that $\alpha/8$-robustly guard $P$, where $\OPT_\alpha$ is a minimum-cardinality set of guards that $\alpha$-robustly guard $P$. In additional time $O(|\OPT_\alpha|)$ one can output an explicit set of such points.
We present this result by first presenting a result of a similar flavor for robustly guarding a discrete set $S$ of points within $P$ (\Cref{thm:robust-greedy-approx-discrete}).

Critical to our main result (\Cref{thm:robust-greedy-approx}) is \Cref{thm:robust-vision-candidates}, which shows that one can compute a discrete set $Q$ of points (candidate guards) that is guaranteed to contain a subset of $O(\alpha^{-4})|OPT_\alpha|$ points that $\alpha/4$-robustly guard $P$.
This result is quite delicate and requires some technical geometric analysis, utilizing a medial axis decomposition and carefully placed grid points in portions of $P$.  This is in contrast with the classic guarding problem in which the existence of a polynomial-size discrete candidate set that suffices for good approximation has been elusive.

(4) In \Cref{sec:smaller-solution} we extend/generalize our definition of robust vision to include the possibility that $p$ need not see \emph{all} of the neighborhood of $g$, but only a \emph{fraction} of that neighborhood, and we require that $g$ sees a fraction of a neighborhood of $p$ as well. Within this model we are able to obtain improved factors, with some tradeoffs; see \Cref{thm:refined-robust}.

We defer many proofs and technical arguments to the appendix, while attempting to convey intuition, and detailed figures, in the body of the paper.

\oldparagraph{Related Work.}
Eidenbenz, Stamm, and Widmayer~\cite{eidenbenz2001inapproximability} have shown that optimally guarding a simple polygon $P$ is not only NP-hard (a classical result~\cite{lee1986computational,o2017visibility,shermer1992recent}) but is APX-hard (there is no PTAS unless P=NP); if $P$ has holes, they show that there is no $o(\log n)$ approximation algorithm unless $P=NP$. 
A recent breakthrough of Abrahamsen, Adamaszek and Miltzow~\cite{abrahamsen2021art} has shown that point guarding in general polygons is $\exists \mathbb{R}$-complete, making it unlikely the problem is in NP. Further, \cite{abrahamsen2017irrational,meijer2022sometimes} have shown that optimal solutions to even very small problems requiring 2-3 guards in simple polygons may require precise placement of guards at irrational points.\footnote{It is an interesting open problem to determine whether or not an optimal set of robust guards might require irrational coordinates for guards, for input polygons $P$ with integer coordinates.}
These results imply the necessity of algebraic methods to compute exact solutions.
Efrat and Har-Peled~\cite{EH06} present a randomized $O(\log \OPT_{\text{grid}})$-approximation algorithm where the placement of guards is restricted to a fine grid. However, they do not prove that their approximation ratio holds when compared to general point guard placement of optimal solution.
Building on \cite{EH06} and on Deshpande et al.~\cite{DKDS07}, Bonnet and Miltzow~\cite{bonnet2017approximation} gave a randomized $O(\log \OPT)$-approximation algorithm for point guards within a simple polygon $P$ under mild assumptions: vertices have integer coordinates, no three vertices are collinear, and no three extensions meet in a point within $P$
that is not a vertex, where an extension is a line passing through two vertices of $P$. 
The problem has also been examined from the perspective of smoothed analysis~\cite{dobbins2018smoothed,erickson2022smoothing} and parameterized complexity~\cite{ashok2019efficient,agrawal2020parameterized,bonnet2020parameterized,agrawal2020parameterized-socg}.
For guards that must be placed at discrete locations on the boundary of a simple polygon $P$, King and Kirkpatrick~\cite{KK11,kirkpatrick2015lg}
obtained an $O(\log\log OPT)$-approximation, by building $\eps$-nets
of size $O((1/\eps)\log\log (1/\eps))$ for the associated hitting set
instances, and applying~\cite{BronnimannG95}. If the disks bounding the visibility polygons at these discrete locations are shallow (i.e., every point in the domain is covered by $O(1)$ disks), then a local search based PTAS exists~\cite{AschnerKMY13}.

For simple polygons with special structures, such as monotone polygons, terrains, and weakly-visible polygons~\cite{krohn2013approximate,Ben-MosheKM07,krohn2014guarding,FHKS16,AshurFKS22,BGR17,AshurFK21}, there are improved approximation ratios (constant, or even PTAS), but these algorithms utilize the very special structures of these classes of polygons.
Aloupis et al~\cite{AloupisBDGLS14} considered guarding in ``$(\alpha,\beta)$-covered'' 
polygons (which are intuitively ``fat'' polygons, see~\cite{Efrat05}), and showed that the boundary of such polygons can be guarded by $O(1)$ guards.
Polygons in which no point sees a particularly small area also have special analysis and algorithms~\cite{kalai1997guarding,kirkpatrick2000guarding,valtr1998guarding,valtr1999galleries}.

Some prior work has addressed guarding from a robustness perspective, though the perspectives are significantly different from ours. Efrat, Har-Peled, and Mitchell~\cite{EfratHM05} consider a definition of robust guarding in which a point is robustly guarded if it is seen by at least 2 guards from significantly different angles.
Hengeveld and Miltzow~\cite{hengeveld2021practical} consider a notion of ``robust'' vision by examining the impact that certain changes to ``visibility'' has on the optimal number of guards needed to cover a domain $P$. They introduce the notion of \emph{vision-stability}: A polygon $P$ has vision-stability $\delta$ if the optimal number of ``enhanced guards'' (who can see an additional angle $\delta$ around corners) is equal to the optimal number of ``diminished guards'' (whose visibility region is decreased by an angle $\delta$ at each shadow-casting corner).  

Practical methods employing heuristics, algorithm engineering and combinatorial optimization have been successful for computing exact or approximately optimal solutions in many instances~\cite{amit2010locating, borrmann2013point,hengeveld2021practical,tozoni2016algorithm}, though some of the exact methods can potentially fail or run forever on contrived instances, e.g., those requiring irrational guards. (\cite{hengeveld2021practical} is able to detect such contrived instances and to solve exactly instances that are vision-stable.)

\section{Robust guarding}\label{sec:robust-guarding}
In this section we introduce and discuss a new and natural notion of vision called \emph{robust vision}. Let $P$ be a polygonal domain (a polygon, possibly with holes) in the plane having in total $n$ vertices. Under the standard notion of vision, a point $p\in P$ sees another point $q\in P$ if the segment $\overline{pq}$ is contained in $P$. In the context of guarding, where the guards are not necessarily stationary or their locations are imprecise,
one would like to ensure that a guard does not lose eye contact with any of the points it is responsible to see, when it moves in some vicinity of its specified location. 
The following definition attempts to capture these common situations. Here, we denote by $D(p,r)$ the disk of radius $r$ centered at a point $p$.

\begin{definition}[Robust Guarding]\label{def:robust_visibility}
Given a polygon $P$ and parameter $0<\alpha\le 1$, we say that a point $g\in P$ \dfn{$\alpha$-robustly guards} a point $p\in P$ if $p$ sees $D(g,\alpha\cdot\|p-g\|)$. 
\end{definition}

\begin{figure}[h]
	\centering
	\includegraphics{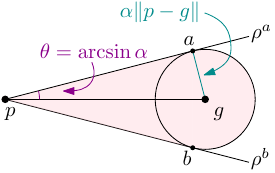}
	\caption{\label{fig:visibility_cone} A point $g$ that $\alpha$-robustly guards another point $p$. The pink ``ice cream cone'' is contained in the polygon $P$.}
\end{figure}

Let $g,p$ be two points in a polygon $P$, such that $g$ $\alpha$-robustly guards $p$. By definition, $p$ sees the disk $D(g,\alpha\|p-g\|)$. Consider the two rays $\rho^a,\rho^b$ from $p$ tangent to $D(g,\alpha\|p-g\|)$. Let $a$ (resp. $b$) be the tangency point of $\rho^a$ (resp. $\rho^b$) and $D(g,\alpha\|p-g\|)$. Notice that the segments $\overline{ap},\overline{bp}$ with the disk $D(g,\alpha\|p-g\|)$ create an ``ice cream cone''. 
Formally, we define the \dfn{ice cream cone} from $p$ to $g$ as the union of the triangle $\triangle apb$ and the disk $D(g,\alpha\|p-g\|)$. Since $p$ must see the entire ice cream cone, we have the following observation.

\begin{observation}\label{obs:ice-cream-cone-in-P}
     A point $g\in P$ $\alpha$-robustly guards a point $p\in P$ if and only if the ice cream cone from $p$ to $g$ is contained in $P$.
\end{observation}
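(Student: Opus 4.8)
The plan is to reduce \Cref{obs:ice-cream-cone-in-P} to a purely geometric identity, namely that the ice cream cone from $p$ to $g$ coincides with $\bigcup_{q\in D}\overline{pq}$, where $D:=D(g,\alpha\|p-g\|)$. Granting this identity, the equivalence follows by simply unwinding \Cref{def:robust_visibility}: ``$g$ $\alpha$-robustly guards $p$'' means precisely that $p$ sees every point of $D$, i.e.\ $\overline{pq}\subseteq P$ for every $q\in D$, which is the same as $\bigcup_{q\in D}\overline{pq}\subseteq P$, i.e.\ the ice cream cone being contained in $P$. Note that this reasoning treats the ice cream cone as a planar set and is insensitive to whether $P$ has holes.

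To prove the identity, I would first show that $C:=\bigcup_{q\in D}\overline{pq}$ is convex, and hence equals $\mathrm{conv}(\{p\}\cup D)$. Convexity follows from the convexity of $D$: if $x\in\overline{pq_1}$ and $y\in\overline{pq_2}$ with $q_1,q_2\in D$, then any point on $\overline{xy}$ is a convex combination of $p$, $q_1$, and $q_2$, and regrouping the three coefficients lets us write it as $(1-\mu)p+\mu q$ with $q$ a convex combination of $q_1$ and $q_2$; since $D$ is convex, $q\in D$, so the point lies on $\overline{pq}\subseteq C$. It then remains to identify $\mathrm{conv}(\{p\}\cup D)$ with the ice cream cone $\triangle apb\cup D$: the inclusion $\mathrm{conv}(\{p\}\cup D)\subseteq\triangle apb\cup D$ holds because $\triangle apb\cup D$ is a convex set containing $\{p\}\cup D$ (the tangency of $\overline{pa}$ and $\overline{pb}$ to $\partial D$ at $a$ and $b$ is exactly what makes this union convex), while the reverse inclusion holds because every point of $\triangle apb$ lies on the segment from $p$ to the point where the ray from $p$ through it meets the chord $\overline{ab}\subseteq D$, and every point of $D$ trivially lies in $C$.

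Putting the pieces together yields $g$ $\alpha$-robustly guards $p$ $\Longleftrightarrow$ $\bigcup_{q\in D}\overline{pq}\subseteq P$ $\Longleftrightarrow$ the ice cream cone from $p$ to $g$ is contained in $P$. The one genuinely geometric step, and the only place where one has to be a little careful, is the identification of $\mathrm{conv}(\{p\}\cup D)$ with the ice cream cone; I would also dispatch the degenerate case $p\in\partial D$ separately at the start (this forces $\alpha=1$, collapsing both tangency points to $p$ and making both sides of the identity equal to $D$). Everything else is a direct translation of the definitions.
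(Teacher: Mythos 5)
Your proof is correct and formalizes exactly the reasoning the paper leaves implicit: the observation is stated without proof there, on the grounds that $p$ seeing $D(g,\alpha\|p-g\|)$ means precisely that the union of segments from $p$ to that disk --- which is the ice cream cone --- lies in $P$. Your careful verification that this union equals $\mathrm{conv}(\{p\}\cup D)=\triangle apb\cup D$, including the degenerate case $\alpha=1$, is a faithful and complete elaboration of the same argument.
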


Consider the triangle $\triangle agp$. We have $\|a-g\|=\alpha\|p-g\|$ and thus $\angle apg=\arcsin\alpha=\theta$. Symmetrically, $\angle bpg=\theta$, and we get $\angle apb=2\theta$. We obtain the following observation.
\begin{observation}\label{obs:ice-cream-cone-angle}
    The angle $\angle apb$ of the ice cream cone from $p$ to $g$ is $2\theta=2\arcsin\alpha$.
\end{observation}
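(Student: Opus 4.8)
The plan is to reduce the claim to a single right-triangle computation, exactly along the lines already sketched in the paragraph preceding the statement. The one geometric input needed is the elementary fact that a tangent line to a circle is perpendicular to the radius at the point of tangency: since $\rho^a$ is tangent to $D(g,\alpha\|p-g\|)$ at $a$, the segment $\overline{ga}$ is orthogonal to $\rho^a$, and in particular orthogonal to $\overline{pa}$. Hence the triangle $\triangle agp$ has a right angle at $a$, with hypotenuse $\overline{pg}$.

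First I would read off the relevant lengths in $\triangle agp$: the leg $\overline{ga}$ has length $\alpha\|p-g\|$ (it is a radius of the disk), and the hypotenuse $\overline{pg}$ has length $\|p-g\|$. Therefore $\sin(\angle apg)=\|g-a\|/\|p-g\|=\alpha$. Since $0<\alpha\le 1$, the point $p$ lies outside the open disk $D(g,\alpha\|p-g\|)$, so the two tangent rays from $p$ are well defined and $\angle apg\in(0,\pi/2]$; consequently $\angle apg=\arcsin\alpha=\theta$.

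Next, the configuration is symmetric under reflection across the line through $p$ and $g$: this reflection fixes $p$, $g$, and the disk, and interchanges $\rho^a$ with $\rho^b$ and $a$ with $b$. Applying the previous step to $\triangle bgp$ (or simply invoking the symmetry) gives $\angle bpg=\theta$ as well. The two tangent rays from the external point $p$ lie on opposite sides of the line $\overline{pg}$, so $\angle apg$ and $\angle bpg$ are adjacent angles whose union is the angle $\angle apb$; hence $\angle apb=\angle apg+\angle bpg=2\theta=2\arcsin\alpha$, as claimed.

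There is essentially no substantive obstacle here. The only points deserving a word of care are that $p$ does not lie in the interior of the disk — which is precisely where the hypothesis $\alpha\le 1$ enters and guarantees that the tangent rays, and thus the ice cream cone, are genuinely defined — and that $a$ and $b$ fall on opposite sides of the line $\overline{pg}$, so that the two half-angles add rather than cancel.
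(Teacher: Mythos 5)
Your proposal is correct and follows essentially the same route as the paper: identify the right angle at the tangency point $a$, deduce $\sin(\angle apg)=\|g-a\|/\|p-g\|=\alpha$, and double by symmetry. The extra care you take about $p$ lying outside the disk and about $a,b$ being on opposite sides of $\overline{pg}$ is sound but not a departure from the paper's argument.
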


Note that as $\alpha$ approaches $0$, the degree of robustness decreases, and at the limit we get standard guarding, where $g$ guards $p$ if and only if $\overline{gp}\subset P$.
On the other hand, if $P$ has a vertex $v$ with internal angle smaller than $2\theta$, then $v$ cannot be $\alpha$-robustly guarded by any other point $g\ne v$ in $P$ ($v$ can guard itself). Nevertheless, if all internal angles in $P$ are at least $2\theta$, then there is always a finite set of guards $G$ that robustly guard $P$ (this will become clear in \Cref{sec:robust-guarding-algorithm}). We therefore assume that $\alpha\le \sin \frac{\phi}{2}$, where $\phi$ is the smallest internal angle of $P$, and thus $2\theta=2\arcsin\alpha\le\phi$, so $P$ can be guarded by a finite number of $\alpha$-robust guards. Of course, if $\phi$ is very `small', then, as mentioned above, the degree of robustness decreases, in the sense that the vicinity in which a guard may move while guarding a point $p$ becomes more limited.  

Also note that $P$ does not have to be fat in order to be guarded robustly; however, the number of robust guards required may depend on geometric features of $P$ (see \Cref{fig:thin_rectangle}).

\begin{figure}[h]
	\centering
	\includegraphics[scale=1.2]{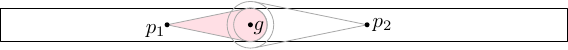}
	\caption{\label{fig:thin_rectangle} In a thin rectangle $P$, the point $g$ $\alpha$-robustly guards $p_1$, but not $p_2$. Here, the number of robust guards required depends on the aspect ratio of the rectangle $P$.}
\end{figure}

The geometric observation below will be very useful in the following sections. A proof is given in \Cref{sec:proofs-robust-guarding}, see \Cref{fig:disk_in_cone} for an illustration.
\begin{restatable}{observation}{diskInCone}\label{obs:disk_in_cone}
    Consider a (convex) cone $K$ defined by two rays $\rho_0,\rho_1$ from a point $p$, such that the small angle between them is $\theta=\arcsin\alpha$. Let $q$ be a point in $K$, such that the smaller angle between $\overline{pq}$ and one of the rays $\rho_0,\rho_1$ is $c\cdot\theta$ (for some $c<1$). Then the disk $D(q,c\alpha\|p-q\|)$ is contained in $K$.
\end{restatable}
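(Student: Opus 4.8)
The plan is to reduce the containment $D(q,c\alpha\|p-q\|)\subseteq K$ to two one-dimensional trigonometric inequalities, one per bounding ray of $K$, and then verify these via concavity/monotonicity of $\sin$ together with the hypothesis that $c\theta$ is the \emph{smaller} of the two angles from $\overline{pq}$ to the rays.

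First I would observe that $K$, being a convex wedge of angle $\theta=\arcsin\alpha\le\pi/2<\pi$ with apex $p$, is exactly the intersection of the two closed half-planes $H_0,H_1$ bounded by the lines $\ell_0\supseteq\rho_0$ and $\ell_1\supseteq\rho_1$ (taken on the sides containing $K$). Consequently, for $q\in K$ and $r\ge 0$, we have $D(q,r)\subseteq K$ if and only if $r\le\dist(q,\ell_0)$ and $r\le\dist(q,\ell_1)$. Writing $\beta_i=\angle(\overline{pq},\rho_i)$ and dropping a perpendicular from $q$ to $\ell_i$, elementary right-triangle trigonometry (the relevant angle at $p$ is $\beta_i\le\theta/2\le\pi/4$, so the foot is on the correct side) gives $\dist(q,\ell_i)=\|p-q\|\sin\beta_i$. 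Since $q$ lies in $K$, the ray $\overline{pq}$ lies between $\rho_0$ and $\rho_1$, so $\beta_0+\beta_1=\theta$; assuming WLOG $\beta_0\le\beta_1$, the smaller angle is $\beta_0=c\theta$ with $c\le\frac12$, and $\beta_1=(1-c)\theta$.

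It then remains to check the two inequalities $c\alpha\le\sin(c\theta)$ and $c\alpha\le\sin\bigl((1-c)\theta\bigr)$, recalling $\alpha=\sin\theta$. For the first, since $\sin$ is concave on $[0,\pi]\supseteq[0,\theta]$, Jensen's inequality gives $\sin(c\theta)=\sin\bigl(c\cdot\theta+(1-c)\cdot 0\bigr)\ge c\sin\theta+(1-c)\sin 0=c\sin\theta=c\alpha$. For the second, because $c\le 1-c$ and $\sin$ is nondecreasing on $[0,\pi/2]$, which contains both $c\theta$ and $(1-c)\theta$, we get $\sin\bigl((1-c)\theta\bigr)\ge\sin(c\theta)\ge c\alpha$. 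Together with the reduction above, this proves the containment.

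The only subtlety worth flagging is the reduction to the two ray-inequalities: it uses $\theta<\pi$ so that $K$ really is a convex intersection of two half-planes, and it crucially uses that "smaller angle" forces $c\le\frac12$ — without this, the second inequality would fail (as $c\to 1$, $\dist(q,\ell_1)\to 0$ while the required radius stays of order $c\theta\|p-q\|$). Degenerate cases ($q=p$, or $q$ on a bounding ray) make the radius $0$ and are trivial.
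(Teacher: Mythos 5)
Your proof is correct and takes essentially the same route as the paper's: drop a perpendicular from $q$ to the bounding ray, write the distance as $\sin(\beta_i)\,\|p-q\|$, and invoke $\sin(c\theta)\ge c\sin\theta$. You are somewhat more careful than the paper — you verify the half-plane constraint for \emph{both} rays and note that ``smaller angle'' forces $c\le\tfrac12$ (the paper checks only the nearer ray and leaves the other implicit), and you prove the key inequality by concavity of $\sin$ rather than by the Maclaurin series — but these are cosmetic differences within the same argument.
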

\begin{figure}[h]
	\centering
	\includegraphics[scale=1.3]{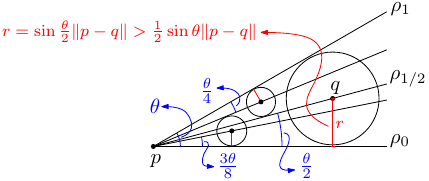}
	\caption{\label{fig:disk_in_cone} The cone $K$ with angle $\theta$. For $q$ on $\rho_{1/2}$, the disk $D(q,\frac12\alpha\|p-q\|)$ is contained in $K$.}
\end{figure}
\subsection{The robust visibility region}\label{sec:robust-visibility-region}
Let $P$ be a polygon in the plane, possibly with holes, having in total $n$ vertices. Denote by $\VP(p)$ the standard visibility polygon of $p$, i.e., $\VP(p)$ is the set of points $q \in P$ such that $q$ is visible from $p$. Let $\VP_\alpha(p)$ denote the \dfn{$\alpha$-robust visibility region} of $p$, i.e. the set of points in $P$ that are $\alpha$-robustly guarded by $p$. Interestingly, unlike $\VP(p)$, the robust visibility region is rarely a polygon. Nonetheless, in this subsection we present an efficient algorithm to compute it. First, we characterize $\VP_\alpha(g)$ and reveal some of its interesting geometric properties. We begin by showing that $\VP_\alpha(g)$ is fat, under the following standard definition of fatness.

\oldparagraph{Fatness.}
For a disk $D(p,r)$ of radius $r$ centered at point $p\in P$, let $C(p,r)$ denote the connected component of $D(p,r)\cap P$ in which $p$ lies. We say that a polygon $P$ is \dfn{$\gamma$-fat} if for any point $p\in P$ and radius $r$ such that $D(p,r)$ does not fully contain $P$, the area of $C(p,r)$ is at least $\gamma\cdot \pi r^2$, i.e., at least $\gamma$ times the area of $D(p,r)$.

\oldparagraph{}
We use \Cref{clm:union_kites_fat} below to show that $\VP_\alpha(g)$ is star-shaped and $O(\alpha)$-fat. In order not to interrupt the flow of reading with rather technical details, we defer its proof to \Cref{sec:proofs_robust_vis}.

\begin{restatable}{claim}{unionKitesFat}\label{clm:union_kites_fat}
    Let $\K$ be a set of (convex) $\gamma$-fat kites (quadrilaterals with reflection symmetry across a diagonal), all having a common point $p$. Then the union $U=\bigcup_{K\in\K}K$ is $\gamma/4$-fat.
\end{restatable}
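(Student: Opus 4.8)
The plan is to verify the fatness condition directly. Fix a point $x\in U$ and a radius $r>0$ with $D(x,r)\not\supseteq U$, and aim to show that the connected component $C(x,r)$ of $D(x,r)\cap U$ containing $x$ has area at least $\tfrac{\gamma}{4}\pi r^2$. Since $x\in U$, it lies in some kite $K_0\in\K$; since $D(x,r)$ does not contain all of $U$, there is a witness $z\in U\setminus D(x,r)$, lying in some kite $K_1\in\K$. Because $p$ is common to all kites, $p\in K_0\cap K_1$, and by convexity the whole segment $\overline{xp}$ lies in $K_0\subseteq U$. Throughout I would use two trivial facts: (a) for a convex set $K$ and a disk $D$, $D\cap K$ is connected, so if $D\cap K\subseteq D(x,r)\cap U$ and $D\cap K$ meets $C(x,r)$ then $D\cap K\subseteq C(x,r)$; and (b) $\gamma$-fatness of a kite $K$ applied at center $c\in K$ and radius $\rho$ with some point of $K$ outside $D(c,\rho)$ gives $\mathrm{area}(D(c,\rho)\cap K)\ge\gamma\pi\rho^2$.

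Set $d=\|x-p\|$ and split on how $d$ compares to $r/2$. If $d\ge r/2$, I extract the area from $K_0$ alone: either $D(x,r)\not\supseteq K_0$, so (b) at center $x$, radius $r$ gives $\mathrm{area}(D(x,r)\cap K_0)\ge\gamma\pi r^2$ and (a) places this inside $C(x,r)$; or $D(x,r)\supseteq K_0$, whence $d\le r$, and for every $\rho<d$ the point $p\in K_0$ lies outside $D(x,\rho)$, so (b) gives $\mathrm{area}(D(x,\rho)\cap K_0)\ge\gamma\pi\rho^2$, again inside $C(x,r)$ by (a); letting $\rho\uparrow d$ yields $\mathrm{area}(C(x,r))\ge\gamma\pi d^2\ge\tfrac{\gamma}{4}\pi r^2$. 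If instead $d<r/2$, then $D(x,r)$ overshoots $p$, and I switch to $K_1$: the segment $\overline{xp}\subseteq D(x,r)\cap U$ shows $p\in C(x,r)$, and $D(x,r)\cap K_1$ is connected and contains $p$, so by (a) it lies in $C(x,r)$. With $r'=r-d>r/2$ we have $D(p,r')\subseteq D(x,r)$, and $\|z-p\|\ge\|z-x\|-\|x-p\|>r-d=r'$, so $z\in K_1$ lies outside $D(p,r')$; hence (b) at center $p$, radius $r'$ gives $\mathrm{area}(D(p,r')\cap K_1)\ge\gamma\pi r'^2>\gamma\pi(r/2)^2$, a region sitting inside $C(x,r)$. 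In every case $\mathrm{area}(C(x,r))\ge\tfrac{\gamma}{4}\pi r^2$. (The case $x\in K_1$, including $x=p$, is absorbed into the first branch with $K_1$ playing the role of $K_0$, giving the stronger bound $\gamma\pi r^2$.)

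The one genuinely delicate point is the case where the kite $K_0$ containing $x$ is entirely swallowed by $D(x,r)$: there the fatness of $K_0$ at radius $r$ is vacuous, and one must instead either exploit that such a $K_0$ is already ``large'' — it contains both $x$ and $p$, hence has diameter at least $d$, which is exactly what the $\rho\uparrow d$ limit extracts — or pass to a different kite on the far side of $p$. Arranging the bookkeeping so that precisely one of these two sources always suffices is what forces the threshold $d=r/2$ and the loss of the factor $4$. Note that the reflection symmetry of the kites is not used anywhere in this argument; only convexity, $\gamma$-fatness, and the common point $p$ enter. The kite structure (and taking $p=g$ as a common vertex) is what gives star-shapedness of $\VP_\alpha(g)$ elsewhere, but is irrelevant to this claim.
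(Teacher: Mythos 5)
Your proof is correct and follows essentially the same route as the paper's: a case split on whether $\|x-p\|\ge r/2$, extracting area from the kite containing $x$ in the first case and, in the second, passing through the common point $p$ to a kite containing a point near $\partial D(x,r)$ and applying fatness centered at $p$ with radius about $r/2$. The only cosmetic difference is that in the first case the paper simply applies fatness to $K_0$ at radius $r/2$ (which never swallows $K_0$ since $p\in K_0$ lies at distance $\ge r/2$ from $x$), avoiding your sub-case split and limiting argument.
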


\begin{figure}[h]
	\centering
	\includegraphics[scale=1]{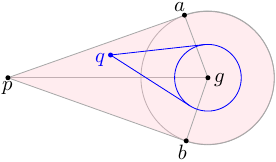}
	\caption{\label{fig:VP_1_0_fat} The ice cream cone $C$ from $p$ to $g$ is shaded in pink. For any point $q \in C$, the ice cream cone from $q$ to $g$ is contained in $C$. Therefore, $\VP_\alpha(g)$ contains $C$.}
\end{figure}
\begin{lemma}\label{clm:VP_1_0_fat}
    For any $0<\alpha < 1$ and $g\in P$, $\VP_\alpha(g)$ is star-shaped and $O(\alpha)$-fat.
\end{lemma}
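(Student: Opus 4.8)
The plan is to realize $\VP_\alpha(g)$ as a union of $\gamma$-fat kites sharing the common point $g$, and then invoke \Cref{clm:union_kites_fat}. First I would establish star-shapedness: by \Cref{obs:ice-cream-cone-in-P}, a point $p\in\VP_\alpha(g)$ iff the ice cream cone from $p$ to $g$ lies in $P$; and the key observation (illustrated in \Cref{fig:VP_1_0_fat}) is that for any point $q$ on the segment $\overline{pg}$, the ice cream cone from $q$ to $g$ is contained in the ice cream cone from $p$ to $g$. Indeed, the cone from $q$ has the same apex-to-$g$ direction, a smaller half-angle is not the issue—rather $\|q-g\|\le\|p-g\|$ so the disk $D(g,\alpha\|q-g\|)$ shrinks, and the tangent triangle $\triangle a'qb'$ is swallowed by $\triangle apb\cup D(g,\alpha\|p-g\|)$ since $q$ lies on the axis of symmetry between $p$ and $g$. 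Hence $\overline{pg}\subset P$ and in fact $\overline{pg}\subset\VP_\alpha(g)$, so $\VP_\alpha(g)$ is star-shaped with respect to $g$, and moreover each ice cream cone $C_p$ from $p$ to $g$ is entirely contained in $\VP_\alpha(g)$.

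For fatness, I would write $\VP_\alpha(g)=\bigcup_{p\in\VP_\alpha(g)} C_p$, a union of ice cream cones all containing $g$. An ice cream cone is not a kite, but it is within a constant factor of one: the ice cream cone $C_p$ contains and is contained-up-to-constant-factor in the kite $\widehat K_p$ obtained as the union of the triangle $\triangle apb$ with its mirror image across the line $ab$ (equivalently, the ``diamond'' with diagonals along $\overline{pg}$-direction and along $\overline{ab}$). Concretely I would argue that $C_p$ contains a $c_1$-fat kite with apex $g$ (for a constant $c_1$ depending only on $\alpha$, namely $c_1=\Theta(\alpha)$, coming from the half-angle $\theta=\arcsin\alpha$) and is contained in the scaled-up kite $2\widehat K_p$; alternatively, one replaces each $C_p$ by the inscribed kite $K_p$ with apex $g$ whose area is at least $\Omega(\alpha)$ times $\operatorname{area}(D(g,\alpha\|p-g\|))$, and whose union still covers a $\Omega(1)$-fraction of $\VP_\alpha(g)$'s ``visible mass.'' Applying \Cref{clm:union_kites_fat} to the family $\{K_p\}$ (all $\gamma$-fat with $\gamma=\Theta(\alpha)$, all containing $g$) yields that $\bigcup_p K_p$ is $\Theta(\alpha)$-fat; since this union differs from $\VP_\alpha(g)$ by at most a constant factor in every local disk, $\VP_\alpha(g)$ is $\Omega(\alpha)=O(\alpha)$-fat, as claimed. (The $O(\alpha)$ here is used loosely in the paper to mean ``fatness parameter $\Theta(\alpha)$.'')

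The main obstacle is the geometric step of showing each ice cream cone contains a suitably fat kite with apex exactly at $g$ and of converting the union-of-cones into a union-of-kites without losing more than a constant factor in the fatness bound when the radius $r$ in the fatness definition is small (so that the disk $D(p,r)$ sees only a thin sliver of a single cone, or the tip region near a tangent line). I would handle this by a case analysis on where the center $p$ of the test disk $D(p,r)$ lies: (i) if $D(p,r)$ is disjoint from all the tangent triangles and meets only disk-parts $D(g,\alpha\|p'-g\|)$, fatness follows from fatness of disks; (ii) if $D(p,r)$ straddles a tangent segment $\overline{a'g}$ of some cone, I use \Cref{obs:disk_in_cone} (with an appropriate constant $c<1$) to produce a definite-area sub-disk of $C_{p'}\subset\VP_\alpha(g)$ inside $D(p,r)$; (iii) the interior/apex cases reduce to the kite estimate. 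The bookkeeping of constants is routine once the covering by fat kites is set up, so the deferred proof of \Cref{clm:union_kites_fat} carries the real weight, exactly as the paper signals.
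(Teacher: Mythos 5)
Your skeleton matches the paper's: star-shapedness via containment of ice cream cones, fatness via \Cref{clm:union_kites_fat}. But the fatness step as you describe it has a genuine gap. The lemma on unions of kites requires (a) that the kites all share a common point and (b) that their union \emph{is} the region whose fatness you want to certify --- fatness of a subset $U\subseteq V$ says nothing about fatness of $V$, since a test disk centered at a point of $V\setminus U$ is not controlled at all. Your proposal only produces kites whose union covers $\VP_\alpha(g)$ ``up to a constant factor in every local disk,'' and that is not enough; worse, none of your concrete candidates works: the mirror image of $\triangle apb$ across the chord $\overline{ab}$ is \emph{not} contained in the ice cream cone (the reflection of $p$ lands far outside the disk $D(g,\alpha\|p-g\|)$), and an ``inscribed kite with apex $g$'' that does not contain $p$ as a vertex leaves $p$ uncovered by the union. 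The paper's fix is a single clean choice you missed: take the kite with vertex sequence $p,a,g,b$. It is contained in the ice cream cone (it is $\triangle pab$ together with $\triangle agb\subset D(g,\alpha\|p-g\|)$), its angles are $\tfrac{\pi}{2},\,2\theta,\,\tfrac{\pi}{2},\,\pi-2\theta$ so it is $O(\alpha)$-fat, it has $p$ as a vertex so the union over all $p\in\VP_\alpha(g)$ is \emph{exactly} $\VP_\alpha(g)$, and all these kites share the vertex $g$, so \Cref{clm:union_kites_fat} applies verbatim with no constant-factor bookkeeping and no case analysis on test disks.

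A second, smaller point: your star-shapedness argument only verifies that the cone from $q$ is contained in the cone from $p$ for $q$ on the segment $\overline{pg}$, yet you later use the stronger statement that the entire ice cream cone $C_p$ lies in $\VP_\alpha(g)$. The stronger statement is what you actually need (it is what makes the union of the kites above equal to $\VP_\alpha(g)$), and it follows by the same one-line argument applied to an arbitrary $q\in C_p$: one has $\|q-g\|\le\|p-g\|$ and the ice cream cone from $q$ to $g$ is contained in $C_p\subseteq P$. State and use that version.
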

\begin{proof}
Let $p$ be a point in $\VP_\alpha(g)$, and denote by $C$ the ice cream cone from $p$ to $g$. Notice that for any point $q\in C$ we have $\|q-g\|\le \|p-g\|$, and thus the ice cream cone from $q$ to $g$ is contained in $C$ (see \Cref{fig:VP_1_0_fat}). 
 Since $C$ is contained in $P$, $q$ sees a disk of radius $\alpha\|q-g\|$ centered at $g$, and thus $q\in \VP_\alpha(g)$. We conclude that $C\subseteq \VP_\alpha(g)$, and that $\VP_\alpha(g)$ is star-shaped.

Let $a,b$ be the tangency points defining the cone $C$. We have $\angle pag=\angle pbg=\frac{\pi}{2}$, $\angle apb=2\theta$, and $\angle agb=\pi-2\theta$, and thus the kite $pagb$ is $O(\alpha)$-fat. Since $C\subseteq \VP_\alpha(g)$, the union of kites over all $p\in \VP_\alpha(g)$ is exactly $\VP_\alpha(g)$. By applying \Cref{clm:union_kites_fat} on the set of kites (all have a common vertex $g$), we conclude that $\VP_\alpha(g)$ is $O(\alpha)$-fat.  
\end{proof}

Given a point $g\in P$, denote by $D_g$ the disk with maximum radius centered at $g$ and contained in $P$. Denote by $R_g$ the radius of $D_g$. If $g$ $\alpha$-robustly guards $p$, then $D(g,\alpha\|g-p\|)$ is contained in $P$, and thus $\alpha\|g-p\|\le R_g$. We obtain the following observation.
\begin{observation}\label{obs:p_is_close}
    If $g$ $\alpha$-robustly guards $p$, then $\|g-p\|\le R_g/\alpha$.
\end{observation}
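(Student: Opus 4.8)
The plan is to simply unwind the definition of $\alpha$-robust guarding and combine it with the maximality of $R_g$; there is really nothing deeper going on here. First, if $g$ $\alpha$-robustly guards $p$, then by \Cref{def:robust_visibility} the point $p$ sees every point of $D(g,\alpha\|p-g\|)$ under ordinary visibility within $P$. In particular, every point of $D(g,\alpha\|p-g\|)$ must lie in $P$ (visibility within $P$ presupposes membership in $P$), so $D(g,\alpha\|p-g\|)\subseteq P$. Equivalently, one can invoke \Cref{obs:ice-cream-cone-in-P}: the ice cream cone from $p$ to $g$ lies in $P$, and this cone contains $D(g,\alpha\|p-g\|)$.

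Next, recall that $D_g=D(g,R_g)$ is the disk of maximum radius centered at $g$ that is contained in $P$. Since $D(g,\alpha\|p-g\|)$ is a disk centered at $g$ contained in $P$, its radius cannot exceed $R_g$; that is, $\alpha\|p-g\|\le R_g$. Dividing by $\alpha>0$ yields $\|g-p\|\le R_g/\alpha$, which is the claimed inequality.

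The only point requiring any care is a definitional one: the phrase ``$p$ sees $D(g,\alpha\|p-g\|)$'' already encodes that this disk is a subset of $P$, so the containment $D(g,\alpha\|p-g\|)\subseteq P$ is immediate and need not be argued separately. Consequently I do not expect any genuine obstacle in the proof.
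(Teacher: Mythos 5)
Your proof is correct and matches the paper's own argument exactly: the paper likewise observes that robust guarding forces $D(g,\alpha\|g-p\|)\subseteq P$, hence $\alpha\|g-p\|\le R_g$ by maximality of $R_g$. Nothing further is needed.
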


\begin{figure}[htb]
	\centering
	\includegraphics[scale=0.55]{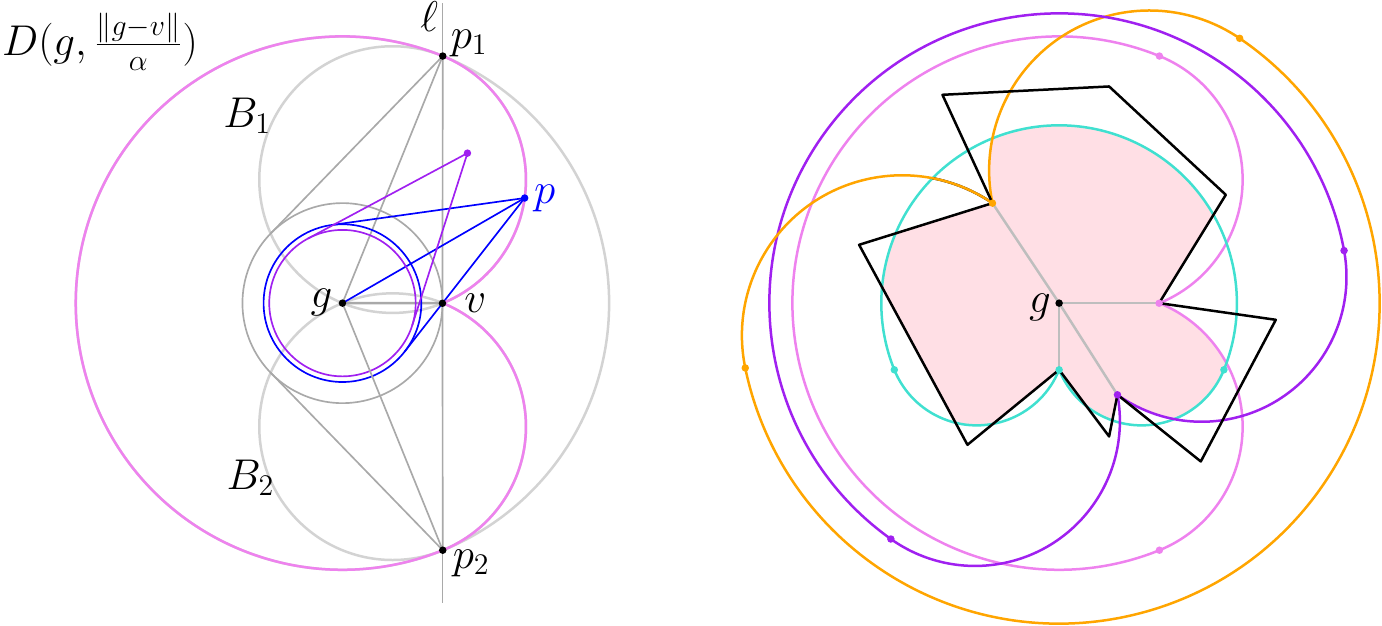}
 \vspace{-0.2cm}
	\caption{\label{fig:computing_vis_1_0} Computing $\VP_\alpha(g)$ as the intersection of heart shapes for every reflex vertex of $\VP(g)$. Left: the construction of a single heart shape (in violet). Right: $\VP_\alpha(g)$ is the area shaded in pink.}
\end{figure}

By \Cref{obs:p_is_close}, we know that $\VP_\alpha(g)$ is contained in $D(g,R_g/\alpha)$. Moreover, by definition, $\VP_\alpha(g)\subseteq \VP(g)$. A point $p\in \VP(g)$ belongs to $\VP_\alpha(g)$ if and only if the ice cream cone from $p$ to $g$ is contained in $\VP_\alpha(g)$. Hence, to compute $\VP_\alpha(g)$ exactly, we need to take into account each of the reflex vertices that may ``block'' a potential ice cream cone. We thus compute for each reflex vertex $v$ the locus of all points $p$ such that the ice cream cone from $p$ to $g$ is touching $v$. The ice cream cone may touch $v$ either at a point on its circular arc or on one of its edges, so we get three relevant circles for each vertex $v$ (see \Cref{fig:computing_vis_1_0}, left). We then compute $\VP_\alpha(g)$ as the intersection of those heart-shaped regions (one for each reflex vertex), $D(g,R_g/\alpha)$, and the boundary of $P$ (see \Cref{fig:computing_vis_1_0}, right). A full proof of \Cref{clm:computing_vis_1_0} is given in \Cref{sec:proofs_robust_vis}.

\begin{restatable}{lemma}{computingVis}\label{clm:computing_vis_1_0}
    Computing $\VP_\alpha(g)$ can be done in polynomial time.     
\end{restatable}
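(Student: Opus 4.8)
The plan is to make the informal construction described in the text fully rigorous and to bound its running time. Recall we want $\VP_\alpha(g)$, the set of points $p$ such that the ice cream cone from $p$ to $g$ lies in $P$ (\Cref{obs:ice-cream-cone-in-P}). First I would observe that $\VP_\alpha(g)\subseteq \VP(g)\cap D(g,R_g/\alpha)$ by \Cref{obs:p_is_close}, so the standard visibility polygon $\VP(g)$ can be computed in $O(n\log n)$ time, and it has $O(n)$ vertices, $O(n)$ of which are reflex vertices of the ``window'' type (points where an edge of $P$ or a reflex vertex of $P$ blocks the view). The key structural fact is that the ice cream cone from $p$ to $g$ — the union of $D(g,\alpha\|p-g\|)$ with the triangle $\triangle apb$ spanned by the tangent rays from $p$ — fails to be contained in $P$ iff either (i) it exits through $\partial \VP(g)$, which (since $\VP(g)$ is star-shaped about $g$) reduces to constraints coming from the reflex vertices $v$ of $\VP(g)$, or (ii) $\alpha\|p-g\| > R_g$, i.e., $p\notin D(g,R_g/\alpha)$.

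Next I would analyze, for a single reflex vertex $v$ of $\VP(g)$, the locus of points $p$ for which the ice cream cone from $p$ to $g$ just touches $v$. There are three sub-cases: $v$ lies on the circular arc $\partial D(g,\alpha\|p-g\|)$, $v$ lies on the segment $\overline{ap}$, or $v$ lies on $\overline{bp}$. In the first case, $\|v-g\| = \alpha\|p-g\|$, so $p$ lies on the circle centered at $g$ of radius $\|v-g\|/\alpha$. In the latter two cases, one uses \Cref{obs:ice-cream-cone-angle}: the tangent segment makes angle $\theta = \arcsin\alpha$ with $\overline{pg}$, so the condition that $\overline{pv}$ is tangent to $D(g,\alpha\|p-g\|)$ with $v$ on the correct side translates, via elementary trigonometry (the inscribed-angle / fixed-angle locus), into $p$ lying on a fixed circle through $v$ (and through the two points where the tangent lines from $g$ in the appropriate direction meet). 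Taking the side of each such circle that corresponds to ``the cone does not hit $v$,'' we obtain a region bounded by circular arcs — the ``heart shape'' of \Cref{fig:computing_vis_1_0}, left — of constant complexity, one per reflex vertex.

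Then I would argue that $\VP_\alpha(g)$ is exactly the intersection of: the interior side of the heart-shaped region for each of the $O(n)$ reflex vertices of $\VP(g)$, the disk $D(g,R_g/\alpha)$, and $\VP(g)$ itself. Soundness of this characterization is the content we must check carefully: if $p$ is in all these regions, then the ice cream cone from $p$ to $g$ touches no reflex vertex of $\VP(g)$ and has radius at most $R_g$, hence (using star-shapedness of $\VP(g)$ about $g$ and the fact that a cone touching no reflex vertex and not exceeding the inscribed-disk radius cannot cross $\partial P$) it is contained in $P$; conversely each region is necessary. Finally, the intersection of $O(n)$ constant-complexity regions bounded by circular arcs can be computed in $O(n^2\alpha(n))$ time, or $O(n\log n)$ with a divide-and-conquer / randomized-incremental approach for such pseudo-disk-like families; in any case polynomial, which suffices for the claim.

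The main obstacle I anticipate is item (ii) of the structural step — proving that the three-circle ``heart'' per reflex vertex, together with $D(g,R_g/\alpha)$ and $\VP(g)$, is genuinely both necessary and sufficient. The subtlety is that an ice cream cone could in principle be blocked not at a reflex vertex of $\VP(g)$ but along an edge of $\VP(g)$ "between" vertices; one must verify, using star-shapedness and the convexity of the ice cream cone, that any such blockage is already "witnessed" by one of the reflex vertices (equivalently, that the triangle part of the cone, being convex and apexed at $p\in\VP(g)$, can only leave $\VP(g)$ by sweeping past a reflex vertex). Handling the boundary interaction of the circular arc of the cone (as opposed to its straight edges) with reflex vertices cleanly is where the case analysis is most delicate, but it is routine plane geometry once set up, so I expect no conceptual difficulty beyond careful bookkeeping.
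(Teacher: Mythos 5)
Your proposal is correct and follows essentially the same route as the paper: restrict to $\VP(g)\cap D(g,R_g/\alpha)$, then for each reflex vertex build a constant-complexity ``heart'' from the circle $\|p-g\|=\|v-g\|/\alpha$ (arc-touching case) and two fixed-angle circles through $g$ and $v$ obtained from the inscribed-angle theorem (tangent-segment cases), and finally intersect the $O(n)$ resulting regions with $P$ in $O(n^2)$ time. The edge-blockage subtlety you flag is handled in the paper exactly as you suggest, by keeping $P$ (i.e., $\VP(g)$) and the disk $D(g,R_g/\alpha)$ as explicit constraints in the intersection.
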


\subsection{The robust inverse visibility region}\label{sec:inv_vis}
The definition of $\alpha$-robust guarding is not bidirectional; it is possible that $g$ $\alpha$-robustly guards $p$, but $g$ is not $\alpha$-robustly guarded by $p$. (Later, we discuss a more general notion of visibility (\Cref{def:robust_visibility_general}) that is bidirectional.) We therefore define the \dfn{robust inverse visibility region} of a point as follows. For a point $p\in P$, denote by $\VP^{\text{inv}}_\alpha(p)$ the set of points $g\in P$ such that $p$ is $\alpha$-robustly guarded by $g$. Although $\VP_\alpha(g)$ is fat (as shown in \Cref{clm:VP_1_0_fat}), $\VP^{\text{inv}}_\alpha(p)$ is not necessarily fat; in fact, the robust inverse visibility region may be a single line segment (see \Cref{fig:line_visibility}). 

\begin{figure}[h]
	\centering
	\includegraphics[scale=1]{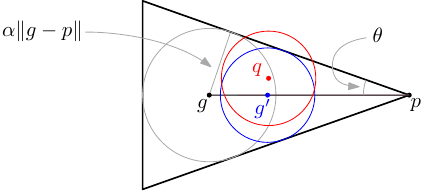}
	\caption{\label{fig:line_visibility} An isosceles triangle polygon with angle exactly $2\theta$ at the point $p$. For any point $g'$ on the segment $\overline{gp}$, the (blue) disk $D(g',\alpha\|p-g'\|)$ is tangent to the legs of the triangle, and thus $g'$ $\alpha$-robustly guards $p$. However, any point not on the segment does not $\alpha$-robustly guard $p$. Thus $\VP^{\text{inv}}_\alpha(p)=\overline{gp}$.}
\end{figure}

Nevertheless, we will show how to construct a star-shaped $O(\alpha)$-fat polygon $F_p$ that contains $\VP^{\text{inv}}_\alpha(p)$ with the property that any $g\in F_p$ $\alpha/2$-robustly guards $p$. First we need the following claim, which is illustrated in \Cref{fig:inv_fat_kite}. The proof is deferred to \Cref{sec:proofs_inv_vis}.

\begin{figure}[h]
	\centering
	\includegraphics[scale=1]{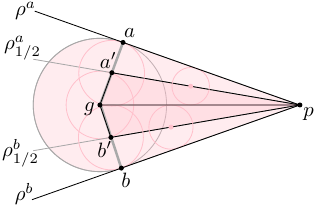}
	\caption{\label{fig:inv_fat_kite} The kite $K=ga'pb'$ is $O(\alpha)$-fat, and any $q\in K$ $\alpha/2$-robustly guards $p$.}
\end{figure}

\begin{restatable}{claim}{VPFatKite}\label{clm:VP_0_1_fat_kite}
Let $g$ and $p$ be points in $P$, such that $g$ $\alpha$-robustly guards $p$. Then there exists an $O(\alpha)$-fat kite $K$ containing $g$ and $p$, such that every point $q\in K$ $\alpha/2$-robustly guards $p$.
\end{restatable}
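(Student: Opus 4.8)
Proof plan. Put coordinates so that $p$ is the origin and $g$ lies on the positive $x$-axis; write $d=\|p-g\|$ and $\theta=\arcsin\alpha$, and let $C$ be the ice cream cone from $p$ to $g$ (parameter $\alpha$), with tangency points $a,b$ on $\partial D(g,\alpha d)$. By \Cref{obs:ice-cream-cone-in-P}, $C\subseteq P$, and observe that $C=\mathrm{conv}\big(\{p\}\cup D(g,\alpha d)\big)$ is convex. The key reduction I would use is this: to certify that a point $q$ $\alpha/2$-robustly guards $p$, it suffices by \Cref{obs:ice-cream-cone-in-P} that the ice cream cone from $p$ to $q$ with parameter $\alpha/2$ lies in $P$; since that cone equals $\mathrm{conv}\big(\{p\}\cup D(q,\tfrac{\alpha}{2}\|p-q\|)\big)$ and $C$ is convex with $p\in C$, it is in turn enough to show $D\big(q,\tfrac{\alpha}{2}\|p-q\|\big)\subseteq C$. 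So the whole statement reduces to exhibiting an $O(\alpha)$-fat kite $K$ with $p,g\in K$ such that $D\big(q,\tfrac{\alpha}{2}\|p-q\|\big)\subseteq C$ for every $q\in K$.

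For $K$ I would take the kite symmetric about the line $pg$ whose two ``tip'' vertices are $p$ and $g$ and whose other two vertices $a',b'$ are placed on the tangency chord line $\{x=d\cos^2\theta\}$ (the line through $a$ and $b$), at the largest height for which $a',b'\in D\big(g,\tfrac{\alpha}{2}d\big)$; since $\alpha$ is small enough that $d(1-\tfrac{\alpha}{2})\le d\cos^2\theta=d(1-\alpha^2)$, this disk does reach the chord line and such $a',b'$ exist. Three things then have to be checked. (i) $K\subseteq C$: each vertex of $K$ lies in $C$ (the tips trivially; $a',b'$ since $D(g,\tfrac{\alpha}{2}d)\subseteq D(g,\alpha d)\subseteq C$), so the convex hull $K$ lies in the convex set $C$; in particular $p,g\in K\subseteq P$. (ii) $K$ is $O(\alpha)$-fat: by \Cref{obs:ice-cream-cone-angle} and $\theta=\arcsin\alpha=\Theta(\alpha)$, the two smallest interior angles of the kite $K$ — at $p$ and at $g$ — are $\Theta(\alpha)$, and for a convex region this forces the fatness parameter to be $\Theta(\alpha)$ (the worst configuration is a small disk centered at $p$ or $g$, meeting $K$ in a sector of angle $\Theta(\alpha)$), analogous to the kite estimate used in \Cref{clm:VP_1_0_fat}. (iii) the covering property $D\big(q,\tfrac{\alpha}{2}\|p-q\|\big)\subseteq C$ for all $q\in K$, treated next.

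For (iii), first note that the farthest vertex of $K$ from $p$ is $g$, so $\|p-q\|\le d$ and hence $r_q:=\tfrac{\alpha}{2}\|p-q\|\le\tfrac{\alpha d}{2}$ for every $q\in K$. Decompose $C$ along the chord line as $C=\triangle apb\cup D(g,\alpha d)$ with $\triangle apb=\widehat C\cap\{0\le x\le d\cos^2\theta\}$, where $\widehat C$ is the infinite cone at $p$ of half-angle $\theta$ about the $x$-axis; in particular $C\cap\{x>d\cos^2\theta\}=D(g,\alpha d)\cap\{x>d\cos^2\theta\}$. The kite $K$ splits along the chord $\overline{a'b'}$ into a $p$-side triangle $\triangle a'pb'\subseteq\{x\le d\cos^2\theta\}$ and a $g$-side triangle $\triangle a'gb'\subseteq D(g,\tfrac{\alpha}{2}d)$ (convex hull of points of that disk). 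If $q\in\triangle a'gb'$ then $\|q-g\|\le\tfrac{\alpha d}{2}$, so $D(q,r_q)\subseteq D\big(g,\|q-g\|+r_q\big)\subseteq D\big(g,\tfrac{\alpha d}{2}+\tfrac{\alpha d}{2}\big)=D(g,\alpha d)\subseteq C$. If $q\in\triangle a'pb'$, then $q$ makes an angle with the $x$-axis at most the half-angle $\beta$ of $K$ at $p$; by the choice of $a',b'$ one has $\beta\le\theta-\arcsin(\alpha/2)$, hence the distance from $q$ to either bounding ray of $\widehat C$ is at least $\|p-q\|\sin(\theta-\beta)\ge\|p-q\|\cdot\tfrac{\alpha}{2}=r_q$, and since $p\notin D(q,r_q)$ this gives $D(q,r_q)\subseteq\widehat C$, so $D(q,r_q)\cap\{x\le d\cos^2\theta\}\subseteq\triangle apb\subseteq C$. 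It remains to control the part of $D(q,r_q)$ with $x>d\cos^2\theta$ (a thin cap against the chord line): there one bounds $|x_z-d|\le\max(\alpha^2 d,\tfrac{\alpha d}{2})$ and bounds the height $|y_z|$ by where $\partial D(q,r_q)$ meets the chord line, and the resulting elementary but careful estimate yields $\|z-g\|\le\alpha d$, i.e. $z\in D(g,\alpha d)\subseteq C$. Combining the cases, $D(q,r_q)\subseteq C$ for every $q\in K$, which by the reduction above means every $q\in K$ $\alpha/2$-robustly guards $p$.

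The step I expect to be the real obstacle is exactly this last cap estimate in (iii). Any kite containing both $p$ and $g$ necessarily contains points $q$ whose $\tfrac{\alpha}{2}\|p-q\|$-disk spills past the tangency chord of $C$, and one must show such spillage stays inside $D(g,\alpha d)$ rather than leaking out of $C$; this is precisely where the chosen height of $a',b'$ on the chord line is used and where the quantitative smallness of $\alpha$ enters the argument. All the remaining ingredients — convexity of $C$, the reduction to $D(q,\tfrac{\alpha}{2}\|p-q\|)\subseteq C$, the decomposition of $C$ along the chord, and the fatness of a thin kite — are routine.
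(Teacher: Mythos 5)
Your overall strategy is essentially the paper's: reduce to showing $D(q,\tfrac{\alpha}{2}\|p-q\|)\subseteq C$ for the (convex) ice cream cone $C$ from $p$ to $g$, take a kite with tips $p$ and $g$ that is narrowed just enough to create \Cref{obs:disk_in_cone}-type slack, and split the analysis into a $g$-side piece handled inside $D(g,\alpha d)$ and a $p$-side piece handled inside the cone. (The paper places $a',b'$ on the segments $\overline{ga},\overline{gb}$ at the angle bisectors from $p$ rather than on the tangency chord, but the structure is the same.) Your reduction, the $g$-side case, and the cone containment for the $p$-side are all correct. One minor misstatement: the interior angle of your kite at $g$ is $\pi-\Theta(\alpha)$, not $\Theta(\alpha)$; this does not affect the fatness conclusion, since the minimum angle, at $p$, is indeed $\Theta(\alpha)$.

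The genuine gap is exactly where you flag it, and the estimate you sketch there does not close. For a point $z$ of the cap beyond the chord line you propose to combine $|x_z-d|\le\tfrac{\alpha d}{2}$ with the height bound coming from where $\partial D(q,r_q)$ meets the chord line, which (using that those crossing points lie in $\widehat C$) gives $|y_z|\le d\cos^2\theta\tan\theta=\alpha d\cos\theta$; but these two bounds together only yield $\|z-g\|^2\le \alpha^2d^2\bigl(\tfrac14+1-\alpha^2\bigr)>\alpha^2d^2$, so you cannot conclude $z\in D(g,\alpha d)$ this way. The step can be repaired, but it needs a geometric observation rather than coordinatewise bounds: since $q$ lies in the half-plane $x\le d\cos^2\theta$ while $g$ has $x$-coordinate $d$, the point of $D(q,r_q)$ farthest from $g$ has $x$-coordinate strictly less than $x_q$ and hence lies outside the cap; therefore the maximum of the convex function $\|\cdot-g\|$ over the cap is attained at one of the two chord endpoints. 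Those endpoints lie on $\widehat C\cap\{x=d\cos^2\theta\}=\overline{ab}$, which is a chord of $D(g,\alpha d)$ because $a,b\in\partial D(g,\alpha d)$, so they are at distance at most $\alpha d$ from $g$, and the whole cap lies in $D(g,\alpha d)$. With that substitution your argument is complete; for what it is worth, the paper's own proof is similarly terse at precisely this chord-crossing step.
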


The lemma below now follows by taking the union of kites corresponding to every $g\in \VP^{\text{inv}}_\alpha(p)$. The complete proof is deferred to  \Cref{sec:proofs_inv_vis}.

\begin{restatable}{lemma}{approxAlpha}\label{lem:approx-alpha}
    Given a polygon $P$ and a point $p\in P$, there exists a star-shaped $O(\alpha)$-fat polygon $F_p$ that contains $\VP^{\text{inv}}_\alpha(p)$, and such that any $g\in F_p$ $\alpha/2$-robustly guards $p$. The size (radius of the smallest enclosing disk centered at $p$) of $F_p$ is equal to that of $\VP^{\text{inv}}_\alpha(p)$.
\end{restatable}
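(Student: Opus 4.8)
The plan is to build $F_p$ as a union of the fat kites supplied by \Cref{clm:VP_0_1_fat_kite} --- one kite $K_g$ for each $g\in\VP^{\text{inv}}_\alpha(p)$ --- and then to thin this family down to a finite subfamily so that the union is an honest polygon.

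First I would record the elementary facts the construction needs. Let $r$ denote the size of $\VP^{\text{inv}}_\alpha(p)$, i.e.\ $r=\sup_{g\in\VP^{\text{inv}}_\alpha(p)}\|g-p\|$; the region $\VP^{\text{inv}}_\alpha(p)$ is closed and bounded, so this supremum is attained. I would also check that $\VP^{\text{inv}}_\alpha(p)$ is star-shaped with respect to $p$: if $g$ $\alpha$-robustly guards $p$, then for every $g'$ on $\overline{pg}$ the ice cream cone from $p$ to $g'$ is contained in the ice cream cone from $p$ to $g$ --- its triangular part shrinks along the same two rays from $p$, and its disk $D(g',\alpha\|g'-p\|)$ is swallowed by the triangle-plus-disk from $p$ to $g$ (a short computation, in the spirit of \Cref{obs:disk_in_cone}) --- so $g'$ $\alpha$-robustly guards $p$ by \Cref{obs:ice-cream-cone-in-P}.

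Next, for each $g\in\VP^{\text{inv}}_\alpha(p)$ apply \Cref{clm:VP_0_1_fat_kite} to obtain an $O(\alpha)$-fat kite $K_g$ containing $g$ and $p$ (with $p$ a common vertex), every point of which $\alpha/2$-robustly guards $p$; from the construction of that kite one also has $K_g\subseteq D(p,\|g-p\|)\subseteq D(p,r)$, since $g$ is the vertex of $K_g$ farthest from $p$. Provisionally set $F_p^{0}=\bigcup_{g\in\VP^{\text{inv}}_\alpha(p)}K_g$. Then $F_p^{0}\supseteq\VP^{\text{inv}}_\alpha(p)$ (as $g\in K_g$), every point of $F_p^{0}$ $\alpha/2$-robustly guards $p$, $F_p^{0}\subseteq D(p,r)$, and $F_p^{0}$ is star-shaped with respect to $p$ as a union of convex sets all containing $p$. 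Moreover all the $K_g$ share the vertex $p$ and are $c\alpha$-fat for a universal constant $c$, so \Cref{clm:union_kites_fat} gives that $F_p^{0}$ is $(c\alpha/4)$-fat, i.e.\ $O(\alpha)$-fat.

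The one remaining point --- and the one I expect to be the main obstacle --- is that $F_p^{0}$ is a union of (in general) infinitely many kites and so need not be a polygon. To repair this I would use that $\VP^{\text{inv}}_\alpha(p)$ has complexity polynomial in $n$: its boundary is a union of $O(n)$ algebraic arcs, each the locus of guards at which one of the three parts of an ice cream cone first meets a fixed reflex vertex or edge of $P$. Combining this bounded complexity with star-shapedness of $\VP^{\text{inv}}_\alpha(p)$ and a compactness argument, one extracts a finite (indeed polynomial-size) subfamily $K_{g_1},\dots,K_{g_m}$ whose union still contains $\VP^{\text{inv}}_\alpha(p)$ --- roughly, it suffices to place a kite at the two endpoints and at the farthest-from-$p$ point of each boundary arc, plus enough extra kites so that in every narrow cone of directions out of $p$ the radial extent of $\VP^{\text{inv}}_\alpha(p)$ drops by at most the constant factor that is lost when a kite covers a neighboring direction rather than its own axis. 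Setting $F_p=\bigcup_{i=1}^{m}K_{g_i}$ then yields a finite union of quadrilaterals, hence a star-shaped polygon, which inherits from $F_p^{0}$ all the stated properties; and since $\VP^{\text{inv}}_\alpha(p)\subseteq F_p\subseteq D(p,r)$ while $\VP^{\text{inv}}_\alpha(p)$ already has size $r$, the size of $F_p$ equals that of $\VP^{\text{inv}}_\alpha(p)$.
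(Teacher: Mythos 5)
Your proposal is essentially the paper's own proof: take the kite $K(g)$ from \Cref{clm:VP_0_1_fat_kite} for every $g\in\VP^{\text{inv}}_\alpha(p)$, let $F_p$ be their union, invoke \Cref{clm:union_kites_fat} for fatness, and note that $g$ is the farthest vertex of $K(g)$ from $p$ to get the size claim. The only difference is your extra step extracting a finite subfamily so that $F_p$ is literally a polygon; the paper skips this entirely (it takes the infinite union and calls it a polygon), and since the lemma is only ever used to certify that a fat star-shaped region containing $\VP^{\text{inv}}_\alpha(p)$ exists --- $F_p$ is never computed --- that refinement, while a reasonable instinct, is not needed and its sketch need not be tightened.
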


By definition, $\VP^{\text{inv}}_\alpha(p)\subseteq \VP(p)$. Computing $\VP^{\text{inv}}_\alpha(p)$ can also be done in polynomial time, by computing a constant number of ``constraints'' per edge of $\VP(p)$ (see \Cref{fig:computing_vis_0_1}). A point $g\in P$ belongs to $\VP^{\text{inv}}_\alpha(p)$ if and only if the disk $D(g,\alpha\|p-g\|)$ is contained in $\VP(p)$, or in other words, does not intersect any edge of $\VP(p)$. For each edge $e=\{u,v\}\in P$, the locus of all points $g$ such that $D(g,\alpha\|p-g\|)$ touches $e$, can be described by two disks (one per vertex) and a hyperbola (for the interior of the edge). A full proof of \Cref{clm:computing_vis_0_1} is given in \Cref{sec:proofs_inv_vis}.

\begin{restatable}{lemma}{computingVisInv}\label{clm:computing_vis_0_1}
    The region $\VP^{\text{inv}}_\alpha(p)$ can be computed in polynomial time.
\end{restatable}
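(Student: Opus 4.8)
The statement to prove is \Cref{clm:computing_vis_0_1}: the region $\VP^{\text{inv}}_\alpha(p)$ can be computed in polynomial time.

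\textbf{Proof plan.} The plan is to reduce the computation of $\VP^{\text{inv}}_\alpha(p)$ to the intersection of a polynomial number of elementary algebraic regions, each bounded by circular arcs and hyperbolic arcs, and then invoke standard machinery for arrangements of low-degree curves. First I would recall that $g\in\VP^{\text{inv}}_\alpha(p)$ iff $D(g,\alpha\|p-g\|)$ is contained in $P$ and is seen entirely by $p$; by a now-standard argument this is equivalent to requiring that this disk lie inside the ordinary visibility polygon $\VP(p)$, since any point of $P$ within line-of-sight of $p$ and within the disk is automatically in $\VP(p)$, and containment of a convex disk in the simply-connected (star-shaped) region $\VP(p)$ is equivalent to the disk not crossing any edge of $\partial\VP(p)$. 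Computing $\VP(p)$ takes $O(n\log n)$ (or $O(n)$ in a simple polygon) time and yields $O(n)$ edges.

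Next, for each edge $e=\overline{uv}$ of $\VP(p)$ I would describe the ``forbidden'' locus $B_e = \{\, g \in P : D(g,\alpha\|p-g\|)\cap e \neq \emptyset \,\}$, i.e.\ the set of centers $g$ for which the radius-$\alpha\|p-g\|$ disk touches or crosses $e$. As noted in the paper, the distance from $g$ to the segment $e$ is realized either at an endpoint ($u$ or $v$) or at an interior point of the line through $e$. Setting $\dist(g,e)\le \alpha\|p-g\|$ and splitting by which feature realizes the distance gives: (i) two regions of the form $\|g-u\|\le \alpha\|g-p\|$ and $\|g-v\|\le\alpha\|g-p\|$, each an Apollonius disk (a genuine disk, since $\alpha<1$), intersected with the appropriate angular wedge where the nearest point is the endpoint; and (ii) one region where the foot of the perpendicular from $g$ lands inside $e$, governed by $\dist(g,\text{line}(e))\le\alpha\|g-p\|$, which upon squaring becomes a quadratic inequality in the coordinates of $g$ whose zero set is a conic — in fact a hyperbola branch — again clipped to the relevant slab. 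Thus $B_e$ is a region bounded by at most three arcs of circles/hyperbolas, describable in $O(1)$ algebraic complexity. Then
\[
\VP^{\text{inv}}_\alpha(p) \;=\; \VP(p)\;\setminus\;\bigcup_{e\in \partial\VP(p)} B_e,
\]
and by \Cref{obs:p_is_close} this region is anyway contained in $D(p,R_p/\alpha)$, so it is bounded.

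Finally, to compute this set explicitly in polynomial time I would build the arrangement of the $O(n)$ bounding curves (the edges of $\VP(p)$, the $O(n)$ circular arcs, and the $O(n)$ hyperbolic arcs), which are all algebraic curves of degree at most $2$; such an arrangement has $O(n^2)$ complexity and can be computed in $O(n^2\log n)$ time by standard sweep or incremental-construction algorithms for arrangements of bounded-degree curves. Within each face of the arrangement, membership in $\VP^{\text{inv}}_\alpha(p)$ is determined by testing a single representative point against each of the $O(1)$-complexity constraints $B_e$ and against $\VP(p)$; marking the faces that survive yields an explicit description of $\VP^{\text{inv}}_\alpha(p)$ as a union of faces, all within polynomial time. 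The main obstacle — and the only place where care is genuinely needed — is the case analysis for $B_e$: verifying that the perpendicular-foot locus is exactly a hyperbola (not, say, a full ellipse), that its correct branch is selected, and that the three pieces of $B_e$ fit together into a single connected region whose boundary is traced consistently; this requires the kind of elementary but fiddly coordinate computation that we carry out in \Cref{sec:proofs_inv_vis}.
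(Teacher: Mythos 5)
Your proposal is correct and follows essentially the same route as the paper: reduce to containment of $D(g,\alpha\|p-g\|)$ in $\VP(p)$, then for each edge of $\VP(p)$ describe the forbidden locus by two Apollonius disks (for the endpoints) and a hyperbola (for the edge's interior), and compute the resulting arrangement of $O(n)$ degree-two curves in $O(n^2)$ (poly) time. The only cosmetic difference is that you phrase the final step as face-marking in the full arrangement with an explicit $O(n^2\log n)$ bound, whereas the paper simply intersects the $O(n)$ constant-complexity constraint regions; both are polynomial and equivalent in substance.
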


\begin{figure}[htb]
	\centering
	\includegraphics[scale=0.8]{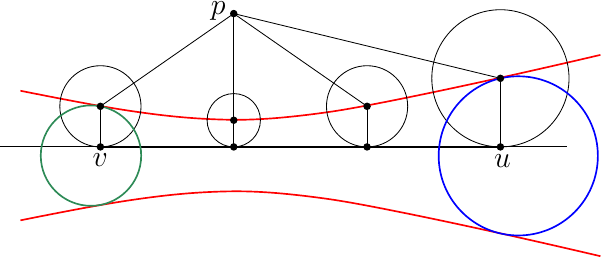}
	\caption{\label{fig:computing_vis_0_1} Three constraints defining the points $g$ with $D(g,\alpha\|p-g\|)$ intersecting the edge $\{u,v\}$: the green disk containing $v$, the blue disk containing $u$, and the red hyperbola.}
\end{figure}

\subsection{Hardness}
The classic Art Gallery Problem is APX-hard by a reduction from the {\sc Hitting Lines} problem~\cite{broden2001guarding}: Given a set ${\cal L}$ of lines, one is to find a minimum set of points that ``hit'' all the lines. The polygon constructed in the reduction is a ``spike box'' -- a rectangle containing all the intersection points between lines in ${\cal L}$, and having a thin spike going out of it for each line. In order to guard the tip of a spike, one must place a guard in a small neighbourhood of the line segment (corresponding to a line in ${\cal L}$) generating the spike. Thus, hitting all of the lines is equivalent to guarding all spikes. In \Cref{sec:proof_hardness} we show a similar construction for the problem of $\alpha$-robust guarding, and thereby obtain the following theorem.

\begin{restatable}{theorem}{hardness}\label{thm:hardness}
The $\alpha$-robust guarding problem is APX-hard.
\end{restatable}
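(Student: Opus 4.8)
The plan is to adapt the classical ``spike box'' reduction from {\sc Hitting Lines} to the $\alpha$-robust setting, taking care that (i) the spikes are wide enough that all internal angles exceed $2\theta=2\arcsin\alpha$, so that $P$ is robustly guardable at all, and (ii) robustly guarding the tip of a spike still forces a guard into a thin neighbourhood of the generating line. Given an instance $\mathcal{L}=\{\ell_1,\dots,\ell_m\}$ of {\sc Hitting Lines}, first I would, as in~\cite{broden2001guarding}, clip each $\ell_i$ to a long segment $s_i$ inside a large axis-aligned rectangle $B$ that contains all pairwise intersection points of the lines, and then attach to $B$ a thin triangular (or trapezoidal) spike $T_i$ whose long axis lies along $s_i$ and which protrudes far outside $B$. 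The key quantitative choice is the opening half-angle of each spike: I would make it a small constant multiple of $\theta$ (say $\theta/10$), small enough that the spike is ``thin'' relative to its length, but large enough that every vertex of $P$ — including the apex of each spike and the vertices where spikes meet $B$ — has internal angle strictly greater than $2\theta$. By scaling the spike lengths to be long compared with $\mathrm{diam}(B)$, one makes the apex of $T_i$ forced: because the apex $t_i$ subtends an ``ice cream cone'' of half-angle $\theta$ (\Cref{obs:ice-cream-cone-angle}), and $t_i$ is a point with tiny internal angle compared to its neighbourhood, \Cref{obs:ice-cream-cone-in-P} shows that any $g$ that $\alpha$-robustly guards $t_i$ must have its entire ice cream cone inside the narrow spike $T_i$, hence $g$ lies in a narrow sub-strip of $T_i$ close to $s_i$; shrinking the spike width makes that strip's footprint inside $B$ contained in an arbitrarily small neighbourhood of $s_i$.

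Next I would verify the two directions of the reduction. For the easy direction: given a hitting set $X$ of size $k$ for $\mathcal{L}$, place a guard at each $x\in X$, perturbed slightly so that its surrounding disk of the required radius stays inside $P$ (possible since $X$ can be taken in general position inside $B$, away from the boundary, and since $B$ has bounded aspect ratio we may fatten it slightly, absorbing the loss into constants). Each such $x$ lies on some $\ell_i$ for every $i$, hence sits in the ``core'' of spike $T_i$ and $\alpha$-robustly sees its apex $t_i$ and, by the star-shapedness of $\VP_\alpha(\cdot)$ along the spike (\Cref{clm:VP_1_0_fat} and \Cref{fig:VP_1_0_fat}), the whole of $T_i$; the rectangle $B$ itself is covered by a constant number of extra guards that do not affect the asymptotics. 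For the hard direction: from any $\alpha$-robust guard set $G$ for $P$, for each spike $T_i$ pick a guard $g_i\in G$ that robustly guards $t_i$; by the forcing argument above $g_i$ lies within distance $\varepsilon$ of $\ell_i$ inside $B$, so rounding $g_i$ to the nearest point of $\ell_i$ yields a set that hits every line, of size $|G|$ (or $|G|$ minus the $O(1)$ ``box'' guards). Choosing the spike width so that the $\varepsilon$-strips around distinct lines behave like the lines themselves (which is exactly the property exploited in~\cite{broden2001guarding}) makes this an approximation-preserving — indeed essentially cost-preserving — reduction, and since {\sc Hitting Lines} is APX-hard, so is $\alpha$-robust guarding. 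I would also note that the construction uses only a bounded number of distinct internal angles, all $>2\theta$, so the assumption $\alpha\le\sin(\phi/2)$ is met by design.

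The main obstacle is the geometric bookkeeping in the forcing lemma: one must show that the half-angle-$\theta$ ice cream cone from the apex $t_i$, when required to lie inside a spike of half-angle only slightly larger than $\theta$, pins the guard's location into a strip of controlled width, \emph{uniformly} as the spike is made long — and simultaneously that the spike's non-apex vertices still have angle $>2\theta$. These two requirements pull the spike half-angle in opposite directions (smaller strip wants a thinner spike; angle $>2\theta$ at the base vertices wants a wider spike), and reconciling them is where the quantitative care lies; it works because the {\sc Hitting Lines} reduction only needs the strip width to be smaller than the minimum nonzero distance between an intersection point of $\mathcal{L}$ and a line not through it, which is a fixed quantity of the input, whereas the spike half-angle can be any fixed constant fraction of $\theta$ and the strip width then shrinks linearly as we lengthen the spikes. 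A secondary, purely technical point is handling guards placed \emph{on} the polygon boundary or inside a spike rather than in $B$; such a guard sees at most one apex and can be charged to its own line, so it causes no trouble. The remaining arguments are routine and are carried out in \Cref{sec:proof_hardness}.
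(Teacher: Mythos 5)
Your overall plan — an approximation-preserving reduction from {\sc Hitting Lines} via a ``spike box'' — is exactly the paper's approach (\Cref{sec:proof_hardness}), but your key quantitative choice is wrong, and in a way that matters. You propose spike opening half-angle $\theta/10$, i.e.\ apex angle $\theta/5<2\theta$; by \Cref{obs:ice-cream-cone-angle} a vertex with internal angle strictly less than $2\theta$ cannot be $\alpha$-robustly guarded by any point other than itself, so with that choice every spike tip forces a guard at the tip itself and the connection to hitting the lines disappears. Your own text notices the tension (you also demand all angles strictly exceed $2\theta$), but the resolution you sketch is backwards: if the apex angle is $2\theta+\delta$, the locus of guards that robustly see the tip is an angular wedge of half-angle $\delta/2$ about the bisector, whose width at distance $d$ from the apex grows like $d\,\delta$ — so \emph{lengthening} the spikes makes the strip inside the box \emph{wider}, not narrower. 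What actually controls the strip width is $\delta$, the excess over $2\theta$, which must be driven down relative to the arrangement's precision.

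The paper's construction removes the problem entirely by taking the apex angle to be \emph{exactly} $2\theta$ (permitted, since the standing assumption is $\alpha\le\sin(\phi/2)$, not strict): as shown in \Cref{fig:line_visibility}, the robust inverse visibility region of such a tip degenerates to precisely the bisector segment, so a point $\alpha$-robustly guards the tip if and only if it lies on the corresponding line. This makes the reduction exact — no $\varepsilon$-strip, no rounding of near-hits to the lines, and no argument that approximate concurrency of several lines implies actual concurrency — leaving only the (genuine, and addressed in the paper) issue of representing the construction with polynomially many bits. If you insist on a strictly larger apex angle you can still make your version work, but you must choose $\delta$ as an explicit function of the minimum distance between non-incident vertex--line pairs of the arrangement and of the spike lengths, and you must supply the rounding lemma showing that a point simultaneously $\varepsilon$-close to several pairwise non-parallel lines can be replaced by a single true hitting point; as written, that step is asserted rather than proved.
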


\subsection{Robustly guarding a discrete set of points}
In this section we consider a discrete version of the robust guarding problem, where we are given a set $S$ of $m$ points in a polygon $P$, and the goal is to find a minimum set of $\alpha$-robust guards for $S$. Besides being interesting in its own, the solution that we present will be used in the next section, where the goal is to $\alpha$-robustly guard the entire polygon $P$.

Before we can present our algorithm, we need one more ingredient, regarding the fatness of our robust visibility polygons. The lemma below is a version of the well-known ``fat-collection theorem'' (see, e.g.~\cite{StappenO94}). Here, we define the \dfn{size} of a star-shaped object $P$ with respect to a given center point $o$ as the radius of its minimum enclosing ball centered at $o$. For completeness, we provide a proof in \Cref{sec:proof-fat-collection}.

\begin{restatable}{lemma}{fatCollection}\label{lem:fat-collection}
    For any disk $D$ of radius $R$, there exist a set $C$ of $O(\alpha^{-2})$ points such that any $\alpha$-fat star-shaped polygon that intersects $D$ and has size at least $R$ w.r.t. a given center point $o$, contains a point from $C$.    
\end{restatable}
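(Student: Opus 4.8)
\textbf{Proof plan for \Cref{lem:fat-collection}.}

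The plan is to build the point set $C$ as a grid of side length $\Theta(\alpha R)$ restricted to a bounded region around $D$, and to argue that any $\alpha$-fat star-shaped polygon $F$ of size $\geq R$ that meets $D$ must contain a full axis-parallel square of side $\Theta(\alpha R)$, hence a grid point. First I would fix the center $o$ of $F$; since $F$ is star-shaped with respect to $o$ and has size at least $R$, there is a point $x\in F$ with $\|x-o\|\geq R$, and the entire segment $\overline{ox}$ lies in $F$. Because $F$ intersects $D$ (a disk of radius $R$), the center $o$ cannot be more than $2R + R = O(R)$ away from the center of $D$ — more carefully, either $o$ itself is within distance $R$ of $D$'s center, or the intersection point of $F$ with $D$ is, and in either case the relevant portion of $F$ we will use lies within an $O(R)$-radius ball $B^\ast$ around the center of $D$. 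So it suffices to place the grid $C$ inside $B^\ast$, giving $|C| = O((R/(\alpha R))^2) = O(\alpha^{-2})$ points.

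Next I would locate a large disk inside $F$. Consider the disk $D(o, r)$ for $r = \min(R, \text{dist}(o,\partial F))$. If $\text{dist}(o,\partial F)\geq R$, then $D(o,R)\subseteq F$ and we are done trivially since $D(o,R)$ contains a grid square of side $\Theta(\alpha R)$. Otherwise, let $r_0 = \text{dist}(o,\partial F) < R$, so $D(o,r_0)$ is a maximal inscribed disk at $o$ not containing all of $F$ (since $F$ has a point at distance $\geq R > r_0$ from $o$). By the $\alpha$-fatness of $F$ applied to the point $o$ and a radius $r$ slightly larger than $r_0$ — or more cleanly, to the point $x$ on $\overline{ox}$ at distance exactly $R$ from $o$ and radius $R$, for which $D(x,R)$ does not contain $F$ — the connected component of $D(\cdot,R)\cap F$ containing the center has area at least $\alpha\pi R^2$. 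The key geometric step is to convert this area lower bound, together with star-shapedness with respect to $o$, into a genuine inscribed disk of radius $\Omega(\alpha R)$ inside $F$: a star-shaped region of size $\geq R$ whose intersection with an $O(R)$-ball has area $\Omega(\alpha R^2)$ cannot be "thin everywhere," because a star-shaped set that is everywhere within distance $\delta$ of its spine (a union of segments from $o$) and has diameter $O(R)$ has area $O(\delta R)$; setting this equal to $\Omega(\alpha R^2)$ forces $\delta = \Omega(\alpha R)$, i.e., some point of $F$ is at distance $\Omega(\alpha R)$ from $\partial F$, so $F$ contains a disk of radius $\Omega(\alpha R)$.

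Finally, any disk of radius $\rho = \Omega(\alpha R)$ contains an axis-parallel square of side $\rho/\sqrt 2 = \Omega(\alpha R)$, and if the grid $C$ has spacing $c\,\alpha R$ for a sufficiently small absolute constant $c$ (chosen below the side length just derived), every such square contains at least one point of $C$. Since this whole argument takes place inside $B^\ast$, taking $C$ to be the grid points of spacing $c\,\alpha R$ inside $B^\ast$ yields $|C| = O(\alpha^{-2})$ and the claim follows.

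I expect the main obstacle to be the third paragraph: cleanly turning the fatness area bound into an inscribed disk of radius $\Omega(\alpha R)$ while respecting the connectivity requirement in the definition of $\gamma$-fat (the area bound is on the connected component $C(\cdot,\cdot)$ containing the center, not on the whole intersection). One has to be careful that the "spine" argument uses only the component reachable from $o$ along segments inside $F$, and that the bound $\text{area}\leq O(\delta R)$ for star-shaped sets within distance $\delta$ of their spine is applied to exactly that component; bounding the area of a star-shaped set in terms of its inradius and diameter is the crux, and is where one should cite or reprove the standard fact underlying the fat-collection theorem of~\cite{StappenO94}.
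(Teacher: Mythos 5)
Your overall plan---a grid of spacing $\Theta(\alpha R)$ clipped to an $O(R)$-neighborhood of $D$, giving $O(\alpha^{-2})$ candidates---is the same as the paper's, and the first and last paragraphs are essentially fine. The gap is exactly where you suspected, in the third paragraph, and it is not merely a matter of tracking connected components: the inequality you rely on, that a star-shaped set of diameter $O(R)$ lying within distance $\delta$ of its ``spine'' has area $O(\delta R)$, is false. The spine of a star-shaped set is a union of arbitrarily many segments through $o$; a star with $\Theta(R/\delta)$ radial spikes of width $\delta$ and length $R$ lies within distance $\delta$ of its spine yet has area $\Theta(R^2)$. Such a configuration satisfies the one area lower bound $\Omega(\alpha R^2)$ you extract from fatness, so no contradiction arises and you cannot conclude $\delta=\Omega(\alpha R)$. (The many-spiked star is of course not $\alpha$-fat, but ruling it out requires invoking fatness a \emph{second} time, locally at the tip of a putative thin branch, whereas your argument uses fatness only once, globally.) The intermediate statement you are after---that an $\alpha$-fat star-shaped region of size $R$ contains an inscribed disk of radius $\Omega(\alpha R)$---is plausible but strictly stronger than the lemma and would itself need a multi-scale argument; as written, your derivation of it does not go through.

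The paper sidesteps the inscribed-disk question with a direct contradiction that supplies precisely the missing second application of fatness: pick $q\in F$ at distance exactly $R$ from $o$ along a segment toward a farthest point; if $F$ misses every grid point of spacing $\alpha R/c$, then in particular it misses two adjacent grid points $x,y$ that straddle $\overline{oq}$ inside $D(o,R/2)$; star-shapedness with respect to $o$ then makes the rays from $o$ through $x$ and through $y$ (beyond $x$ and $y$) disjoint from $F$, which traps the connected component of $D(q,R/2)\cap F$ containing $q$ in a thin wedge of area $O(\alpha R^2/c)$, smaller than the $\alpha\pi(R/2)^2$ demanded by fatness at $q$ once $c$ is a suitable constant. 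I recommend adopting that route. Separately, your localization step needs tightening: ``$F$ intersects $D$ and has size at least $R$'' does not place $o$ within $O(R)$ of the center of $D$ when $F$ is large, so you must argue that the grid point you actually produce (not just ``the relevant portion of $F$'') lies within $O(R)$ of $D$; this is done by anchoring the wedge argument at a point of $F$ inside or near $D$ rather than at an arbitrary farthest point of $F$.
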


\oldparagraph{The algorithm.} Consider \Cref{alg:robust-greedy-approx}, which gets as input the polygon $P$ and the set $S$. In each iteration, the algorithm finds the point $g\in S$ with smallest $\VP^{\text{inv}}_\alpha(g)$, removes from $S$ all the points $s$ for which $\VP^{\text{inv}}_\alpha(s)$ intersects $\VP^{\text{inv}}_\alpha(g)$, and adds to the solution the corresponding set of hitting points from \Cref{lem:fat-collection}.

\begin{algorithm}[htb]\caption{\textsc{DiscreteRobustGuarding}($P,S$)}\label{alg:robust-greedy-approx}
    \ForEach{$s\in S$}{
        Compute $\VP^{\text{inv}}_\alpha(s)$\\
        Compute $D(s)$, the minimum enclosing disk of $\VP^{\text{inv}}_\alpha(s)$ centered at $s$
    }
    $G\gets\emptyset$\\
    \While{$S\neq\emptyset$}{
	$g \gets \argmin_{s\in S}\{\mathrm{size}(\VP^{\text{inv}}_\alpha(s))\}$  \hfil \CommentSty{$g$ is the point from $S$ with smallest $\VP^{\text{inv}}_\alpha$} \\
        $S(g)\gets \{s\in S\mid\VP^{\text{inv}}_\alpha(s)\cap D(g)\neq\emptyset\}$\\
        $S\gets S\setminus S(g)$\\
	Let $H(g)$ be the set of hitting points from \Cref{lem:fat-collection} that correspond to $D(g)$, with fatness parameter $c\cdot\alpha$ (for a sufficiently small constant $c$).\\
        $G\gets G\cup H(g)$
    }
    \Return{$G$}
\end{algorithm}

\begin{theorem}\label{thm:robust-greedy-approx-discrete}
    Given a polygon $P$ with $n$ vertices, and a set $S$ of $m$ points in $P$, one can compute in $\poly(n,m)$ time a set of $O(\alpha^{-2})|\OPT_\alpha^S|$ points that $\alpha/2$-robustly guard $S$, where $\OPT_\alpha^S$ is a minimum set of guards that $\alpha$-robustly guard $S$.
\end{theorem}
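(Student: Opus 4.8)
The plan is to verify that \Cref{alg:robust-greedy-approx} has all three required properties: it runs in polynomial time, its output $G$ is an $\alpha/2$-robust guard set for $S$, and $|G|=O(\alpha^{-2})|\OPT_\alpha^S|$.

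\textbf{Coverage.} First I would show that every point deleted from $S$ is $\alpha/2$-robustly guarded by some point added in the same iteration. Fix an iteration with chosen point $g$, and take any $s\in S(g)$, so that $\VP^{\text{inv}}_\alpha(s)\cap D(g)\neq\emptyset$. Apply \Cref{lem:approx-alpha} to $s$ to obtain the star-shaped (w.r.t.\ $s$), $c'\alpha$-fat polygon $F_s\supseteq\VP^{\text{inv}}_\alpha(s)$, where $c'$ is the absolute constant hidden in the $O(\alpha)$, with $\mathrm{size}(F_s)=\mathrm{size}(\VP^{\text{inv}}_\alpha(s))$ and the property that every point of $F_s$ $\alpha/2$-robustly guards $s$. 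Since $\VP^{\text{inv}}_\alpha(s)\subseteq F_s$, the polygon $F_s$ intersects $D(g)$; and since $g$ minimizes the size of the robust inverse visibility region over the surviving points, $\mathrm{size}(F_s)=\mathrm{size}(\VP^{\text{inv}}_\alpha(s))\ge\mathrm{size}(\VP^{\text{inv}}_\alpha(g))$, which equals the radius of $D(g)$. Picking the algorithm's constant $c\le c'$ so that $F_s$ is $c\alpha$-fat (fatness is monotone under shrinking the parameter), \Cref{lem:fat-collection} applied to the disk $D(g)$ yields a point of $H(g)$ inside $F_s$; that point $\alpha/2$-robustly guards $s$. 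As $s$ was arbitrary and every point of $S$ is eventually deleted (note $g\in S(g)$ because $g\in\VP^{\text{inv}}_\alpha(g)\cap D(g)$), the set $G$ $\alpha/2$-robustly guards $S$.

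\textbf{Size bound.} Next I would charge the chosen points to distinct optimal guards. Let $g_1,\dots,g_k$ be the points picked by the while loop, in order. For every $i<j$, $g_j$ survives iteration $i$, hence $g_j\notin S(g_i)$, i.e.\ $\VP^{\text{inv}}_\alpha(g_j)\cap D(g_i)=\emptyset$. Each $g_i\in S$ is $\alpha$-robustly guarded by some $\gamma_i\in\OPT_\alpha^S$, that is $\gamma_i\in\VP^{\text{inv}}_\alpha(g_i)$; moreover $\VP^{\text{inv}}_\alpha(g_i)\subseteq D(g_i)$, since $D(g_i)$ is the \emph{minimum enclosing} disk of $\VP^{\text{inv}}_\alpha(g_i)$ centered at $g_i$. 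If $\gamma_i=\gamma_j$ for some $i<j$, then this common point lies in $\VP^{\text{inv}}_\alpha(g_j)\cap D(g_i)$, contradicting the previous sentence. Hence $i\mapsto\gamma_i$ is injective and $k\le|\OPT_\alpha^S|$. Since each iteration adds $|H(g)|=O((c\alpha)^{-2})=O(\alpha^{-2})$ points, we get $|G|\le k\cdot O(\alpha^{-2})=O(\alpha^{-2})|\OPT_\alpha^S|$.

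\textbf{Running time and the main obstacle.} Computing each $\VP^{\text{inv}}_\alpha(s)$ and its minimum enclosing disk centered at $s$ takes polynomial time by \Cref{clm:computing_vis_0_1}; the while loop runs at most $m$ times (each iteration removes at least $g$ itself), and per iteration the work—selecting the size-minimizer, forming $S(g)$ by testing region–disk intersections, and writing down the $O(\alpha^{-2})$ grid points of $H(g)$—is polynomial, giving a $\poly(n,m)$ bound. I do not expect any single step to be hard individually; the delicate point is the interplay of the three fatness ``scales.'' One must feed \Cref{lem:fat-collection} the \emph{enlarged} region $F_s$ rather than $\VP^{\text{inv}}_\alpha(s)$ (which may degenerate to a segment and is not fat), choose the fat-collection parameter $c\alpha$ small enough to be dominated by the intrinsic fatness of $F_s$ yet still contributing only $O(\alpha^{-2})$ points, and keep the size comparison ($\mathrm{size}(F_s)\ge$ radius of $D(g)$) aligned with the choice of $g$ as the minimizer—and it is precisely the passage through $F_s$ that costs the bicriteria loss from $\alpha$ to $\alpha/2$.
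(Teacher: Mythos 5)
Your proposal is correct and follows essentially the same route as the paper's proof: the coverage argument via $F_s$ from \Cref{lem:approx-alpha} and the fat-collection lemma is identical, and your injective charging of each $g_i$ to an optimal guard $\gamma_i\in\VP^{\text{inv}}_\alpha(g_i)\subseteq D(g_i)$ is just a rephrasing of the paper's observation that the regions $\VP^{\text{inv}}_\alpha(g_1),\dots,\VP^{\text{inv}}_\alpha(g_k)$ are pairwise disjoint witnesses. The only stylistic difference is that you make explicit the calibration of the fatness parameter $c\alpha$ against the constant hidden in \Cref{lem:approx-alpha}, which the paper leaves implicit.
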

\begin{proof}
    We show that the set $G$ returned by \Cref{alg:robust-greedy-approx} satisfies the theorem.
    
    Let $g_1,\dots,g_k$ be the points from $S$ that were found in line 6 of the algorithm, and consider the sequence of inverse visibility regions $\VP^{\text{inv}}_\alpha(g_1),\dots,\VP^{\text{inv}}_\alpha(g_k)$. Any two of these regions are disjoint, because in line 8 we remove all points $s\in S$ for which $\VP^{\text{inv}}_\alpha(s) \cap \VP^{\text{inv}}_\alpha(g_i) \ne \emptyset$. Thus, the set $g_1,\dots,g_k$ is a set of witnesses in $P$, i.e., no $\alpha$-robust guard can guard both $g_i$ and $g_j$ for any $1\le i,j\le k$. Therefore, in order to $\alpha$-robustly guard $S$, one needs to put a guard in each $\VP^{\text{inv}}_\alpha(g_i)$, and hence $k\le |\OPT_\alpha^S|$.
	
    The set $H(g)$ is the set of $O(\alpha^{-2})$ hitting points obtained from \Cref{lem:fat-collection}. Since we chose $g_i$ to be the point in $S$ with minimum size (i.e., size of inverse visibility region), any $s\in S(g_i)$ has size larger than the radius of $D(g_i)$. By \Cref{lem:approx-alpha}, for any $s\in S(g_i)$ there exists a star-shaped $O(\alpha)$-fat polygon $F_s$ of size $\mathrm{size}(\VP^{\text{inv}}_\alpha(s))$ that contains $\VP^{\text{inv}}_\alpha(s)$, and any point in $F_s$, $\alpha/2$-robustly guards $s$. Thus by \Cref{lem:fat-collection} each such $F_s$ contains a point $g_s$ from $H(g_i)$, and $g_s$ $\alpha/2$-robustly guards $s$. We get that for any $1\le i\le k$, the set $H(g_i)$ guards $S(g_i)$, and therefore $G$ is a set of $O(\alpha^{-2})|\OPT_\alpha^S|$ points that $\alpha/2$-robustly guards $S$. (Note that we do not compute $F_s$, we use \Cref{lem:approx-alpha} only to show that the set of hitting points is sufficient.)

    By \Cref{clm:computing_vis_0_1}, computing $\VP_\alpha^{-1}(s)$ for every $s\in S$ can be done in $m\cdot\poly(n)$ time. Clearly, the while loop is executed for at most $m$ rounds, each round runs in $\poly(n,m)$ time. Thus the total running time of \Cref{alg:robust-greedy-approx} is in $\poly(n,m)$.
\end{proof}

\section{An $O(1)$-approximation for robustly guarding a polygon}\label{sec:robust-guarding-algorithm}
Let $P$ be a polygon with $n$ vertices, and a parameter $\alpha\le \sin\frac{\phi}{2}$, where $\phi$ is the smallest internal angle of $P$. For technical reasons, we also assume that $\alpha\le 1/2$. 
Let $\OPT_\alpha$ be a minimum set of points that $\alpha$-robustly guard $P$.
Our goal is to find a set of $O(\poly(\alpha^{-1}))\cdot|\OPT_\alpha|$ points that $c\alpha$-robustly guard $P$, for some constant $c\le1$. Note that for a smaller radius we need less guards, i.e., for $\alpha'<\alpha$, $|\OPT_{\alpha'}|\le|\OPT_\alpha|$.

\subsection{A medial axis based decomposition}\label{sec:medial-axis}
Consider the medial axis of $P$ (the set of all points in $P$ having more than one closest point on $\partial P$ -- the boundary of $P$), and let $M$ be the set of its vertices that do not lie on $\partial P$. The medial axis is a planar graph $G$, with some line-segment edges and some curved edges (subcurves of a parabola). Note that we also include in $M$ vertices of degree 2 that represent, e.g., the intersection point of a line segment and a parabola that define the medial axis (see \Cref{fig:medial_axis_cells}).
For each $v\in M$, let $D_v$ be the \emph{medial} disk centered at $v$. The disk $D_v$ touches $\partial P$ in at least two points, and we denote by $\D$ the set of all medial disks, i.e., $\D=\{D_v\}_{v\in M}$.

\begin{figure}[h]
	\centering
	\includegraphics{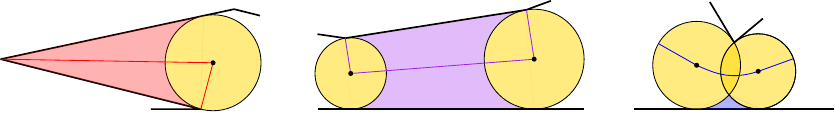}
	\caption{The three types of regions (cells) forming $P\setminus \D$; a red cell (left), a purple cell (middle), and a blue cell (right). The medial disks in $\D$ are shown in yellow.}
	\label{fig:medial_axis_cells}
\end{figure}

Based on the structure of the medial axis, we decompose $P\setminus \D$ into 3 types of regions, Red, Purple, and Blue, as follows (see \Cref{fig:medial_axis_cells}).
\begin{itemize}
\itemsep0em 
	\item A Red region is a maximal connected region of $P\setminus \D$ which is bounded by two edges $e_1,e_2$ of $P$ connected by a vertex $w$, and a single disk $D_v\in\D$, such that both $e_1$ and $e_2$ are tangent to $D_v$.
	\item A Purple region is a maximal connected region of $P\setminus \D$ which is bounded by two disjoint edges $e_1,e_2$ of $P$, and two disks $D_v,D_w\in\D$, such that each of $e_1,e_2$ is tangent to both $D_v,D_w$. Since there is no other feature of the polygon ``between'' $D_v$ and $D_w$, the corresponding vertices $v,w\in M$ are connected by a line segment in the medial axis.
	\item A Blue region is any maximal connected region of $P\setminus \D$ which is neither red nor purple.
\end{itemize}

\oldparagraph{Adding medial disks in purple regions.} In each purple region we add to $M$ a set of vertices as follows.
Let $v,w$ be the two medial vertices that define a purple region $\Pi$, and assume that $R_v\ge R_w$ (see \Cref{fig:purple_regions}). 
Consider the intersection $I=D(v,R_v/\alpha)\cap \Pi$, and notice that any point in $I$ is $\alpha$-robustly visible from $v$.
If $D(v,R_v/\alpha)$ does not contain $\Pi$, then there are two intersection points, $q_1,q_2$, between $D(v,R_v/\alpha)$ and the edges defining $\Pi$. Let $p_1$ be the point on the segment $\overline{vw}$ such that the medial disk centered at $p_1$ touches the edges defining $\Pi$ at the points $q_1,q_2$. We add $p_1$ to $M$, set $v=p_1$ and repeat the process, i.e. while $D(p_i,R_{p_i}/\alpha)$ does not contain the part of $\Pi$ between $D(p_i,R_{p_i})$ and $D_w$, add to $M$ the point $p_{i+1}$ on the segment $\overline{vw}$, defined similarly but with respect to the disk $D(p_i,R_{p_i}/\alpha)$. Note that by adding the sequence $p_1,\dots,p_k$ of additional medial vertices to $M$, we subdivide the purple region into $k+1$ smaller purple regions.

\begin{figure}[h]
	\centering
	\includegraphics[scale=1.1]{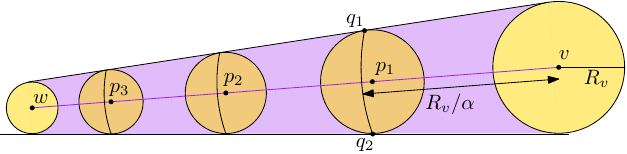}
	\caption{A purple region between $D_w$ and $D_v$, and the added sequence of medial disks $p_1,p_2,p_3$.}
	\label{fig:purple_regions}
\end{figure}

Intuitively, those disks in the interior of a purple region $\Pi$ were added in such a way that a single guard does not see too many of them robustly. More precisely, we have the following observation, which we prove in \Cref{sec:proofs-const-approx}.
\begin{restatable}{observation}{purpleRegions}\label{obs:purple_regions}
    For any $g\in P$, $\VP_\alpha(g)$ intersects at most four of the disks $D_{p_i}$ that were added to $\Pi$.
\end{restatable}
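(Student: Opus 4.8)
The plan is to combine the way the points $p_i$ were inserted into $\Pi$ with \Cref{obs:ice-cream-cone-in-P}, \Cref{obs:ice-cream-cone-angle}, and \Cref{obs:p_is_close}, and to show that no single robust visibility region can ``snake'' far enough along the narrow region $\Pi$ to meet five of the disks $D_{p_1},\dots,D_{p_k}$. Write $\Pi$ as cut out by polygon edges $e_1,e_2$ and bounding medial disks $D_v,D_w$ with $R_v\ge R_w$, let $W$ be the convex wedge bounded by the lines through $e_1,e_2$, with apex $o$, half-angle $\beta$, and bisector containing $\overline{vw}$ (if $e_1\parallel e_2$, take $W$ to be a strip and argue analogously), and note each $D_{p_i}$ is a medial disk of $W$, so $p_i$ sits at distance $d_i$ from $o$ along the bisector with radius $r_i:=R_{p_i}=d_i\sin\beta$ and $d_v=:d_0\ge d_1\ge\cdots\ge d_k\ge d_w$. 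The first step is to observe that the insertion is metrically regular: since $D_{p_{i+1}}$ is tangent to $e_1,e_2$ precisely where $\partial D(p_i,r_i/\alpha)$ crosses these edges toward $D_w$, a short law-of-cosines computation gives $d_{i+1}=\lambda\,d_i$ for the fixed ratio $\lambda:=1-\tan\beta/\tan\theta\in(0,1)$, where $\theta=\arcsin\alpha$; hence $d_i,r_i$ decay geometrically and, for $i<j$, the bisector distance between $p_i$ and $p_j$ equals $d_i(1-\lambda^{\,j-i})$. (This also explains why points are inserted only when $\beta<\theta$: otherwise $\lambda\le 0$ and $D(p_i,r_i/\alpha)$ already covers the rest of $\Pi$, so the loop stops.)

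Now suppose $\VP_\alpha(g)$ meets $D_{p_i}$ at a point $q$ and $D_{p_j}$ at a point $q'$ with $i<j$; it suffices to prove $j-i\le 3$, since then the set of indices $\ell$ with $\VP_\alpha(g)\cap D_{p_\ell}\ne\emptyset$ is an interval of length at most $3$, hence has at most four elements. By \Cref{obs:ice-cream-cone-in-P} the ice cream cone from $q$ to $g$ lies in $P$; its triangular part has apex $q$ and half-angle $\theta$ (\Cref{obs:ice-cream-cone-angle}), the part of it inside $\Pi$ emanates connectedly from $q$, and it can leave $\Pi$ only through the arc of $D_v$ or the arc of $D_w$ (never through $e_1$ or $e_2\subseteq\partial P$). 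Inside $\Pi$ the cone is confined to $W$, whose cross-section perpendicular to the bisector at distance $d$ from $o$ has width $2d\tan\beta$, so whenever the cone reaches a level at distance $d_\ell$ from $o$ we get $2\tan\theta\cdot\|q-(\text{level }\ell)\|\le 2d_\ell\tan\beta$; plugging in the geometric decay and $\tan\beta=(1-\lambda)\tan\theta$ shows that a cone with apex at level $i$ cannot remain inside $\Pi$ past level $i-1$ heading toward $D_v$, nor past level $i+1$ heading toward $D_w$. If instead $g$ lies deep inside $\Pi$, near level $m$, then $D(g,\alpha\|q-g\|)\subseteq P$ forces $R_g=O(r_m)$ and hence $\|q-g\|\le R_g/\alpha=O(r_m/\alpha)$ by \Cref{obs:p_is_close}, again bounding $|i-m|$ by a constant. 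Applying the same analysis to $q'$ and combining the possible locations of $g$ (beyond $D_v$, beyond $D_w$, or inside $\Pi$) yields $j-i\le 3$, with the slack beyond the first-order bound absorbing the radii $r_i$, the difference between $\tan\theta$ and $\sin\theta=\alpha$, and the $\cos\beta$ factors, all controlled by the standing assumption $\alpha\le\frac{1}{2}$.

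The main obstacle is precisely this last estimate: turning the principle ``a cone of half-angle $\theta$ cannot meander far along a wedge $W$ whose width at level $\ell$ is $\Theta(r_\ell)$ while the cone's width there grows like $2\tan\theta$ times a sum of inter-disk gaps $\sum r_m\sqrt{1-\alpha^2}/\alpha$'' into a clean bound $j-i\le 3$ that holds \emph{uniformly} over the wedge angle, i.e.\ over $\lambda\in(0,1)$. The delicate end is the near-strip regime $\lambda\to1$ (nearly parallel edges, nearly equal disks), where the $D_{p_i}$ are almost equally spaced and the cone only barely fails to gain an extra level, so the constant must be chosen with care; one must also verify that ``level'' is meaningful throughout the part of $\Pi$ the cones traverse (i.e.\ that $P$ coincides with $W$ there) and treat the degenerate parallel-edge case $\beta=0$, where $r_i$ and the spacing $r_i\sqrt{1-\alpha^2}/\alpha$ are constant, separately.
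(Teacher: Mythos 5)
Your setup is sound as far as it goes: the identification of $\Pi$ with a wedge of half-angle $\beta$, the computation of the decay ratio $d_{i+1}=\lambda d_i$ with $\lambda=1-\tan\beta/\tan\theta$, and the remark that insertion happens only when $\beta<\theta$ are all correct. But the proof has a genuine gap exactly where the content of the observation lies: the claim that a cone of half-angle $\theta$ anchored at a point of $D_{p_i}$ ``cannot remain inside $\Pi$ past level $i-1$ toward $D_v$ nor past level $i+1$ toward $D_w$'' is asserted via ``plugging in the geometric decay,'' and you then concede that making this uniform in $\lambda$ (especially as $\lambda\to1$) is ``the main obstacle.'' That obstacle is real: the bound depends on where $q$ sits inside $D_{p_i}$ (its bisector coordinate ranges over $d_i(1\pm\sin\beta)$), on the extent $d_j\pm r_j$ of the target disk, and on the tilt of the cone's axis relative to the bisector; a back-of-the-envelope version of your width inequality gives, toward the apex, roughly $j-i\le 1+2\tan\theta$, which exceeds $2$ once $\alpha>1/\sqrt5$, so the clean ``past level $i+1$'' claim is not automatic for all $\alpha\le 1/2$ and the final constant $4$ does not fall out without the careful asymmetric bookkeeping you defer. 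The case analysis for $g$ outside $\Pi$ is likewise only gestured at.

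The missing idea that makes all of this trigonometry unnecessary is that the construction already encodes the answer. By \Cref{obs:p_is_close}, $\VP_\alpha(p_i)\subseteq D(p_i,R_{p_i}/\alpha)$, and $p_{i+1}$ was placed precisely so that $D_{p_{i+1}}$ is tangent to $e_1,e_2$ at the points where $\partial D(p_i,R_{p_i}/\alpha)$ crosses them; hence $\VP_\alpha(p_i)$ can meet $D_{p_{i-1}},D_{p_i},D_{p_{i+1}}$ but never $D_{p_{i\pm2}}$, with no wedge-angle analysis at all. The paper then handles a general $g\in\Pi$ by projecting it to a point $h$ on the medial axis between consecutive centers $p_i,p_{i+1}$ and arguing $\VP_\alpha(g)\subseteq\VP_\alpha(p_i)\cup\VP_\alpha(p_{i+1})$, which immediately gives the four disks $D_{p_{i-1}},\dots,D_{p_{i+2}}$; and it handles $g\notin\Pi$ by picking $g'$ on $\overline{gp}\cap\Pi$, which still $\alpha$-robustly sees $p$ (the ice cream cone from $p$ to $g'$ nests inside the one from $p$ to $g$), reducing to the first case. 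If you want to salvage your direct metric route, you would need to (i) prove the containment of the cone in the wedge for arbitrary axis directions, (ii) track the $\pm r$ slack on both the source and target disks, and (iii) verify the resulting integer bound uniformly over $\lambda\in(0,1]$ including the parallel-edge strip; none of these steps is present.
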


The observation below follows from the definitions of Purple and Red regions, and the observation that any blue region is bounded by two non-disjoint medial disks and one edge of $P$. A formal proof is given in \Cref{sec:proofs-const-approx}.
\begin{restatable}{observation}{boundaryCells}\label{obs:boundary-cells}
    Any Red, Blue, or Purple region has at most two disks defining its boundary. 
\end{restatable}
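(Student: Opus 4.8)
The plan is to handle the three region types separately. For Red and Purple regions the bound is immediate from the definitions --- exactly one bounding medial disk and exactly two, respectively --- and when a Purple region is subdivided by the points $p_1,\dots,p_k$ of the preceding construction it is replaced by $k+1$ Purple regions, the $i$-th bounded by the same two polygon edges together with two consecutive disks among $D_v,D_{p_1},\dots,D_{p_k},D_w$, hence again two disks. So everything comes down to Blue regions, for which we will establish the sharper statement: a Blue region is bounded by exactly two medial disks --- which, as will be apparent, necessarily intersect --- together with at most one edge of $P$.

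The heart of the argument is a structural description of the cells of $P\setminus\D$. Write $D_c$ for the maximal inscribed disk centered at a medial-axis point $c$, and $\Sigma_\gamma=\bigcup_{c\in\gamma}D_c$ for the region swept out by a medial-axis edge $\gamma$; note that $P$ is the union of the regions $\Sigma_\gamma$ over all medial-axis edges. I claim that each cell $R$ of $P\setminus\D$ is contained in $\Sigma_\gamma$ for a single medial-axis edge $\gamma$ --- or, when a defining feature of $\gamma$ is a reflex vertex $z$ of $P$ that is an endpoint of $\gamma$, in $\Sigma_\gamma\cup\Sigma_{\gamma'}$ for the two medial-axis edges meeting at $z$ --- and that the only disks of $\D$ contributing a positive-length arc to $\partial R$ are those centered at the (at most two) extreme endpoints of these edges. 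For the containment, take $x\in R$ and follow the inward normal ray from the nearest boundary point $\pi(x)$ through $x$ until it first meets the medial axis, say at $c(x)$; since the distance to $\partial P$ increases along this ray, $x\in D_{c(x)}$, so $c(x)$ cannot be a medial-axis vertex (otherwise $x$ would lie in a disk of $\D$), i.e.\ it lies in the relative interior of a medial-axis edge $\gamma(x)$; and $\gamma(x)$ is locally constant on $R$, since for it to change, $c(x)$ would have to cross a medial-axis vertex $v$, forcing $x\in D_v$ --- impossible. For the ``only the extreme disks'' part one uses that $\Sigma_\gamma$ is pinched between the two features $f,f'$ defining $\gamma$ (polygon edges and/or reflex vertices): the portion of $\partial\Sigma_\gamma$ interior to $P$ consists only of the two ``caps,'' i.e.\ arcs of the extremal disks, so no other medial disk --- being contained in $P$ and unable to cross $\partial P$ --- can meet $\partial R$.

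It then remains to read off the Blue case. Let $\gamma$, with endpoints $u,u'$ and defining features $f,f'$, generate $R$ (together with a second edge $\gamma'$ in the glued case). If an endpoint of $\gamma$ is a convex vertex $w$ of $P$, then $\Sigma_\gamma$ tapers to a point at $w$ and $R$ is bounded by a single disk and the two polygon edges at $w$, so $R$ is a Red region --- contrary to hypothesis. Hence the medial disks bounding $R$ are those centered at the at most two extreme endpoints that are not polygon vertices, so $R$ has at most two bounding disks. Furthermore, if $f$ and $f'$ were two disjoint polygon edges then $R$ would be Purple; since it is Blue, some defining feature is a reflex vertex $z$ of $P$, and then every disk $D_c$ with $c$ on $\gamma$ (or on $\gamma'$) has $z$ on its boundary, so the (at most two) bounding disks all contain $z$ and in particular intersect, while the only polygon edge that can appear on $\partial R$ is the at most one edge among $f,f'$. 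In every case $R$ has at most two bounding disks, as claimed. The step I expect to be the main obstacle is the structural claim --- concretely, ruling out a third medial disk that pokes into the swept region $\Sigma_\gamma$, which rests on the pinching property above, together with dispatching the degenerate configurations (the normal ray passing through a medial-axis vertex, points of $R$ with non-unique nearest boundary point, and the gluing of two swept regions at a reflex vertex) without disturbing the bound of two.
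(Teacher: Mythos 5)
Your overall route is the same as the paper's: both arguments dispose of Red and Purple regions by definition and then reduce the Blue case to the fact that every cell of $P\setminus\D$ is sandwiched inside the region swept by the maximal disks along a single medial-axis edge, so that only the two extreme disks of that edge can appear on the cell's boundary. The difference is one of emphasis. The paper simply \emph{asserts} that a Blue region lies ``between $D_u$ and $D_v$'' for a single medial edge $\{u,v\}$ and spends its effort on showing $D_u\cap D_v\neq\emptyset$ (via the dichotomy: a straight bisector of two edges would make the region Red or Purple, while a parabolic edge forces both end disks through the defining reflex vertex); you instead spend your effort on justifying the sandwiching via the normal-ray/grassfire argument and obtain the non-disjointness from the reflex vertex lying on both extreme circles. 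For the literal statement of \Cref{obs:boundary-cells} your version is the more complete one, since the two-disk bound is exactly the part the paper leaves implicit.

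Two caveats. First, the step you yourself flag as the main obstacle is genuinely not closed by the pinching argument as stated: knowing that $\partial\Sigma_\gamma\cap\mathrm{int}(P)$ consists of the two caps only tells you that a third medial disk $D_w$ must enter $\Sigma_\gamma$ through $D_u$ or $D_v$; it does not by itself prevent $D_w$ from protruding past, say, $D_u$ into the cell and contributing an arc to its boundary. To rule this out you need an additional fact, e.g.\ that for fixed $x$ the set of medial-axis points $c$ with $x\in D_c$ is connected in the medial-axis tree (so $x\in D_w\cap D_c$ with $c$ interior to $\gamma$ would force $x\in D_u$ or $x\in D_v$), or an equivalent argument via the spoke foliation by segments $\overline{c\,\pi(c)}$. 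Second, your side claim that a Blue region must have a reflex vertex among its defining features is not quite right: if $\gamma$ is an edge--edge bisector whose two extreme disks overlap, then $\Sigma_\gamma\setminus(D_u\cup D_v)$ splits into two corner cells each touching only one of the two polygon edges; these are Blue under the paper's definitions yet both defining features are disjoint edges of $P$. This does not affect the two-disk count (and the two disks still intersect in that configuration, so the stronger ``non-disjoint'' conclusion survives), but the implication as you state it is false and should not be relied on elsewhere.
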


\oldparagraph{Associating points with at most two medial disks.} Given some point $p\in P$, we associate $p$ with either one or two medial disks from $\D$ as follows. If $p$ is contained in some disk from $\D$, we associate $p$ with the largest disk from $\D$ that contains it. Otherwise, $p$ belongs to either a Red, Blue, or Purple region, and we associate $p$ with the disks from $\D$ defining that cell. (by \Cref{obs:boundary-cells}, there are at most two such disks).

\begin{observation}\label{obs:p_visible_by_medial_vertex}
Let $p$ be a point in $P$.
    \begin{enumerate}[(i)]
    \item If $p\in D_v$ for some medial vertex $v$, then $p$ is $\alpha$-robustly visible from $v$.
    \item If $p$ is in a Red, Blue, or Purple region, then $p$ is $\alpha$-robustly visible from one of the centers of medial disks associated with $p$.
    \end{enumerate}
\end{observation}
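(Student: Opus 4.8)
The plan is to treat the two parts separately, in both cases reducing the claim to showing that an appropriate ice cream cone lies inside $P$ (via \Cref{obs:ice-cream-cone-in-P}) and exploiting the defining property of medial disks — that they are tangent to $\partial P$ at their touching points and otherwise lie strictly in the interior of $P$.

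For part (i), let $p \in D_v$ for a medial vertex $v$ with medial disk $D_v = D(v, R_v)$. I want to show the ice cream cone $C$ from $p$ to $v$ (with apex angle $2\theta = 2\arcsin\alpha$ at $p$, as in \Cref{obs:ice-cream-cone-angle}) is contained in $P$. The disk portion of this cone is $D(v, \alpha\|p-v\|)$, which is concentric with $D_v$ and has radius $\alpha\|p-v\| \le \alpha R_v \le R_v$ (using $\alpha \le 1$), hence is contained in $D_v \subseteq P$. For the triangular portion $\triangle apb$: since $p \in D_v$, every point of the cone — which lies in the convex hull of $p$ and $D(v,\alpha\|p-v\|)$, all of which lies within $D_v$ — in fact I claim the whole ice cream cone is contained in $D_v$. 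This is because the ice cream cone from $p$ to $v$ is exactly the convex hull of $\{p\} \cup D(v,\alpha\|p-v\|)$, and both $p$ and the small disk lie inside the convex set $D_v$ (for the small disk: its farthest point from $v$ is at distance $\alpha\|p-v\| \le R_v$). So $C \subseteq D_v \subseteq P$, and by \Cref{obs:ice-cream-cone-in-P}, $v$ $\alpha$-robustly guards $p$.

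For part (ii), suppose $p$ lies in a Red, Blue, or Purple region, and let $v, w$ be the (at most two) medial disk centers bounding that cell, with medial disks $D_v, D_w$; WLOG consider the nearer or the appropriate one. The key structural fact I will use is that each such cell is "thin" in the following sense: it is bounded by edges of $P$ that are tangent to the bounding medial disk(s), so locally the region looks like the gap between a disk and the polygon edges tangent to it. I would argue that for $p$ in such a region, the ice cream cone from $p$ to the center $v$ of one of its bounding medial disks — say the one realizing the relevant tangencies — stays within the cell together with $D_v$: the segments $\overline{pa},\overline{pb}$ run roughly "along" the edges of $P$ toward the tangency region of $D_v$, and because the edges of $P$ bounding the cell are tangent to $D_v$, the cone does not cross $\partial P$. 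The disk part $D(v,\alpha\|p-v\|)$ is again contained in $D_v$ provided $\alpha\|p-v\| \le R_v$, i.e. $\|p-v\| \le R_v/\alpha$; in the Purple case this is exactly what the insertion of the points $p_1,\dots,p_k$ guarantees (each subdivided purple cell lies within $D(p_i, R_{p_i}/\alpha)$, hence $\alpha$-robustly visible from $p_i$ by construction), and for Red and Blue cells I would verify the cell is small enough relative to its bounding disk that the same bound holds — using that a Red cell is "pinched" between two polygon edges meeting at a vertex $w$ that are both tangent to a single disk $D_v$, and a Blue cell is bounded by one edge of $P$ and two overlapping medial disks, so its diameter is comparable to the radii involved. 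Then \Cref{obs:ice-cream-cone-in-P} finishes it.

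The main obstacle I anticipate is part (ii) for the Red and Blue cases: unlike Purple cells, these were not explicitly subdivided, so I must show "for free" that the cell is contained in $D(v,R_v/\alpha)$ (or $D(w,R_w/\alpha)$) and that the ice cream cone genuinely avoids $\partial P$. For Red cells this should follow from a direct geometric argument about a disk tangent to two lines meeting at a point (the angle there is at least $\phi \ge 2\theta$, and the assumption $\alpha \le \sin(\phi/2)$ plus $\alpha \le 1/2$ controls the geometry). For Blue cells I would lean on the observation, already noted in the text preceding \Cref{obs:boundary-cells}, that a blue region is bounded by two non-disjoint (overlapping) medial disks and a single edge of $P$; the overlap forces the cell to be short, and tangency of that single edge to both disks keeps the cone inside. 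I expect these to be short case analyses rather than hard, but they are where the care is needed, so in a full write-up I would present each cell type as its own paragraph with a supporting figure.
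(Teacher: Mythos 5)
Your proposal follows essentially the same route as the paper: part (i) via containment of the ice cream cone in $D_v$, and part (ii) via a case analysis on Red (the interior-angle assumption), Blue (non-disjointness of the two bounding disks forcing $\|p-v\|\le R_v/\alpha$ for one of the two centers), and Purple (the construction of the inserted medial disks). The details you flag as still needing verification in the Red and Blue cases are exactly the points the paper's own proof also leaves implicit, so your write-up is at least as complete as the original.
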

\begin{proof}
    The first statement is trivial. 
    For the second statement, if $p$ is in a Red region bounded by a medial disk $D_v$, then clearly $p$ is visible from $v$ (recall that we assume that convex angles in $P$ are larger than $2\theta$).
    Else, if $p$ is in a Blue region bounded by two non-disjoint medial disks $D_v$ and $D_u$, then since $\alpha\le\frac12$ one of the following holds: (i) $\|p-v\|\le R_v/\alpha$, and then $p$ is $\alpha$-robustly visible from $v$,
    or (ii) $\|p-u\|\le R_u/\alpha$, and then $p$ is $\alpha$-robustly visible from $u$. 
    Else, $p$ is in a Purple region bounded by two medial disks $D_v$ and $D_u$, such that $R_v\ge R_u$.
    By the construction of additional disks in purple regions, $v$ sees the entire purple region $\alpha$-robustly.
\end{proof}

\begin{observation}\label{obs:largest_disk}
Let $g,p$ be two points in $P$ such that $g$ $\alpha$-robustly guards $p$, and let $D_v$ be the largest disk associated with $g$. Then $\alpha\|g-p\|\le R_v$.
\end{observation}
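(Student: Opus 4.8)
The plan is to reduce the statement to the simple geometric fact recorded in Observation~\ref{obs:p_is_close}, which says that if $g$ $\alpha$-robustly guards $p$ then $\|g-p\| \le R_g/\alpha$, where $R_g$ is the radius of the largest disk centered at $g$ contained in $P$. Rearranged, this is exactly $\alpha\|g-p\| \le R_g$. So the entire content of the present observation is to show that $R_g \le R_v$, where $D_v$ is the largest medial disk associated with $g$; the conclusion then follows by transitivity.

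First I would recall how $g$ is associated with medial disks. By the definition given just above (``Associating points with at most two medial disks''), either $g$ lies in some disk of $\D$ and is associated with the largest such disk, or $g$ lies in a Red, Blue, or Purple region and is associated with the (at most two, by Observation~\ref{obs:boundary-cells}) medial disks bounding that region. In the first case, $g \in D_v$ with $D_v$ the largest medial disk containing $g$. Since $D_v \subseteq P$ is a disk of radius $R_v$ containing $g$, and $D_g = D(g,R_g)$ is the \emph{largest} disk centered at $g$ inside $P$, I want to argue $R_g \le R_v$. This is where one must be slightly careful: it is not literally true that an arbitrary disk in $P$ through $g$ has radius at least $R_g$. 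However, $R_g = \dist(g,\partial P)$, and since $g \in D_v$ with $D_v$ a medial disk, $\dist(g,\partial P) \le \dist(v,\partial P) = R_v$ because $D_v$ being a medial disk means $D_v$ is a maximal disk in $P$ touching $\partial P$, and any point inside a maximal inscribed disk is at least as far from the center as its own distance to the boundary only up to the triangle inequality $\dist(g,\partial P) \le \|g-v\| + \dist(v,\partial P) $ — that is the wrong direction. The correct argument is instead: $D(g,R_g) \subseteq P$, so the ball $D(g,R_g)$ is an inscribed disk, and the medial axis / maximal-disk structure gives that any inscribed disk is contained in some medial disk; more directly, the key fact is that $g\in D_v$ forces $\|g-v\|\le R_v$, hence $\dist(g,\partial P)\le \|g-v\|+\text{(nothing)}$ is again the wrong bound, so the cleanest route is: the closest boundary point to $g$ is at distance $R_g$; project radially, and since $D_v$ is a maximal inscribed disk and $g\in D_v$, the open disk $D(g,R_g)$ cannot contain a boundary point, while it is a standard property of medial disks that $D(g, R_g) \subseteq D_v$ when $D_v$ is the medial disk associated to $g$ — equivalently $R_g \le R_v$ follows because $g\in D_v$ and $D_v$ being medial means $D_v$ is the largest inscribed disk ``covering'' $g$ in the sense of the medial-axis transform. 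In the region case, $g$ lies in a Red/Blue/Purple cell, whose defining disks $D_v$ (the larger one) are medial disks, and by construction (together with Observation~\ref{obs:p_visible_by_medial_vertex}) every point of such a cell is $\alpha$-robustly visible from the center of one of its defining disks, but more to the point, $g$ lies outside all medial disks, so $R_g = \dist(g,\partial P)$ is realized along an edge of $P$ bounding the cell, and that edge is tangent to $D_v$; elementary planar geometry of a point between a line and a tangent disk then yields $R_g \le R_v$.

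Thus the proof assembles as: (1) if $g$ is in a medial disk, take $D_v$ the largest medial disk containing $g$; the medial-axis transform property gives $D(g,R_g)\subseteq D_v$, hence $R_g\le R_v$; (2) if $g$ is in a Red/Blue/Purple region, $R_g$ is the distance from $g$ to an edge $e$ of $P$ bounding the region, $e$ is tangent to $D_v$, and since $g$ lies on the same side of $e$ as $D_v$ and within the slab determined by the region, $R_g = \dist(g,e) \le R_v$; (3) in either case combine with Observation~\ref{obs:p_is_close} applied at $g$: $\alpha\|g-p\| \le R_g \le R_v$.

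The main obstacle I anticipate is case (1): making precise and rigorous the claim that the largest \emph{inscribed} disk at $g$ is contained in the largest \emph{medial} disk through $g$. This requires invoking the structure of the medial axis transform — specifically that every maximal inscribed disk is a medial disk and that the family of medial disks covers $P$ with the property that for each point $g$ there is a medial disk $D_v$ with $D(g,\dist(g,\partial P)) \subseteq D_v$. I would either cite a standard reference on the medial axis for this, or give a short self-contained argument: let $x \in \partial P$ be a closest boundary point to $g$ (so $\|g-x\| = R_g$) and slide the disk $D(g,R_g)$ away from $x$ along the ray from $x$ through $g$, growing it while keeping $x$ on its boundary, until it touches $\partial P$ a second time; the resulting disk is a maximal inscribed (hence medial) disk containing $D(g,R_g)$, so $R_g$ is at most its radius, which is at most $R_v$ since $D_v$ was chosen as the \emph{largest} medial disk containing $g$. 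Everything else is routine.
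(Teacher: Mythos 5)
Your proposal follows the paper's proof essentially verbatim: reduce the statement to showing $R_g\le R_v$ and then conclude via \Cref{obs:p_is_close}, with the bound $R_g\le R_v$ extracted from the structure of maximal inscribed disks relative to the medial axis. The one caution is that your phrase ``the largest medial disk containing $g$'' conflates the disks of $\D$ (which are centered only at medial \emph{vertices}) with general maximal inscribed disks: the disk you obtain by growing $D(g,R_g)$ may be centered in the interior of a medial edge, so to tie its radius back to $R_v$ you still need the fact the paper states as its first sentence --- that the radius function along a medial edge attains its maximum at an endpoint --- together with the observation that the relevant endpoint disk is one of those associated with $g$.
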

\begin{proof}
For an edge $\{u,v\}$ of the medial axis with $R_u\le R_v$, the radii of maximal disks with center on the edge is at most $R_v$. Points not on a medial edge clearly have smaller radius, and thus $R_g\le R_v$ (recall that $R_g$ is the radius of the largest disk centered at $g$ and contained in $P$). By \Cref{obs:p_is_close}, 
we get $\alpha\|p-g\|\le R_g\le R_v$.
\end{proof}

\subsection{A set of candidate guards}\label{sec:robust-candidate-guards}
For each $v\in M$ we place on $D_v$ a set $Q_v$ of $\Theta(\alpha^{-4})$ grid points, with edge length in $\Theta(\alpha^2 R_v)$. Denote $Q=M\cup\bigcup_{v\in M}Q_v$. For a point $g\in P$, let $Q(g)=\left\{Q_v\cup\{v\}\mid D_v\text{ is associated with } g\right\}$. 
As there are at most two disks associated with $g$, we have $|Q(g)|=O(\alpha^{-4})$. In this subsection we show that any $\alpha$-robust guard $g$ can be replaced by the set $Q(g)$ of $\alpha/4$-robust guards.

\begin{figure}[ht]
    \centering
    \includegraphics[scale=1.2]{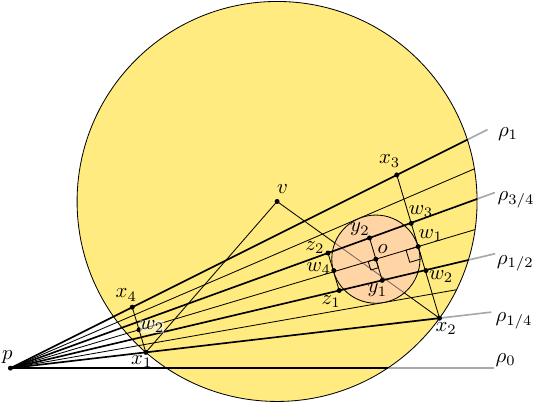}
    \caption{The construction for the proof of \Cref{lem:cone_intersecting_disk}.}     \label{fig:cone_intersecting_disk}
\end{figure}
\begin{lemma}\label{lem:cone_intersecting_disk}
    Let $K$ be a cone defined by two rays $\rho_0,\rho_1$ originated at $p$ with small angle $\theta$. If both $\rho_0,\rho_1$ intersect $D_v$, and $p$ sees $K\cap D_v$, then there exists a grid point in $Q_v$ that $\frac\alpha4$-robustly guards $p$.
\end{lemma}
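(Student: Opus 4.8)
The plan is to reduce, via \Cref{obs:ice-cream-cone-in-P}, to producing a point $q\in Q_v$ whose ice cream cone to $p$ (with parameter $\alpha/4$) lies in $P$, and to do this by exhibiting a \emph{convex} subset $T'\subseteq P$ with $p\in T'$ together with a grid point $q\in Q_v$ such that $D(q,\tfrac\alpha4\|p-q\|)\subseteq T'$: then the two tangency points $a,b$ of the cone lie in $T'$, so by convexity $\triangle apb\subseteq T'$, and hence the whole ice cream cone lies in $T'\subseteq P$. The two cases below differ only in the choice of $T'$. The easy case is $p\in D_v$: here I would take $T'=D_v$ (a convex subset of $P$ containing $p$, as $D_v$ is a medial disk) and $q$ the grid point of $Q_v$ nearest $v$, which lies within $O(\alpha^2R_v)$ of $v$; since $\alpha\le1/2$ and $\|p-q\|\le R_v+O(\alpha^2R_v)$, the inequality $\|q-v\|+\tfrac\alpha4\|p-q\|\le R_v$ follows by a direct estimate once the grid's edge-length constant is chosen small enough, giving $D(q,\tfrac\alpha4\|p-q\|)\subseteq D_v$.

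The main case is $p\notin D_v$. Here $K\cap D_v$ is convex and $p$ sees it, so $T:=\mathrm{conv}(\{p\}\cup(K\cap D_v))=\bigcup_{x\in K\cap D_v}\overline{px}\subseteq P$ is convex and contains $p$; I would take $T'=T$ and look for $q\in Q_v$ with $D(q,\tfrac\alpha4\|p-q\|)\subseteq K\cap D_v$ ($\subseteq T$). Let $\ell$ be the bisecting ray of $K$ and $\theta=\arcsin\alpha$. A segment joining a point of $\rho_0\cap D_v$ to a point of $\rho_1\cap D_v$ lies in the convex set $K\cap D_v$ and crosses $\ell$, so $\ell$ meets $D_v$; let $m_0$ be the foot of the perpendicular from $v$ to $\ell$ (the midpoint of the chord $\ell\cap D_v$), so that $m_0\in K\cap D_v$, $h:=\|m_0-v\|=\dist(v,\ell)\le R_v$, and $L_0:=\|p-m_0\|>0$. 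The heart of the argument is to show that $m_0$ has clearance $\rho^\ast:=\min\{R_v-h,\ L_0\sin(\tfrac\theta2)\}=\Omega(\alpha^2R_v)$ inside $K\cap D_v$ ($R_v-h$ being the distance from $m_0$ to $\partial D_v$, and $L_0\sin(\tfrac\theta2)$ the distance from $m_0\in\ell$ to each bounding ray of $K$). For this I would use that $\rho_1$ --- the bounding ray of $K$ lying on the side of $\ell$ opposite to $v$ --- meets $D_v$; written out, and using $p\notin D_v$, this says $L_0\cos(\tfrac\theta2)-h\sin(\tfrac\theta2)\ge\sqrt{L_0^2+h^2-R_v^2}\ge0$. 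Squaring yields both $L_0\sin(\tfrac\theta2)+h\cos(\tfrac\theta2)\le R_v$ and, from nonnegativity of the left side before squaring, $L_0\ge R_v\sin(\tfrac\theta2)$; combining these with the identity $\sin^2(\tfrac\theta2)-1+\cos(\tfrac\theta2)=\cos(\tfrac\theta2)(1-\cos(\tfrac\theta2))$ gives $R_v-h\ge R_v(1-\cos(\tfrac\theta2))$ and $L_0\sin(\tfrac\theta2)\ge R_v\sin^2(\tfrac\theta2)$. Since $\sin(\tfrac\theta2)\ge\tfrac\alpha2$ and $1-\cos(\tfrac\theta2)=2\sin^2(\tfrac\theta4)=\Omega(\alpha^2)$, both quantities are $\Omega(\alpha^2R_v)$, so $\rho^\ast=\Omega(\alpha^2R_v)$.

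To finish I would introduce $G:=\{q:\ \angle(\overline{pq},\ell)\le\tfrac\theta4\ \text{and}\ \|q-v\|+\tfrac\alpha4\|p-q\|\le R_v\}$; by \Cref{obs:disk_in_cone} every $q\in G$ has $D(q,\tfrac\alpha4\|p-q\|)\subseteq K$, and the second constraint gives $D(q,\tfrac\alpha4\|p-q\|)\subseteq D_v$, hence $D(q,\tfrac\alpha4\|p-q\|)\subseteq K\cap D_v$, exactly as needed. From the clearance bound (possibly shifting $m_0$ slightly toward $v$ to balance $L_0\sin(\tfrac\theta2)$ against $R_v-h$) one sees that $G$ contains a disk of radius $\Omega(\alpha^2R_v)$; choosing the constant in the edge length $\Theta(\alpha^2R_v)$ of the grid $Q_v$ small enough, $G$ then contains a grid point $q\in Q_v$, which is the desired guard.

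\textbf{Where the difficulty lies.} The delicate step is the clearance estimate $\rho^\ast=\Omega(\alpha^2R_v)$, together with the claim that $G$ contains a disk of this order. This is precisely where the full hypothesis ``both rays of $K$ meet $D_v$'' must be exploited --- the weaker ``$\ell$ meets $D_v$'' would not suffice, as it allows $K\cap D_v$ to degenerate into a sliver thinner than the grid --- and it requires keeping careful track of the absolute constants relating $\theta=\arcsin\alpha$, $h$, $L_0$ and $R_v$, and then fixing the grid resolution small enough relative to the constant obtained.
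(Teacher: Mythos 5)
Your proposal is correct in its overall structure and reaches the same intermediate target as the paper --- a subregion of $K\cap D_v$ of inradius $\Omega(\alpha^2 R_v)$ all of whose points $q$ satisfy $D(q,\tfrac\alpha4\|p-q\|)\subseteq K\cap D_v$, which then must capture a grid point --- but it gets there by a genuinely different route. The paper argues synthetically: it assumes w.l.o.g.\ that $v$ lies above $\rho_{3/4}$, intersects $\rho_{1/4}$ with $D_v$, drops perpendiculars onto $\rho_{5/8}$, and exhibits an explicit trapezoid $z_1y_1y_2z_2$ whose dimensions are $\Theta(\alpha^2)R_v$ (\Cref{claim:quad}); it also needs a separate case when $v$ itself lies between $\rho_{1/4}$ and $\rho_{3/4}$. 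Your version replaces this with a cleaner algebraic core: from ``the ray on the far side of the bisector meets $D_v$'' and $p\notin D_v$ you extract $L_0\cos\tfrac\theta2-h\sin\tfrac\theta2\ge\sqrt{L_0^2+h^2-R_v^2}\ge0$, and squaring gives exactly the two inequalities $L_0\sin\tfrac\theta2+h\cos\tfrac\theta2\le R_v$ and $L_0\ge R_v\sin\tfrac\theta2$ that control everything; the case analysis on the position of $v$ disappears, replaced by the (easy) split on whether $p\in D_v$. This is arguably tighter and easier to verify than the paper's coordinate chase, and your identification of where the hypothesis ``both rays meet $D_v$'' enters is exactly right.

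One step is stated too loosely, and it is the one place where the argument could silently fail: the claim that ``from the clearance bound one sees that $G$ contains a disk of radius $\Omega(\alpha^2R_v)$.'' The clearance $\rho^\ast$ only certifies $D(m_0,\rho^\ast)\subseteq K\cap D_v$; membership in $G$ additionally requires $\|q-v\|+\tfrac\alpha4\|p-q\|\le R_v$, and the term $\tfrac\alpha4\|p-q\|$ is \emph{not} lower-order. In the extremal configuration ($p\in\partial D_v$, $\rho_1$ tangent to $D_v$, so $h=R_v\cos\tfrac\theta2$ and $L_0=R_v\sin\tfrac\theta2$) the slack $R_v-h-\tfrac\alpha4L_0$ at $m_0$ is only $\Theta(\alpha^4R_v)$, so the unshifted disk $D(m_0,\Omega(\alpha^2R_v))$ is \emph{not} contained in $G$: the shift toward $v$ is essential, not optional, and its purpose is not to balance $R_v-h$ against $L_0\sin\tfrac\theta2$ (both already $\Omega(\alpha^2R_v)$) but to buy back, at unit rate, the slack consumed by $\tfrac\alpha4\|p-q\|$. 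The repair is available from your own inequalities: they imply $h\le R_v\cos\tfrac\theta2$ and hence $R_v-h-\tfrac\alpha4L_0\ge R_v\bigl(1-\tfrac{3\cos(\theta/2)-\cos^3(\theta/2)}{2}\bigr)\ge0$, so shifting $m_0$ by $s=\tfrac12L_0\sin\tfrac\theta4$ toward $v$ leaves angular clearance $\tfrac12L_0\sin\tfrac\theta4=\Omega(\alpha^2R_v)$ while creating slack at least $s(1-\tfrac\alpha4)=\Omega(\alpha^2R_v)$ in the disk-in-$D_v$ constraint; with these two lines added, your proof is complete.
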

\begin{proof}
    Assume for simplicity that $\rho_0$ lies on the $x$-axis, and $\rho_1$ lies above it. For $0<\gamma<1$, denote by $\rho_{\gamma}$ the ray from $p$ between $\rho_0$ and $\rho_1$ with angle $\gamma\cdot\theta$ from $\rho_0$.
    First, by \Cref{obs:disk_in_cone} for any point $q$ that lies in $K$ between $\rho_{1/4}$ and $\rho_{3/4}$, the disk $D(q,\frac{\alpha}{4}|p-q|)$ is contained in $K$. In addition, if $v$ lies in $K$, then since both $\rho_0,\rho_1$ intersect $D_v$, we get that $R_v$ is at least the distance between $v$ and one of $\rho_0,\rho_1$. As in the proof of \Cref{obs:disk_in_cone}, we get that $R_v\ge \sin\frac\theta2\|p-v\|>\frac{\sin\theta}{2}\|p-v\|=\frac\alpha2\|p-v\|$. Therefore, if $v$ lies between $\rho_{1/4}$ and $\rho_{3/4}$, then clearly $D(v,\frac{\alpha}{4}\|p-v\|)$ is contained in $K\cap D_v$. 
    
    We thus assume w.l.o.g. that $v$ lies above $\rho_{3/4}$ (the case when $v$ lies below $\rho_{1/4}$ is symmetric).
    Our goal is to find a large enough square that lies in $K\cap D_v$ between $\rho_{1/4}$ and $\rho_{3/4}$, and such that for any point $q$ in that square, $D(q,\frac{\alpha}{4}|p-q|)$ is contained in $D_v$. 
    
    In the following, we use some trigonometric identities, and the Maclaurin series expansions of some trigonometric functions (for $x\le \frac12$) to estimate distances. Specifically, $\tan(x)=\frac{\sin(x)}{\cos(x)}$, $\sin(2x)=2\sin(x)\cos(x)$, $x\ge\sin(x)\ge x-\frac{x^3}{3!}$, $1-\frac{x^2}{2}+\frac{x^4}{4!}\ge \cos(x)\ge 1-\frac{x^2}{2}$, $2x\ge \arcsin(x)\ge x$, $2x\ge \tan(x)\ge x$.

    Let $x_1,x_2$ be the two intersection points of $\rho_{1/4}$ and $D_v$ (see \Cref{fig:cone_intersecting_disk}).
    We have 
    $$\|x_1-x_2\|\ge 2\sin(\theta/4)\cdot R_v= \Theta(\theta)\cdot R_v$$ (we will get an equality when $p$ is on $\partial D_v$, and $\rho_0$ is tangent to $D_v$). Consider the ray from $x_2$ perpendicular to $\rho_{5/8}$, and let $w_1$ be the intersection point with $\rho_{5/8}$, and $x_3$ the intersection point with $\rho_1$. Similarly, consider the ray from $x_1$ perpendicular to $\rho_{5/8}$, and let $w_2$ be the intersection point with $\rho_{5/8}$, and $x_4$ the intersection point with $\rho_1$. Since $v$ is above $\rho_{1/2}$, the quadrilateral $x_1x_2x_3x_4$ is contained in $D_v$. Note that if $p$ is on $\partial P$, we get $p=x_1=x_4$, and $\triangle px_2x_3$ is a triangle contained in $D_v$.
    
    We have 
    $$\|p-w_1\|= \cos(\frac{3\theta}{8})\|p-x_2\|\ge \cos(\frac{3\theta}{8})\|x_1-x_2\|\ge (1-\frac{9\theta^2}{64})\|x_1-x_2\| =\Theta(\theta)\cdot R_v,$$
    and 
    $$\|x_2-w_1\|\ge \tan(\frac{3\theta}{8})\|p-w_1\|\ge \sin(\frac{3\theta}{8})\|p-x_2\|\ge \sin(\frac{3\theta}{8})\|x_1-x_2\|=\Theta(\theta^2)\cdot R_v.$$ 
    
    Let $o$ be the point on $\overline{pw_1}$ such that $\|w_1-o\|=\alpha/4\|o-p\|$, then the disk $D(o,\alpha/4\|o-p\|)$ is contained in $K\cap D_v$. 
    Let $y_1,y_2$ be the points on $\rho_{1/2},\rho_{3/4}$, respectively, such that the line through $y_1,y_2$ is the perpendicular to $\overline{pw_1}$ at $o$. Let $z_1,z_2$ be the points on $\rho_{1/2},\rho_{3/4}$, respectively, such that the line through $z_1,z_2$ is the perpendicular to $\overline{pw_1}$ at the other intersection point, $w_4$, of $D(o,\alpha/4\|o-p\|)$ and $\overline{pw_1}$. For any point $q$ in the quadrilateral $z_1,y_1,y_2,z_2$, the disk $D(q,\alpha/4\|p-q\|)$ is contained in $D_v$. 
    We now claim that the quadrilateral $z_1y_1y_2 z_2$ contains a disk of diameter in $\Theta(\|x_2-w_1\|)=\Theta(\alpha^2)\cdot R_v$, and thus it must contain a grid point from $Q_v$. Intuitively, this is true because $\|z_1-z_2\|=\Theta(\|x_2-w_1\|)$ and $\|o-w_4\|=\alpha/4\cdot\Theta(\|p-w_1\|)$. We provide detailed calculations for this claim in \Cref{sec:proofs_candidates} (\Cref{claim:quad}), which finishes the proof.
\end{proof}

\begin{lemma}\label{lem:robust-vision-candidates}
	For any $g\in P$, $\VP_\alpha(g)\subseteq\cup_{q\in Q(g)}\VP_{\alpha/4}(q)$.
\end{lemma}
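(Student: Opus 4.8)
The plan is to fix an arbitrary $g\in P$ and a point $p\in\VP_\alpha(g)$, and to exhibit a point $q\in Q(g)$ that $\alpha/4$-robustly guards $p$. Recall from \Cref{obs:ice-cream-cone-in-P} that since $g$ $\alpha$-robustly guards $p$, the ice cream cone from $p$ to $g$ — consisting of the triangle $\triangle apb$ (with apex angle $2\theta$ at $p$) together with the disk $D(g,\alpha\|p-g\|)$ — is contained in $P$; in particular $p$ sees this whole cone. The strategy is to ``slide'' the relevant witnessing disk from $g$ onto a medial disk associated with $g$, and then apply \Cref{lem:cone_intersecting_disk} to replace that medial disk by a grid point of $Q_v$.

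\smallskip
\noindent\emph{Step 1 (reduce to the medial disk containing/bounding $g$).}
Let $D_v$ be the largest medial disk associated with $g$, as defined in Section~\ref{sec:medial-axis}. By \Cref{obs:p_visible_by_medial_vertex}, $p$ is in fact $\alpha$-robustly visible from the center of one of the (at most two) medial disks associated with $g$ — call such a center $v'$, with medial disk $D_{v'}$ and radius $R_{v'}$; note $v'\in Q(g)$. If $p$ is $\alpha$-robustly visible from $v'$ then it is certainly $\alpha/4$-robustly visible from $v'\in Q(g)$, and we are done. (So the content is really in the case where $g$ is associated with two disks and we want grid points rather than centers; the above already handles a large part of the statement, but I want the cleaner uniform argument below.)

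\smallskip
\noindent\emph{Step 2 (produce a cone whose two bounding rays meet a medial disk).}
Here is the main geometric step. Consider the ice cream cone $C$ from $p$ to $g$, with bounding rays $\rho^a,\rho^b$ from $p$ and apex angle $2\theta$. Let $D_v$ be a medial disk associated with $g$ with $\alpha\|g-p\|\le R_v$ (such a disk exists by \Cref{obs:largest_disk}). I claim that $C$ can be enlarged slightly to a cone $K$ with apex at $p$, apex angle exactly $\theta$ — wait, I need apex angle $\theta$ for \Cref{lem:cone_intersecting_disk}; in fact I will use the following: the half-cone of $C$ on the side of, say, $\overline{pg}$ toward $a$ has apex angle $\theta$ at $p$ (rays $\overline{pg}$ and $\rho^a$), and $p$ sees all of $C\supseteq K$ where $K$ is this half-cone. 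So take $K$ = the sub-cone of $C$ bounded by $\overline{pg}$ and $\rho^a$: it has apex angle $\theta$, $p$ sees $K$, and both of its bounding rays pass through $g$ hence through $D(g,\alpha\|g-p\|)\subseteq D_v\cap C$ provided $g\in D_v$ and the small disk around $g$ lies inside $D_v$. The point is that $D(g,\alpha\|g-p\|)$ has radius $\le R_v$ and is centered at a point of (or bounding) $D_v$, so it meets $D_v$; a short argument using $\alpha\le 1/2$ and the definition of ``associated'' shows both rays $\overline{pg}$ and $\rho^a$ intersect $D_v$. I expect \emph{this} — verifying that both bounding rays of the chosen cone actually stab the medial disk $D_v$, uniformly over the red/blue/purple/contained cases of Section~\ref{sec:medial-axis} — to be the main obstacle, and it will require the same case analysis as \Cref{obs:p_visible_by_medial_vertex} together with \Cref{obs:disk_in_cone}/\Cref{obs:largest_disk}.

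\smallskip
\noindent\emph{Step 3 (invoke Lemma~\ref{lem:cone_intersecting_disk} and conclude).}
Once we have a cone $K$ with apex $p$, small angle $\theta$, both bounding rays intersecting $D_v$, and $p$ seeing $K\cap D_v$ (the latter because $K\cap D_v\subseteq C\subseteq P$ and $p$ sees all of $C$), \Cref{lem:cone_intersecting_disk} yields a grid point $q\in Q_v$ that $\alpha/4$-robustly guards $p$. Since $D_v$ is associated with $g$, we have $Q_v\subseteq Q(g)$, so $q\in Q(g)$ and $p\in\VP_{\alpha/4}(q)$. As $p\in\VP_\alpha(g)$ was arbitrary, $\VP_\alpha(g)\subseteq\bigcup_{q\in Q(g)}\VP_{\alpha/4}(q)$, completing the proof. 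The only loose ends are the per-case checks in Step~2 and the (already-established) bound $|Q(g)|=O(\alpha^{-4})$, which is not needed for correctness of this lemma but motivates the construction.
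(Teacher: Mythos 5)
Your high-level plan matches the paper's: dispose of the ``same cell'' case via medial-disk centers, then reduce to a cone with apex $p$ and angle $\theta$ whose bounding rays stab a medial disk $D_v$ associated with $g$, and invoke \Cref{lem:cone_intersecting_disk}. Two gaps remain, one small and one substantive. The small one: \Cref{obs:p_visible_by_medial_vertex} gives $\alpha$-robust visibility of $p$ from a center of a disk associated with \emph{$p$}, not with $g$, so Step~1 only discharges the case where $p$ lies in a disk associated with $g$ or where $p$ and $g$ lie in the same cell (so that their associated disks agree and the relevant center is in $Q(g)$). In all other cases $\overline{pg}$ crosses $\partial D_v$ for some $D_v$ associated with $g$, and that is where the work starts.

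The substantive gap is in Step~2: the claim that ``both rays $\overline{pg}$ and $\rho^a$ intersect $D_v$'' is false in general, and the failure case is exactly the hard part of this lemma. The ray $\rho^a$ is tangent to $D(g,\alpha\|p-g\|)$ at $a$ and does not pass through $g$; even with $R_v\ge\alpha\|p-g\|$ (\Cref{obs:largest_disk}) and $D_v\cap\overline{pg}\neq\emptyset$, the disk $D_v$ may contain $g$ and still miss the segment $\overline{pa}$ entirely, in which case the half-cone bounded by $\rho_0$ and $\rho^a$ does not satisfy the hypotheses of \Cref{lem:cone_intersecting_disk}. The paper handles this by first showing that in this case $g\in D_v$, then taking the two intersection points $w_1,w_2$ of $\partial D_v$ and $\partial D(g,\alpha\|p-g\|)$ and proving the nontrivial extremal fact $\angle w_1pw_2\ge\theta$ (\Cref{clm:angle_is_large}), which is what licenses applying \Cref{lem:cone_intersecting_disk} to the cone on $\rho_{w_1},\rho_{w_2}$. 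Relatedly, your justification that $p$ sees $K\cap D_v$ because $K\cap D_v\subseteq C$ does not hold: $D_v$ can be far larger than $D(g,\alpha\|p-g\|)$ and extend beyond the ice cream cone; the visibility argument has to go through points of $D_v$ on the two bounding rays together with convexity of $D_v\subseteq P$. So your easy case is fine, but the case $D_v\cap\overline{pa}=\emptyset$ is a missing idea, not a ``short argument.''
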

\begin{proof}
Let $g,p$ be two points in $P$, such that $p$ is $\alpha$-robustly guarded by $g$. 
By \Cref{obs:p_visible_by_medial_vertex}, if $p$ is in one of the disks associated with $g$, or both $p$ and $g$ are in a Red, Blue, or Purple region, then $p$ is $\alpha$-robustly guarded by the centers of the disks associated with $g$. If this is not the case, then $p,g$ must be in different cells of the arrangement $\A$, and therefore the segment $\overline{pg}$ must cross the boundary of some disk $D_v$ associated with $g$.

Let $\rho_0$ be the ray from $p$ in the direction of $g$, and assume for simplicity that $\rho_0$ lies on the $x$-axis.
Denote by $\rho^a$ (resp. $\rho^b$) the ray from $p$ tangent to $D(g,\alpha\|p-g\|)$ above (resp. below) the $x$-axis. 
Denote by $\rho^a_{\gamma}$ (resp. $\rho^b_{\gamma}$) the ray from $p$ between $\rho_0$ and $\rho^a$ (resp. $\rho^b$) with angle $\gamma\cdot\theta$ from $\rho_0$.
Let $a$ (resp. $b$) be the intersection point of $\rho^a$ (resp. $\rho^b$) and $D(g,\alpha\|p-g\|)$. 

Assume w.l.o.g. that $v$ is above $\rho_0$, and let $K$ be the cone defined by $\rho_0$ and $\rho^a$. Recall that $D_v$ do not contain $p$, and $D_v\cap\overline{pg}\neq\emptyset$.

If $D_v\cap\overline{pa}\neq\emptyset$, then there exists a point $q_1\in D_v\cap\overline{pa}$ and a point $q_2\in D_v\cap\overline{pg}$ such that the triangle $\triangle pq_1q_2$ is contained in $P$, and since $q_1\in\rho^a$ and $q_2\in\rho_0$ we get that $p$ sees $K\cap D_v$. Hence we can apply \Cref{lem:cone_intersecting_disk} on the rays $\rho_0,\rho^a$ and get that there exists a grid point in $Q_v$ that $\alpha/4$-robustly guards $p$.

Else, we are in the case when $D_v\cap\overline{pa}=\emptyset$. First, we claim that $g\in D_v$. Indeed, if $g\notin D_v$, then since $D_v\cap\overline{pg}\neq\emptyset$ it must be that $v$ is between the vertical line trough $g$ and the vertical line through $p$. Now, if $v$ is above $\rho^a$, then clearly $D_v\cap\overline{pa}\neq\emptyset$. Else, if $v$ is below $\rho^a$, then since by \Cref{obs:largest_disk} we have $R_v\ge\alpha\|p-g\|$, again we get that $D_v\cap\overline{pa}\neq\emptyset$.

Therefore, we are left with the following scenario: $v$ is above $\rho_0$, $g\in D_v$, $a\notin D_v$, and $R_v\ge\alpha\|p-g\|$.
Denote by $w_1,w_2$ the intersection points of $D_v$ and $D(g,\alpha\|p-g\|)$, and consider the rays $\rho_{w_1},\rho_{w_2}$ from $p$ to $w_1,w_2$, respectively. Since $p$ sees both $w_1,w_2$, we get that $p$ sees the cone between $\rho_{w_1}$ and $\rho_{w_2}$.
In ~\Cref{sec:proofs_candidates}, we prove that $\angle w_1pw_2\ge\theta$ (\Cref{clm:angle_is_large}). This shows that we can apply \Cref{lem:cone_intersecting_disk} on this cone, and find a grid point in $Q_v$ that $\alpha/4$-robustly sees $p$.
\end{proof}

\oldparagraph{}
By replacing each $g\in \OPT_\alpha$ with the set $Q(g)$, we get that the set $\bigcup_{g\in\OPT}Q(g)$ $\alpha/4$-robustly guards $P$. We obtain the following theorem.
\begin{theorem}\label{thm:robust-vision-candidates}
The set $Q=M\cup\bigcup_{v\in M}Q_v$ contains a set of $O(\alpha^{-4})|OPT_\alpha|$ points that $\alpha/4$-robustly guard $P$.
\end{theorem}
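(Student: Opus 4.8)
The plan is to obtain the theorem as an essentially immediate corollary of \Cref{lem:robust-vision-candidates} together with the size bound $|Q(g)|=O(\alpha^{-4})$ established in \Cref{sec:robust-candidate-guards}; almost all of the geometric content has already been done. Concretely, I would fix a minimum-cardinality set $\OPT_\alpha$ of $\alpha$-robust guards for $P$ and define the candidate subset $G' := \bigcup_{g\in\OPT_\alpha} Q(g)$, where $Q(g)=\left\{Q_v\cup\{v\}\mid D_v\text{ is associated with }g\right\}$ is the set of grid points and medial vertices attached to $g$.

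First I would dispatch the bookkeeping. Every point of $P$ is associated with at most two medial disks (\Cref{obs:boundary-cells} for points lying in Red, Blue, or Purple cells, and the ``largest containing medial disk'' rule for points lying inside some disk of $\D$), so $|Q(g)|\le 2(|Q_v|+1)=O(\alpha^{-4})$ since each $Q_v$ consists of $\Theta(\alpha^{-4})$ grid points. Hence $|G'|\le\sum_{g\in\OPT_\alpha}|Q(g)|=O(\alpha^{-4})\,|\OPT_\alpha|$. Moreover $G'\subseteq Q=M\cup\bigcup_{v\in M}Q_v$, because for each associated disk $D_v$ we have $v\in M$ and $Q_v\subseteq Q$.

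Next I would verify coverage. Let $p\in P$ be arbitrary. Since $\OPT_\alpha$ $\alpha$-robustly guards $P$, there is some $g\in\OPT_\alpha$ with $p\in\VP_\alpha(g)$. Applying \Cref{lem:robust-vision-candidates} to this $g$ yields $p\in\bigcup_{q\in Q(g)}\VP_{\alpha/4}(q)$, so some point $q\in Q(g)\subseteq G'$ satisfies $p\in\VP_{\alpha/4}(q)$, i.e.\ $q$ $\alpha/4$-robustly guards $p$. As $p$ was arbitrary, $G'\subseteq Q$ is a set of $O(\alpha^{-4})|\OPT_\alpha|$ points that $\alpha/4$-robustly guards all of $P$, which is exactly the claim.

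Because the statement falls out this cleanly, there is no genuine obstacle left at this stage: the delicate part was already isolated in \Cref{lem:cone_intersecting_disk} and \Cref{lem:robust-vision-candidates} (the careful placement of $\Theta(\alpha^{-4})$ grid points of edge length $\Theta(\alpha^2 R_v)$ on each medial disk, and the angle lower bound \Cref{clm:angle_is_large} in the ``$g\in D_v$, $a\notin D_v$'' case). The only point deserving a moment's care here is making the ``at most two associated disks'' accounting uniform over \emph{all} $g\in\OPT_\alpha$ — covering equally the case $g\in D_v$ for some medial vertex $v$ and the case where $g$ lies in a Red, Blue, or Purple cell — so that the per-guard blow-up is truly $O(\alpha^{-4})$ and not something larger.
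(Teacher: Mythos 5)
Your proposal is correct and matches the paper's own argument, which likewise obtains the theorem by replacing each $g\in\OPT_\alpha$ with $Q(g)$ and invoking \Cref{lem:robust-vision-candidates} together with the bound $|Q(g)|=O(\alpha^{-4})$. The only difference is that you spell out the bookkeeping (the ``at most two associated disks'' accounting and the inclusion $G'\subseteq Q$) that the paper leaves implicit.
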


In addition, we claim that the size of $Q$ is linear in $n=|P|$ and $|OPT_\alpha|$.
\begin{claim}\label{clm:Q_size}
    $|Q|=O(\alpha^{-4})(n+|OPT_\alpha|)$.
\end{claim}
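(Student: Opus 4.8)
The plan is to reduce the statement to a bound on $|M|$ and then to charge the medial vertices added inside purple regions against $\OPT_\alpha$. Since $Q=M\cup\bigcup_{v\in M}Q_v$ and each $Q_v$ has $\Theta(\alpha^{-4})$ points, $|Q|\le |M|+\sum_{v\in M}|Q_v|=O(\alpha^{-4})\,|M|$, so it suffices to prove $|M|=O(n+|\OPT_\alpha|)$. I would split $M$ into the vertices present before the purple-region refinement (the vertices of the medial axis together with the degree-$2$ subdivision vertices) and the points $p_1^\Pi,\dots,p_{k_\Pi}^\Pi$ added inside each purple region $\Pi$. For the first part I would invoke the classical fact that the medial axis of an $n$-vertex polygonal domain (with or without holes) has $O(n)$ vertices and edges, so this part has size $O(n)$; the same structure gives that the number of Red, Purple, and Blue regions is $O(n)$ (by \Cref{obs:boundary-cells} each region is bounded by $O(1)$ features and can be charged to a convex vertex of $P$ or to a maximal arc of the medial axis). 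Hence it remains to show $\sum_\Pi k_\Pi=O(n+|\OPT_\alpha|)$.

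Fix a purple region $\Pi$ bounded by edges $e_1,e_2$ of $P$ and by medial disks $D_v,D_w$ with $R_v\ge R_w$, and let $\Pi_0,\dots,\Pi_{k_\Pi}$ be the sub-regions cut off by $p_1^\Pi,\dots,p_{k_\Pi}^\Pi$ on the medial segment $\overline{vw}$. From the construction I would extract two geometric facts: (i) consecutive added disks are far apart, $\|p_i^\Pi-p_{i+1}^\Pi\|=\Omega(R_{p_i^\Pi}/\alpha)$ — this uses that $\partial D(p_i^\Pi,R_{p_i^\Pi}/\alpha)$ meets $e_1,e_2$ exactly at the tangency points of $D_{p_{i+1}^\Pi}$, together with $\alpha\le 1/2$ — while the width of $P$ across $\Pi_i$ stays $\Theta(R_{p_i^\Pi})$; and (ii) along the corridor of $\Pi$ the width either changes at rate $O(\alpha)$ (so the ice cream cone widens at least as fast as the corridor) or it opens faster than $\sim\alpha$ — but in the latter case $D(v,R_v/\alpha)$ already covers $\Pi$ and $k_\Pi=O(1)$, contributing only to the $O(n)$ term. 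Now, for each ``interior'' index $2\le i\le k_\Pi-1$ put $x_i^\Pi:=p_i^\Pi$; this is a point of $P$, hence $\alpha$-robustly guarded by some $g(x_i^\Pi)\in\OPT_\alpha$, and by \Cref{obs:ice-cream-cone-in-P} the ice cream cone from $x_i^\Pi$ to $g(x_i^\Pi)$ lies in $P$. This cone has half-angle $\theta=\arcsin\alpha$, so at distance $d$ from $x_i^\Pi$ it is $\approx 2d\alpha$ wide; combining this with (i)–(ii) and the fact that an interior $x_i^\Pi$ is $\Omega(R/\alpha)$ away along the corridor from both $D_v$ and $D_w$, one concludes that the cone cannot exit the corridor through either disk, and therefore $g(x_i^\Pi)$ itself lies inside the corridor of $\Pi$, at corridor-distance $O(R_{p_i^\Pi}/\alpha)$ from $x_i^\Pi$.

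To finish, note that the corridors of distinct purple regions are pairwise disjoint, so $g(x_i^\Pi)$ already determines $\Pi$; and within a single $\Pi$, since $x_i^\Pi=p_i^\Pi\in D_{p_i^\Pi}$, \Cref{obs:purple_regions} (or directly the $\Omega(R/\alpha)$ spacing from (i)) shows that one guard $\alpha$-robustly guards at most a constant number of the $x_i^\Pi$. Hence the assignment $x_i^\Pi\mapsto g(x_i^\Pi)$ is $O(1)$-to-one, so the total number of interior added vertices is $O(|\OPT_\alpha|)$; adding back the at most two non-interior added vertices per purple region and the $O(n)$ purple regions gives $\sum_\Pi k_\Pi=O(n+|\OPT_\alpha|)$, hence $|M|=O(n+|\OPT_\alpha|)$ and $|Q|=O(\alpha^{-4})(n+|\OPT_\alpha|)$.

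The step I expect to be the main obstacle is fact (ii) together with the ``cone cannot escape the corridor'' argument: one has to deal with a corridor whose two boundary edges need not be parallel, carefully compare its opening rate with the widening rate $\approx 2\alpha$ of the ice cream cone, and handle the two bounding medial disks $D_v,D_w$ (which is precisely why the first and last added disk of each region are excluded from the clean charging and absorbed into the $O(n)$ term). Everything else is bookkeeping on the medial-axis decomposition and on the construction of the added disks.
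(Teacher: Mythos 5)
Your proposal is correct and follows essentially the same route as the paper: the $O(n)$ bound on medial-axis vertices handles everything except the vertices added in purple regions, and those are charged to $\OPT_\alpha$ via the fact that any single guard's robust visibility region meets only $O(1)$ of the added disks in a purple region, with a constant number of end disks per region absorbed into the $O(n)$ term. The paper's proof simply invokes \Cref{obs:purple_regions} for this charging (each guard covers at most four consecutive added disks, so a region with $k$ added disks forces at least $(k-8)/4$ guards inside it), which is exactly the fallback you cite; your additional corridor-geometry facts (i)--(ii) and the cone-escape argument are not needed.
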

\begin{proof}
    It is well known that the number of vertices that define the medial axis is $O(n)$. We only need to show that the number of vertices that we add in the purple regions is $O(|OPT_\alpha|)$. Indeed, by \Cref{obs:purple_regions}, for any $g\in P$ $\VP_\alpha(g)$ intersects at most four such consecutive disks, and thus the number of guards from $\OPT_\alpha$ in a purple region with $k$ additional disks is at least $\frac{k-8}{4}$ ($4$ from each side can be guarded by a point outside of the purple region).
\end{proof}

\subsection{An $O(1)$-approximation greedy algorithm}
Let $Q$ be the set of candidate guards from \Cref{thm:robust-vision-candidates}, constructed with parameter $\alpha/8$ instead of $\alpha$.
Consider the arrangement $\A$ formed by the set of visibility regions  $\{\VP_{\alpha/8}(q) \mid q\in Q\}$. For each cell of this arrangement, we pick one sample point in the interior of the cell, and denote by $S$ the set of these sample points.

\begin{observation}\label{obs:robust-sample-points}
	If $Q'\subseteq Q$ $\frac\alpha8$-robustly guards $S$, then $Q'$ $\frac\alpha8$-robustly guards all of $P$.
\end{observation}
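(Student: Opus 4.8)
The plan is to use the single structural fact that makes sampling-based arguments of this kind work: the arrangement $\A$ is, by construction, a common refinement of all the regions $\VP_{\alpha/8}(q)$, $q\in Q$. Indeed, $\A$ is the subdivision of $P$ induced by the boundary curves of the family $\{\VP_{\alpha/8}(q)\mid q\in Q\}$, so $\partial\,\VP_{\alpha/8}(q)$ is a union of edges of $\A$ for each fixed $q$. Hence every (open) cell $\sigma$ of $\A$ is connected and avoids $\partial\,\VP_{\alpha/8}(q)$, which forces $\sigma$ to lie either in the interior of $\VP_{\alpha/8}(q)$ or in the exterior of $\overline{\VP_{\alpha/8}(q)}$; in particular, if the sample point $s_\sigma\in S$ chosen in $\sigma$ belongs to $\VP_{\alpha/8}(q)$, then the whole cell satisfies $\sigma\subseteq\VP_{\alpha/8}(q)$. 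This observation is immediate from the definition of the arrangement and is the only thing doing real work.

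With this in hand, I would fix an arbitrary point $p\in P$ and choose a cell $\sigma$ of $\A$ whose closure contains $p$ (one exists since $Q$ is finite, so $\A$ has finitely many cells whose closures cover $P$). Let $s_\sigma$ be its sample point. Since $Q'$ $\frac\alpha8$-robustly guards $S$, there is some $q\in Q'$ with $s_\sigma\in\VP_{\alpha/8}(q)$, and then the structural fact above yields $\sigma\subseteq\VP_{\alpha/8}(q)$. If $p$ lies in the open cell $\sigma$ we are done immediately. The one thing left is the case where $p$ lies on a cell boundary, so that we only know $p\in\overline{\sigma}$ and hence $p\in\overline{\VP_{\alpha/8}(q)}$; this is the step I expect to be the (minor) main obstacle.

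To close that gap I would show that $\VP_{\alpha/8}(q)$ is a closed subset of $P$, so that $\overline{\VP_{\alpha/8}(q)}=\VP_{\alpha/8}(q)$ and therefore $p\in\VP_{\alpha/8}(q)$, i.e., $q\in Q'$ $\frac\alpha8$-robustly guards $p$. Closedness follows from \Cref{obs:ice-cream-cone-in-P}: $q$ $\frac\alpha8$-robustly guards a point iff the ``ice cream cone'' from that point to $q$ is contained in the closed polygon $P$; this cone varies continuously with its apex, so if $p_i\to p$ with every cone from $p_i$ to $q$ inside $P$, then each point of the limiting cone from $p$ to $q$ is a limit of points of $P$ and hence lies in $P$. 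Putting the three steps together, every $p\in P$ is $\frac\alpha8$-robustly guarded by some $q\in Q'$, which is exactly the statement of the observation.
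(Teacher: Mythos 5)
Your proof is correct and takes essentially the same route as the paper's one-line argument: a cell of the arrangement $\A$ cannot cross the boundary of any $\VP_{\alpha/8}(q)$, so guarding the interior sample point guards the whole cell. Your extra step showing that $\VP_{\alpha/8}(q)$ is closed (via \Cref{obs:ice-cream-cone-in-P}) to handle points on cell boundaries is a detail the paper's proof silently glosses over, and it is handled correctly.
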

\begin{proof}
	Any guard that $\alpha/8$-robustly guards a point in the interior of a cell in $\A$ must $\alpha/8$-robustly guard the entire cell; otherwise, this cell would be subdivided.
\end{proof}

By \Cref{obs:robust-sample-points}, it is enough to $\alpha/8$-robustly guard $S$ from points in $Q$. We run \Cref{alg:robust-greedy-approx} on $P$ and the set $S$, and by \Cref{thm:robust-greedy-approx-discrete} we get a set $G$ of $O(\alpha^{-2})|\OPT^S_\alpha|\le O(\alpha^{-2})|\OPT_\alpha|$ points that $\alpha/2$-robustly guard $S$. However, in order to guard the entire polygon $P$, we need to guard $S$ from points in $Q$ only.
So, we replace each point $g\in G$ by the set $Q(g)$ from \Cref{lem:robust-vision-candidates}. For any $g\in P$ we have $|Q(g)|=O(\alpha^{-4})$, so we obtain a set $Q'$ of $O(\alpha^{-6})|OPT_\alpha|$ points from $Q$ that $\alpha/8$-robustly guard $S$. By \Cref{obs:robust-sample-points}, the set $Q'$ $\alpha/8$-robustly guards $P$.

Computing the set $Q$, the arrangement $\A$, and the set $S$ can be done in $\poly(n,|\OPT_\alpha|)$ time by \Cref{clm:computing_vis_1_0} and \Cref{clm:Q_size}. By \Cref{thm:robust-greedy-approx-discrete}, the running time of \Cref{alg:robust-greedy-approx} is $\poly(n,|\OPT_\alpha|)$. 

In fact, if we only want to return a constant factor approximation of $|\OPT_\alpha|$, we can do so in $\poly(n)$ time, as follows. By \Cref{obs:purple_regions}, if the number of additional disks placed in a purple region $\Pi$ is $k>8$, then the number of guards placed in $\Pi$ in an optimal solution is $\Omega(k)$. 
Moreover, except for $O(1)$ of these guards, none of them sees points outside $\Pi$.
Therefore, we do not need to compute these guards explicitly in order to produce the rest of the guards, and we only record their number (which is a simple function of the dimensions of the purple region) in $O(1)$ time. Thus, we can cut the inner part (region inside a purple region excluding 4 disks at each of its ends; see \Cref{fig:inner_purple})
of those purple regions from $P$, and obtain a set of disjoint subpolygons having in total $O(\alpha^{-4})\cdot n$ candidate grid points. By applying the same algorithm separately on each subpolygon and then combining the solutions, we only loose a constant number of guards per purple region. Therefore we can output a constant factor approximation of $|\OPT_\alpha|$ in $\poly(n)$ time. To produce an explicit solution from this implicit representation, we only need to run the algorithm that computes the set of guards in each of the purple regions, in time linear in their number.

\begin{figure}[ht]
    \centering
    \includegraphics[scale=0.85]{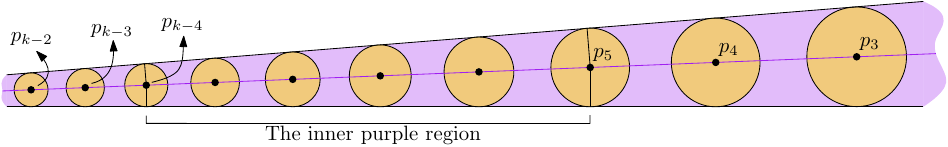}
    \caption{The inner part of a purple region with $k>8$ added medial vertices.}     \label{fig:inner_purple}
\end{figure}

\begin{theorem}\label{thm:robust-greedy-approx}
    Given a polygon $P$ with $n$ vertices, one can compute in $\poly(n)$ time the cardinality of, and an implicit representation of, a set of $O(\alpha^{-6})|\OPT_\alpha|$ points that $\alpha/8$-robustly guard $P$, where $\OPT_\alpha$ is a minimum-cardinality set of guards that $\alpha$-robustly guard $P$. In additional time $O(|\OPT_\alpha|)$ we can output an explicit set of such points.
\end{theorem}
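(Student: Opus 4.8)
The plan is to chain together the three ingredients already in hand — the candidate set $Q$ (\Cref{thm:robust-vision-candidates}, \Cref{clm:Q_size}), the discrete approximation algorithm (\Cref{thm:robust-greedy-approx-discrete}), and the polynomial-time visibility-region computations (\Cref{clm:computing_vis_1_0}) — through a sample-point discretization of $P$, and then to do extra bookkeeping on purple regions to kill the dependence of the running time on $|\OPT_\alpha|$.

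First I would build the candidate set $Q=M\cup\bigcup_{v\in M}Q_v$ exactly as in \Cref{thm:robust-vision-candidates}, but running the entire medial-axis construction (purple-region refinement and the $\Theta(\alpha^2R_v)$ grid resolution) at scale $\alpha/2$ instead of $\alpha$, so that \Cref{lem:robust-vision-candidates} reads $\VP_{\alpha/2}(g)\subseteq\bigcup_{q\in Q(g)}\VP_{\alpha/8}(q)$ for every $g\in P$, with $|Q(g)|=O(\alpha^{-4})$, and \Cref{clm:Q_size} gives $|Q|=O(\alpha^{-4})(n+|\OPT_\alpha|)$. Then I compute each region $\VP_{\alpha/8}(q)$, $q\in Q$, in $\poly(n)$ time via \Cref{clm:computing_vis_1_0}, form their arrangement $\A$, and take one interior sample point per cell, obtaining a set $S$ of $\poly(n,|\OPT_\alpha|)$ points. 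By \Cref{obs:robust-sample-points}, it now suffices to find a small $Q'\subseteq Q$ that $\alpha/8$-robustly guards $S$; such a $Q'$ automatically $\alpha/8$-robustly guards all of $P$.

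Next I run \Cref{alg:robust-greedy-approx} on $(P,S)$. By \Cref{thm:robust-greedy-approx-discrete} this produces, in $\poly(n,|\OPT_\alpha|)$ time, a set $G$ of $O(\alpha^{-2})|\OPT_\alpha^S|$ points that $\alpha/2$-robustly guard $S$; since $\OPT_\alpha$ is itself a valid $\alpha$-robust guard set for $S\subseteq P$, we have $|\OPT_\alpha^S|\le|\OPT_\alpha|$. The points of $G$ need not lie in $Q$, so I replace each $g\in G$ by $Q(g)$: by the scaled \Cref{lem:robust-vision-candidates}, every point of $S$ that $g$ $\alpha/2$-robustly guards is $\alpha/8$-robustly guarded by some point of $Q(g)\subseteq Q$. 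Hence $Q':=\bigcup_{g\in G}Q(g)\subseteq Q$ has size $O(\alpha^{-4})\cdot O(\alpha^{-2})|\OPT_\alpha|=O(\alpha^{-6})|\OPT_\alpha|$ and $\alpha/8$-robustly guards $S$, therefore all of $P$ by \Cref{obs:robust-sample-points}. This already establishes the approximation guarantee, but in $\poly(n,|\OPT_\alpha|)$ rather than $\poly(n)$ time.

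The remaining — and genuinely delicate — part is to remove the $|\OPT_\alpha|$ from the running time and to justify the implicit/explicit split, and here \Cref{obs:purple_regions} is the lever: if a purple region $\Pi$ receives $k>8$ refinement disks, then $\VP_\alpha(g)$ meets at most four of them for any $g$, so $\OPT_\alpha$ must place $\Omega(k)$ guards inside $\Pi$, and all but $O(1)$ of these see nothing outside $\Pi$. I would therefore (i) cut out the inner part of every long purple region (everything except the four extreme disks at each end; see \Cref{fig:inner_purple}) and record its guard count as an explicit closed-form function of the region's dimensions in $O(1)$ time, and (ii) run the procedure above on each of the $O(1)$-many remaining subpolygons, which together carry only $O(\alpha^{-4}n)$ candidate grid points and hence run in $\poly(n)$ time; reassembling loses only $O(1)$ guards per purple region, which is absorbed into the $\Omega(k)$ lower bound, so the cardinality is still $O(\alpha^{-6})|\OPT_\alpha|$ and is output in $\poly(n)$ time with an implicit description, while instantiating the closed-form placement inside each cut region yields an explicit set in additional $O(|\OPT_\alpha|)$ time. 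I expect the main obstacle to be precisely this last step — verifying that excising purple-region interiors is truly lossless (that the subproblem guards together with the recorded counts recombine into a valid $\alpha/8$-robust guarding of all of $P$, with no double counting or coverage gap at the four-disk seams), together with the constant-tracking through the chain $\alpha\to\alpha/2\to\alpha/8$ so that the candidate set $Q$ is built at exactly the resolution that makes \Cref{lem:robust-vision-candidates} land on $\alpha/8$.
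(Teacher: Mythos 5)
Your proposal follows essentially the same route as the paper's own proof: build the candidate set $Q$ at a rescaled robustness parameter, discretize $P$ via one sample point per cell of the arrangement of the regions $\VP_{\alpha/8}(q)$, run \Cref{alg:robust-greedy-approx} on the samples, replace each returned guard $g$ by $Q(g)$ using \Cref{lem:robust-vision-candidates} and \Cref{obs:robust-sample-points}, and excise the interiors of long purple regions (counting their guards in closed form via \Cref{obs:purple_regions}) to bring the running time down to $\poly(n)$. Your choice of building $Q$ at scale $\alpha/2$ so that \Cref{lem:robust-vision-candidates} lands exactly on $\alpha/8$ is the internally consistent version of the paper's constant-tracking, and the seam issue you flag at the purple-region cuts is exactly the point the paper also treats only informally.
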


%\newpage
\section{Extending the definition and reducing the number of guards}\label{sec:smaller-solution}

In \Cref{sec:robust-guarding-algorithm} we have considered a bicriteria approximation for $\alpha$-robust guarding, where we obtain a set $G$ of size $O(\alpha^{-6})\OPT$, such that $P$ is $\alpha/8$-robustly guarded by the points in $G$. 
The dependency on $\alpha$ in our approximation algorithm may seem too large at first. In fact, one may think that using a subset of the medial vertices would result in a good approximation. However, there are some cases in which the restriction to medial axis guards may lead to a linear (in $n$) number of guards. We provide the following example (\Cref{fig:counter2}) to show where such an approach might fail.

\begin{figure}[h]
    \centering
    \includegraphics[page=2,scale=1.8]{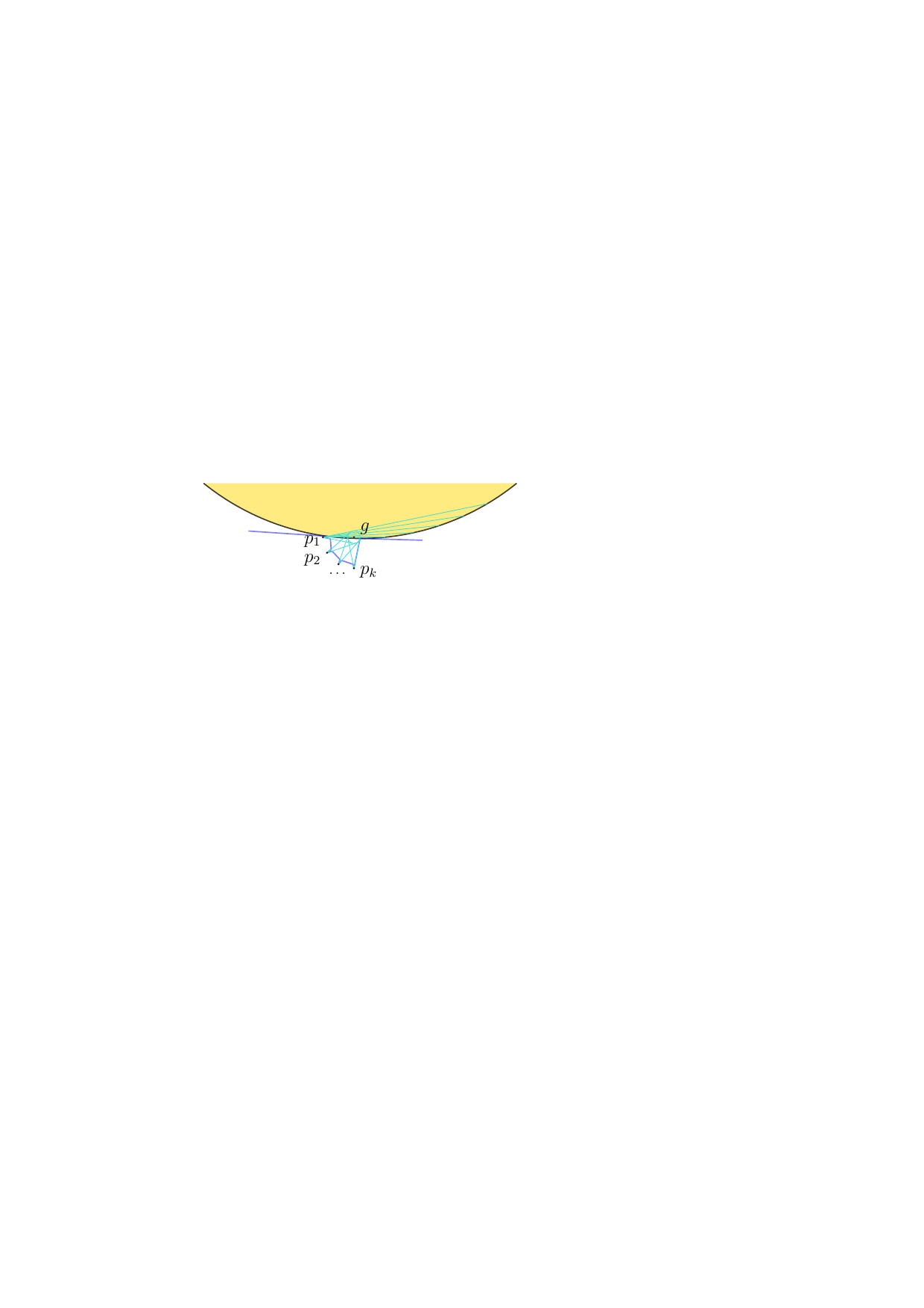}
    \caption{The point $g$ $\alpha$-robustly guards $p_1,\dots,p_k$, and this is the only point that guards all of them robustly. However, $g$ is far from any medial vertex. In addition, any $p_i$ sees a very small portion of the large medial disk (in yellow), which intersects all the ice cream cones.}
    \label{fig:counter2}
\end{figure}

Nonetheless, it is possible to decrease the approximation factor on the number of guards to $O(\alpha^{-3})$, by relaxing the definition of robust guards, such that the disk around the guard is not entirely visible to the points that are robustly guarded by it.

\subsection{A more general definition for robust guarding}
Consider the following definition, which generalizes the definition of $\alpha$-robust guarding.
\begin{definition}\label{def:robust_visibility_general}
Given a polygon $P$ and parameters $0<\beta_{guard}, \beta_{point},\alpha\le 1$, we say that a point $g\in P$ \dfn{$(\beta_{guard}, \beta_{point},\alpha)$-robustly guard} another point $p\in P$ if $\overline{gp}\in P$, and %for any $0<\alpha'\le\alpha$:
\begin{enumerate}
    \item the area of $\VP(p)\cap D(g,\alpha\cdot\|p-g\|)$ is at least $\beta_{guard}\cdot \pi (\alpha\cdot\|p-q\|)^2$, and
    \item the area of $\VP(g)\cap D(p,\alpha\cdot\|p-g\|)$ is at least $\beta_{point}\cdot \pi (\alpha\cdot\|p-g\|)^2$
\end{enumerate}
  In other words, $p$ sees $\beta_{guard}$-fraction of the area of $D(g,\alpha\cdot\|p-g\|)$, and $g$ sees $\beta_{point}$-fraction of the area of $D(p,\alpha\cdot\|p-g\|)$.
\end{definition}

\begin{figure}[h]
	\centering
	\includegraphics{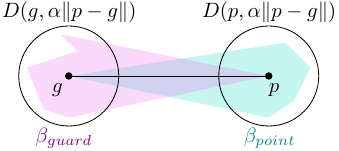}
\end{figure}

The definition of $\alpha$-robust guarding is equivalent to $(1,0,\alpha)$-robust guarding. In addition, we generalize the notation of $\VP_\alpha$ as follows. Denote by $\VP^{\alpha}_{\beta_{guard}, \beta_{point}}(g)$ the \dfn{$(\beta_{guard}, \beta_{point},\alpha)$-robust visibility region} of $g$, i.e. all the points in $P$ that $g$ guards $(\beta_{guard}, \beta_{point},\alpha)$-robustly. 
Here, $\VP_\alpha$ is equivalent to $\VP^{\alpha}_{1,0}$, and $\VP^{-1}_\alpha$ is equivalent to $\VP^{\alpha}_{0,1}$. In this section, we will use these generalized new notations for robust guarding.

\subsection{A set of $O(\alpha^{-2})$ candidate guards for $(1/16,0,\alpha)$-robust guarding}

We now show that it is possible to reduce the approximation factor on the number of robust guards to $O(\alpha^{-3})$, by using $(1/16,0,\alpha/2)$-robust guards instead of $(1,0,\alpha/8)$-robust guards. The idea is to use a smaller set of candidate guards, and then follow the same steps of the greedy algorithm as in \Cref{sec:robust-guarding-algorithm}.

Consider the same medial axis based decomposition as in \Cref{sec:medial-axis}.
For each $v\in M$ we place on the boundary of $D_v$ a set $\hat{Q}_v$ of $\frac{c}{\alpha}$ evenly spaced candidate points. The distance between two consecutive candidate points on $\partial D_v$ is $2\pi R_v\cdot\frac{\alpha}{c}$. Denote $\hat{Q}=M\cup\bigcup_{v\in M}\hat{Q}_v$.
For a point $g\in P$, denote by $\hat{Q}(g)$ the set of candidate points (i.e., grid points) that belong to the disks associated with $g$.

\begin{lemma}\label{lem:robust-vision-candidates2}
	For any $g\in P$, (i) $|\hat{Q}(g)|=O(\alpha^{-2})$, and (ii) $\VP^\alpha_{(1,0)}(g)\subseteq\cup_{q\in \hat{Q}(g)}\VP^{\alpha}_{(1/16,0)}(q)$.
\end{lemma}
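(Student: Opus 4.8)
The plan is to prove the two parts of \Cref{lem:robust-vision-candidates2} separately, reusing the geometric skeleton of \Cref{lem:robust-vision-candidates} but exploiting the weaker coverage requirement of $(1/16,0,\alpha)$-robust guarding.

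For part (i), the count $|\hat Q(g)| = O(\alpha^{-2})$ is almost immediate from the construction: each disk $D_v$ carries $|\hat Q_v| = c/\alpha = O(\alpha^{-1})$ evenly spaced boundary points, and by \Cref{obs:boundary-cells} together with the ``associating points with at most two medial disks'' rule, $g$ is associated with at most two medial disks, so $|\hat Q(g)| \le 2\cdot O(\alpha^{-1}) + 2 = O(\alpha^{-1})$. (The statement claims $O(\alpha^{-2})$, which is a weaker bound and certainly holds; presumably the authors state it that way because the $(\beta,\alpha)$-relaxation effectively places $\Theta(\alpha^{-1})$ points per disk on the boundary only, versus the $\Theta(\alpha^{-4})$ interior grid of $Q_v$, and they are being conservative — in any case the bound follows directly from the construction and \Cref{obs:boundary-cells}.)

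For part (ii), I would follow the case analysis in the proof of \Cref{lem:robust-vision-candidates} verbatim up to the point where we know that, given $g$ $\alpha$-robustly guarding $p$, either $p$ is $\alpha$-robustly seen by the center of one of the medial disks associated with $g$ (in which case that center lies in $\hat Q(g)$ and we are done, since $(1,0,\alpha) \Rightarrow (1/16,0,\alpha)$), or the segment $\overline{pg}$ crosses the boundary of some associated disk $D_v$. In the latter case, set up the cone $K$ from $p$ spanned by $\rho_0$ (toward $g$) and a tangent ray $\rho^a$ of $D(g,\alpha\|p-g\|)$, exactly as before; by \Cref{obs:largest_disk} we have $R_v \ge \alpha\|p-g\|$, and $D_v$ meets $\overline{pg}$. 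The key new point is that $\hat Q_v$ consists of boundary points spaced $2\pi R_v \cdot \alpha/c$ apart. I would argue that the portion of $\partial D_v$ that $p$ sees and that lies inside the ice cream cone from $p$ to $g$ subtends an arc of length $\Omega(R_v)$ — using that $\angle apb = 2\theta = \Theta(\alpha)$ and $\|p-g\| \le R_v/\alpha$, so the cone ``captures'' an angular width at least comparable to $\theta$ at the center $v$ — hence for small enough $c$ it must contain at least one point $q$ of $\hat Q_v$. Then I must check that this $q$ actually $(1/16,0,\alpha)$-robustly guards $p$: $\overline{pq} \subset P$ because $q$ lies on a visible arc inside the ice cream cone, and a constant fraction (here $1/16$) of $D(q,\alpha\|p-q\|)$ is visible from $p$ because $q$ is on $\partial D_v$ and $D_v \subset P$, so at least the half-disk of $D(q,\cdot)$ lying inside $D_v$ is in $P$, and a further constant fraction of that lies within the visibility cone from $p$; a careful but routine estimate with the Maclaurin bounds already tabulated in the paper gives the $1/16$ constant with room to spare.

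The main obstacle I expect is the last step: showing that a boundary point $q \in \hat Q_v$ inside the relevant arc genuinely sees a $1/16$-fraction of $D(q,\alpha\|p-q\|)$ and that $\overline{pq} \subset P$. Unlike the interior grid points of the original $Q_v$ (where one exhibited a whole square inside $K \cap D_v$ around the candidate), here $q$ sits on $\partial D_v$, so roughly half of the small disk $D(q,\alpha\|p-q\|)$ lies outside $D_v$ and its visibility from $p$ is not automatic; one has to argue the outside-half is still inside $P$ locally, or simply concede that only the inside-half (which is in $D_v \subseteq P$ hence visible-to-itself, and one checks visibility from $p$ via the cone) need be counted — that inside-half, intersected with the cone of directions from $p$ that avoid the reflex features, still has area at least $\tfrac{1}{16}\pi(\alpha\|p-q\|)^2$. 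Threading that estimate — choosing the spacing constant $c$ so the captured arc is wide enough while the $1/16$ fraction survives — is the delicate part; everything else is a transcription of the earlier argument. I would isolate this estimate as a short auxiliary claim (analogous to \Cref{claim:quad}) and defer its detailed calculation to \Cref{sec:proofs_candidates}.
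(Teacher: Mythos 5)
Your overall architecture matches the paper's: part (i) is immediate from the construction (and your sharper count of $O(\alpha^{-1})$ per guard is in fact what makes the final $O(\alpha^{-3})|\OPT_\alpha|$ bound of \Cref{thm:refined-robust} work out), and for part (ii) the paper likewise reuses the case analysis of \Cref{lem:robust-vision-candidates} verbatim and replaces \Cref{lem:cone_intersecting_disk} by a new cone-intersecting-disk lemma for boundary candidates, found via exactly your arc-length-versus-spacing argument (an inscribed angle of $\theta/2$ at $p$ cuts off an arc of central angle at least $\theta$, hence length at least $\theta R_v\ge \alpha R_v$, which exceeds the spacing $2\pi R_v\alpha/c$ for $c\ge 2\pi$; note your ``$\Omega(R_v)$'' should read $\Omega(\alpha R_v)$).

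The genuine gap is the step you defer as ``a careful but routine estimate'': this is the entire content of the new lemma (\Cref{lem:cone_intersecting_disk2}), and the route you sketch for it does not go through as stated. First, the ``half-disk of $D(q,\alpha\|p-q\|)$ lying inside $D_v$'' is not a half-disk in general: since $\|p-q\|$ can be as large as roughly $R_v/\alpha$, the radius $\alpha\|p-q\|$ of the small disk can be comparable to $R_v$ itself, so $D(q,\alpha\|p-q\|)\cap D_v$ can be far less than half the small disk. Second, visibility from $p$ is only guaranteed inside the cone $K$, and the candidate $q$ found on the arc may lie essentially on $\rho_0$, in which case half of the small disk is outside $K$ on the wrong side of $\rho_0$ and another slab is cut off by $\rho_1$ (the small disk is tangent to $\rho_1$ in that extremal position); multiplying these losses naively does not obviously leave a $1/16$ fraction. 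The paper instead exhibits an explicit ``pizza slice'': the sector of the disk $D(q,\|q-w\|)$, where $w$ is the foot of the perpendicular from $q$ to $\rho_1$ (so $\|q-w\|=\sin\frac{\theta}{2}\|p-q\|$, about half of $\alpha\|p-q\|$), spanning angle $\frac{\pi}{2}-\theta$ towards $p$; this sector lies in $K\cap D_v$, hence is seen by $p$, and its area works out to $\frac{\pi-2\theta}{8\pi(1+\cos\theta)}$ of the area of $D(q,\alpha\|p-q\|)$, which is barely above $\frac{1}{16}$ --- so the constant is essentially tight, not achieved ``with room to spare.'' Without this (or an equivalent) explicit construction and computation, the lemma is not proved.
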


The proof for \Cref{lem:robust-vision-candidates2} follows the lines of the proof for \Cref{lem:robust-vision-candidates}, with one change - instead of using \Cref{lem:cone_intersecting_disk} we use \Cref{lem:cone_intersecting_disk2}, which we prove below.

\begin{lemma}\label{lem:cone_intersecting_disk2}
    Let $K$ be a cone defined by two rays $\rho_0,\rho_1$ originated at $p$ with small angle $\theta$. If both $\rho_0,\rho_1$ intersect $D_v$, and $p$ sees $K\cap D_v$, then there exists a point in $\hat{Q}_v$ that $(1/16,0,\alpha)$-robustly guards $p$.
\end{lemma}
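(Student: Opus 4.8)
The plan is to reuse the geometric scaffolding built in the proof of \Cref{lem:cone_intersecting_disk}, but to exploit the weaker requirement: we no longer need a grid point $q$ whose \emph{entire} disk $D(q,\tfrac\alpha4\|p-q\|)$ lies inside $K\cap D_v$; we only need a candidate point $q\in\hat Q_v$ on $\partial D_v$ such that $p$ sees at least a $1/16$ fraction of $D(q,\alpha\|p-q\|)$. First I would set up the same coordinates: put $\rho_0$ on the $x$-axis, $\rho_1$ above it, and let $\rho_\gamma$ be the ray from $p$ at angle $\gamma\theta$. As in \Cref{lem:cone_intersecting_disk}, if $v$ lies between $\rho_{1/4}$ and $\rho_{3/4}$ then the center $v$ itself already works (one of the $\hat Q_v$ points is irrelevant — actually we can take $v\in M\subseteq\hat Q$), since $R_v\ge\frac\alpha2\|p-v\|$ forces a full half-disk-worth of $D(v,\alpha\|p-v\|)\cap D_v$ to be on the $p$-side, and $p$ sees $K\cap D_v$; so the surviving case is $v$ above $\rho_{3/4}$ (the symmetric case below $\rho_{1/4}$ is identical).

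Next I would locate a good candidate on the arc. The key point is that $\rho_1$ and $\rho_0$ both cut $D_v$, so the chord of $D_v$ cut off by $K$ subtends an angle $\Theta(\theta)$ at $p$; pushing this to the circle, the arc of $\partial D_v$ that lies inside $K$ has length $\Theta(\theta R_v)=\Theta(\alpha R_v)$ — more precisely at least $c'\alpha R_v$ for an absolute constant $c'$, by the estimates $\sin(\theta/4)\ge\theta/4\cdot(\text{const})$ and $R_v\ge\frac\alpha2\|p-v\|$ already used in \Cref{lem:cone_intersecting_disk}. Since $\hat Q_v$ consists of $c/\alpha$ evenly spaced points with spacing $2\pi R_v\alpha/c$, choosing $c$ large enough (relative to $c'$, $2\pi$) guarantees that at least one point $q\in\hat Q_v$ lies on this in-cone arc, and moreover lies in the ``middle third'' of it, i.e.\ on $\partial D_v$ strictly between $\rho_{1/4}$ and $\rho_{3/4}$ and at angular distance $\Omega(\alpha)$ from both $\rho_0$ and $\rho_1$ along the circle. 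Take $q$ to be such a point; then $\overline{pq}\subset K\cap D_v$ and $p$ sees $q$.

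It remains to show $p$ sees a $1/16$-fraction of $D(g,\alpha\|p-q\|)$ — I mean $D(q,\alpha\|p-q\|)$. Here I would argue that the wedge of directions from $p$ that stay inside $K$ near $q$, intersected with the angular extent subtended by $D(q,\alpha\|p-q\|)$ at $p$, already captures a fixed-fraction sub-disk. Concretely: $D(q,\alpha\|p-q\|)$ subtends angle $2\arcsin\alpha=2\theta$ at $p$ (exactly as in \Cref{obs:ice-cream-cone-angle}); the portion of this disk lying on the far side (from $p$) of the chord tangent to the two rays from $p$ at angular offset $\theta/2$ is a circular segment whose area is a constant fraction (one can check it is at least $1/8$ by a direct area computation of a circular segment cut at half the half-angle, leaving comfortable slack for $1/16$) of $\pi(\alpha\|p-q\|)^2$; and since $q$ sits at angular distance $\ge\theta/4$ from each boundary ray of $K$ and inside $D_v$ with room to spare (because $R_v\ge\frac\alpha2\|p-q\|$ gives $D(q,\tfrac\alpha4\|p-q\|)\subset D_v$, and $\alpha\le\tfrac12$ so $\tfrac\alpha2\|p-q\|$-balls behave tamely), that circular segment, when shrunk by the visible-cone constraint, still lies inside $K\cap D_v$, hence inside $P$, hence is seen by $p$. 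Comparing areas: $\tfrac18\cdot(\text{fraction lost to the cone clipping, which is }\ge\tfrac12)\ge\tfrac1{16}$.

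\textbf{Main obstacle.} The delicate part is the last step's bookkeeping: making the constant $1/16$ come out honestly. The area of a circular segment of $D(q,r)$ cut off at angular width $\theta/4$-to-$3\theta/4$ as seen from $p$ depends on where $p$ is relative to $q$ and on $\theta$, and one has to verify the visible segment is a uniform fraction of the disk uniformly over all admissible configurations (in particular as $p\to\partial D_v$, where some triangles degenerate, and as $\theta\to0$). I expect the clean way is to normalize $\|p-q\|=1$, note $r=\alpha=\sin\theta$, and show the visible region contains a disk of radius $\Omega(\alpha)$ — centered at the point $o$ on $\overline{pq}$ at distance $\tfrac\alpha4$ from $q$, exactly the $o,w_4,z_1,z_2$ construction of \Cref{lem:cone_intersecting_disk} — so that the visible area is $\ge\pi(\tfrac{c''\alpha}{1})^2$ versus the full $\pi\alpha^2$, giving the fixed fraction once $c''\ge1/4$; choosing the spacing constant $c$ and tracing the Maclaurin bounds listed in \Cref{lem:cone_intersecting_disk} pins down $c''$. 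I would push these calculations to \Cref{sec:proofs_candidates} alongside \Cref{claim:quad} and \Cref{clm:angle_is_large}, citing them here.
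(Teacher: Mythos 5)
Your first half — locating a candidate of $\hat{Q}_v$ on the arc of $\partial D_v$ cut off by the cone, via the inscribed-angle/arc-length estimate and the case split on the position of $v$ relative to $\rho_{1/4},\rho_{3/4}$ — is exactly what the paper does (the paper takes the candidate $q$ on the arc between the exit points of $\rho_0$ and $\rho_{1/2}$; your ``middle third'' variant is equivalent up to the choice of the spacing constant $c$). The gap is in the final step, which you yourself flag as the main obstacle: neither of your two proposed mechanisms for the $1/16$ bound goes through as described. The constant here is \emph{tight}, not slack: the best achievable fraction with constructions of this type tends to exactly $1/16$ as $\theta\to 0$, so the phrase ``comfortable slack'' is the wrong instinct. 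Concretely, your inscribed-disk route is doomed quantitatively: a disk of radius $r$ seen by $p$ inside $D(q,\alpha\|p-q\|)$ certifies the fraction $r^2/(\alpha\|p-q\|)^2$, so you need $r\ge\tfrac{\alpha}{4}\|p-q\|$ with \emph{equality} already giving only $1/16$; but every constraint you must respect strictly shrinks $r$ below that — \Cref{obs:disk_in_cone} gives radius $\tfrac{\alpha}{4}\|p-o\|=\tfrac{\alpha}{4}(1-\Theta(\alpha))\|p-q\|$ since $o\ne q$, and, more seriously, the disk must lie in $D_v$ to inherit visibility from the hypothesis, yet $q\in\partial D_v$ and the chord of $D_v$ along $\overline{pq}$ can be arbitrarily short (e.g.\ when $\rho_0$ is nearly tangent to $D_v$ and $v$ sits just above $\rho_1$), so a disk centered on $\overline{pq}$ near $q$ need not be contained in $D_v$ at all. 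The circular-segment alternative is not carried out, and the claim that it yields ``at least $1/8$'' is unsupported.

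The paper's resolution is a different region: a sector (``pizza slice'') of angle $\tfrac{\pi}{2}-\theta$ centered at the candidate $q$ \emph{itself}, of radius $\|q-w\|$ where $w$ is the foot of the perpendicular from $q$ to $\rho_1$; because $q$ is chosen between $\rho_0$ and $\rho_{1/2}$, this radius is at least $\sin\tfrac{\theta}{2}\,\|p-q\|\approx\tfrac{\alpha}{2}\|p-q\|$, and the sector is argued to lie in $K\cap D_v$ (it opens from the inward direction $\overline{qw}$ toward $p$, staying above $\rho_0$ and on the $D_v$ side since $w\in D_v$ and $\rho_1$ is tangent to $D(q,\|q-w\|)$). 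A quarter-turn sector of a half-radius disk has area $\approx\tfrac{1}{4}\cdot\pi\bigl(\tfrac{\alpha}{2}\|p-q\|\bigr)^2=\tfrac{1}{16}\pi(\alpha\|p-q\|)^2$, which is what delivers the constant; an inscribed quarter-radius disk cannot. To repair your write-up you would need to switch to this sector (or an equivalent region anchored at $q$ on $\partial D_v$ rather than at an interior point of $\overline{pq}$) and redo the ratio computation exactly, as the paper does via $\sin^2\tfrac{\theta}{2}=\tfrac{1-\cos\theta}{2}$.
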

\begin{proof}
    Assume for simplicity that $\rho_0$ lies on the $x$-axis, and $\rho_1$ lies above it. For $0<\gamma<1$, denote by $\rho_{\gamma}$ the ray from $p$ between $\rho_0$ and $\rho_1$ with angle $\gamma\cdot\theta$ from $\rho_0$.
    First, by \Cref{obs:disk_in_cone} for any point $q$ that lies in $K$ between $\rho_{1/4}$ and $\rho_{3/4}$, the disk $D(q,\frac{\alpha}{4}|p-q|)$ is contained in $K$. In addition, if $v$ lies in $K$, then since both $\rho_0,\rho_1$ intersect $D_v$, we get that $R_v$ is at least the distance between $v$ and one of $\rho_0,\rho_1$. As in the proof of \Cref{obs:disk_in_cone}, we get that $R_v\ge \sin\frac\theta2\|p-v\|>\frac{\sin\theta}{2}\|p-v\|=\frac\alpha2\|p-v\|$. Therefore, if $v$ lies between $\rho_{1/4}$ and $\rho_{3/4}$, then clearly $D(v,\frac{\alpha}{4}\|p-v\|)$ is contained in $K\cap D_v$. 
    We thus assume w.l.o.g. that $v$ lies above $\rho_{3/4}$ (the case when $v$ lies below $\rho_{1/4}$ is symmetric).

    \begin{figure}[h]
    \centering
    \includegraphics[scale=1.2,page=1]{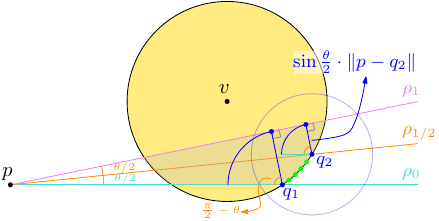}
    %\caption{An illustration for the proof of \Cref{lem:cone_intersecting_disk2}.}    \label{fig:cone_intersecting_disk2}
    \end{figure}
    
    Let $q_1$ (resp. $q_2$) be the point where $\rho_0$ (resp. $\rho_{1/2}$) exits $D_v$.% (see \Cref{fig:cone_intersecting_disk2}). 
    The length of the arc of $D_v$ that subtends an angle $\theta/2$ at the boundary of $D_v$ is $\theta R_v\ge \alpha R_v>\frac{\alpha}{c}2\pi R_v$ for $c\ge2\pi$. Since the angle between $\rho_{1/2}$ and $\rho_0$ is $\theta/2$, the length of the arc that $p$ sees between $q_1$ and $q_2$ is at least $\frac{\alpha}{c}2\pi R_v$, and thus $p$ must see a candidate point $q$ on this arc.

    Consider the segment $\overline{qw}$ from $q$ perpendicular to $\rho_{1}$.
    Since $v$ is above $\rho_{3/4}$, $w$ lies in $D_v$. As $q$ moves from $q_1$ to $q_2$, the length of $\overline{qw}$ becomes smaller. Consider the ``pizza slice'' $Z$ in the disk centered at $q$ with radius $\|q-w\|$, defined by spanning an angle of $\frac\pi2-\theta$ from $w$ in the direction of $p$. Thus $Z$ is above $\rho_0$, and it is contained in $K\cap D_v$, because $w$ is in $D_v$, and $\rho_1$ is tangent to $D(q,\|q-w\|)$. We have $\|q-w\|=\sin\frac\theta2\cdot\|p-q\|$, and thus the area of $Z$ is
    $$\frac{\frac\pi2-\theta}{2\pi} \pi \|q-w\|^2=\frac{\frac\pi2-\theta}{2\pi} \pi (\sin\frac\theta2\cdot\|p-q\|)^2=\pi\|p-q\|^2\cdot (\frac14-\frac{\theta}{2\pi})\sin^2\frac{\theta}{2}.$$

    The area of $D(q,\alpha\|p-q\|)$ is $\pi\alpha^2\|p-q\|^2=\pi\sin^2\theta\cdot\|p-q\|^2$, and thus the fraction of area that $p$ sees in $D(q,\alpha\|p-q\|)$ is\footnote{For the first equation we used the following trigonometric identities: $\sin^2\frac{\theta}{2}=\frac{1-\cos\theta}{2}$, and $\sin^2\theta=1-\cos^2\theta$.}
    $$\frac{\pi\|p-q\|^2\cdot (\frac14-\frac{\theta}{2\pi})\sin^2\frac{\theta}{2}}{\pi\|p-q\|^2\sin^2\theta}=\frac{\pi-2\theta}{4\pi}\cdot\frac12\cdot\frac{1}{1+\cos\theta}=\frac{\pi-2\theta}{8\pi(1+\cos\theta)}.$$
    The Taylor series for $\cos\theta$ is $1-\frac{\theta^2}{2!}+\frac{\theta^4}{4!}-\frac{\theta^6}{6!}+...\le 1-\frac{\theta^2}{2}$, and we get that 
    $$\frac{\pi-2\theta}{8\pi(1+\cos\theta)}\ge \frac{\pi-2\theta}{16\pi-4\pi\theta^2}\ge\frac{1}{16}.$$
\end{proof}

To run \Cref{alg:robust-greedy-approx} in polynomial time, we need an algorithm that computes $\VP^\alpha_{(1/6,0)}$ in polynomial time. We leave this problem as an open question, and obtain the following theorem.
\begin{theorem}\label{thm:refined-robust}
    If $\VP^\alpha_{(1/6,0)}(g)$ can be computed in polynomial time, then a set of $O(\alpha^{-3})|\OPT_\alpha|$ points that $(1/6,0,\alpha/2)$-robustly guard $P$ can be computed in polynomial time.
\end{theorem}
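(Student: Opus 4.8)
The plan is to run, essentially verbatim, the pipeline of \Cref{sec:robust-guarding-algorithm}, with two twists: perform the greedy step with the \emph{strict} $\alpha$-robust notion, and switch to the relaxed notion only when passing from a finite candidate set to all of $P$. First I would apply the medial-axis decomposition of \Cref{sec:medial-axis} (including the extra medial vertices placed in purple regions) and build the smaller candidate set $\hat Q=M\cup\bigcup_{v\in M}\hat Q_v$ of \Cref{sec:smaller-solution}, but with parameter $\alpha/2$ instead of $\alpha$; then $|\hat Q_v|=O(\alpha^{-1})$, and since at most two medial disks are associated with any point, $|\hat Q(g)|=O(\alpha^{-1})$ for every $g\in P$. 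As in \Cref{clm:Q_size} one has $|\hat Q|=O(\alpha^{-1})(n+|\OPT_\alpha|)$, and, exactly as in the proof of \Cref{thm:robust-greedy-approx}, the inner parts of long purple regions can be handled implicitly so that the explicit candidate set has size $O(\alpha^{-1}n)$ and is produced in $\poly(n)$ time.

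Next I would form the arrangement $\hat\A$ of the relaxed visibility regions $\{\VP^{\alpha/2}_{(1/16,0)}(q)\mid q\in\hat Q\}$ and take one sample point per cell, obtaining a set $S$. This is the \emph{only} place the hypothesis of the theorem is used: computing $\hat\A$ and $S$ is polynomial precisely because $\VP^{\alpha/2}_{(1/16,0)}$ is assumed computable in polynomial time. The argument of \Cref{obs:robust-sample-points} then carries over unchanged: any $Q'\subseteq\hat Q$ that $(1/16,0,\alpha/2)$-robustly guards every point of $S$ must $(1/16,0,\alpha/2)$-robustly guard all of $P$, since a guard that relaxed-guards one interior point of a cell relaxed-guards the whole cell (otherwise the cell would have been subdivided). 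Now I would run \Cref{alg:robust-greedy-approx} on $(P,S)$ with the ordinary $\alpha$-robust notion; by \Cref{thm:robust-greedy-approx-discrete} this yields in polynomial time a set $G$ of $O(\alpha^{-2})|\OPT_\alpha^S|\le O(\alpha^{-2})|\OPT_\alpha|$ points (the inequality because $S\subseteq P$) that $(1,0,\alpha/2)$-robustly guard $S$, where $G$ need not lie in $\hat Q$.

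Finally I would push $G$ into $\hat Q$: replace each $g\in G$ by $\hat Q(g)$. Invoking \Cref{lem:robust-vision-candidates2} at scale $\alpha/2$ gives $\VP^{\alpha/2}_{(1,0)}(g)\subseteq\bigcup_{q\in\hat Q(g)}\VP^{\alpha/2}_{(1/16,0)}(q)$, so $\hat Q':=\bigcup_{g\in G}\hat Q(g)\subseteq\hat Q$ $(1/16,0,\alpha/2)$-robustly guards $S$, and $|\hat Q'|\le|G|\cdot\max_{g}|\hat Q(g)|=O(\alpha^{-2})\cdot O(\alpha^{-1})\cdot|\OPT_\alpha|=O(\alpha^{-3})|\OPT_\alpha|$. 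By the previous paragraph, $\hat Q'$ then $(1/16,0,\alpha/2)$-robustly guards $P$; all steps are polynomial (the sampling step by hypothesis, the rest by \Cref{clm:computing_vis_0_1}, \Cref{thm:robust-greedy-approx-discrete}, and the implicit purple-region treatment of \Cref{thm:robust-greedy-approx}), which is the assertion of the theorem (with the constant $1/16$ of \Cref{lem:cone_intersecting_disk2} in place of the $1/6$ in the statement).

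The genuinely delicate work — that a candidate set with only $O(\alpha^{-1})$ points per guard still suffices, at the cost of relaxing full visibility of $D(g,\cdot)$ to a constant fraction of it — has already been isolated in \Cref{lem:cone_intersecting_disk2} and \Cref{lem:robust-vision-candidates2}. Given those, the only real obstacle is the careful interleaving of the two visibility notions: the greedy, its witness/disjointness argument, and the fat-collection lemma are all run with the \emph{strict} notion (so \Cref{thm:robust-greedy-approx-discrete}, \Cref{lem:approx-alpha}, \Cref{lem:fat-collection} apply as-is), while the arrangement $\hat\A$, the sample set $S$, and the final candidate-replacement step all use the \emph{relaxed} notion; the containment of \Cref{lem:robust-vision-candidates2} is exactly what bridges the two. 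A secondary point requiring attention, as in \Cref{thm:robust-greedy-approx}, is re-running the purple-region bookkeeping of \Cref{clm:Q_size} so that the running time is polynomial in $n$ rather than in $|\OPT_\alpha|$; this is mechanical. I do not expect any new conceptual difficulty beyond the hypothesized polynomial-time computation of $\VP^{\alpha}_{(1/16,0)}$, which is used solely to build $S$.
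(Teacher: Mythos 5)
Your proposal is correct and matches the paper's intended argument, which is exactly to rerun the pipeline of \Cref{sec:robust-guarding-algorithm} with the smaller candidate set $\hat Q$ and \Cref{lem:robust-vision-candidates2} in place of $Q$ and \Cref{lem:robust-vision-candidates}; the paper leaves these steps implicit, and your interleaving of the strict notion (for the greedy/witness argument) with the relaxed notion (for the arrangement, the sample set, and the final candidate replacement) is the right way to fill them in. Your sharper count $|\hat Q(g)|=O(\alpha^{-1})$ (rather than the $O(\alpha^{-2})$ stated in \Cref{lem:robust-vision-candidates2}) is in fact what the $O(\alpha^{-3})$ bound requires, and your note that the constant should be the $1/16$ of \Cref{lem:cone_intersecting_disk2} rather than the $1/6$ in the statement is also correct.
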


\bibliography{refs}

\appendix

\section{Missing proofs}\label{sec:proofs}
In this section we provide complete proofs for some of the more technical claims, or slightly different variants of well-known theorems.

\subsection{Proof of the fat-collection lemma}\label{sec:proof-fat-collection}

Let $\F$ be a collection of geometric objects, together with a function $\mathrm{size}:\F\rightarrow \reals_+$. For a parameter $0<\gamma<1$, $\F$ is called a \dfn{$\gamma$-fat collection} if for every object $F\in\F$, there exists a set $H(F)$ of $O(\gamma^{-1})$ points, such that any $F'\in\F$ with $\mathrm{size}(F')\ge\mathrm{size}(F)$ and $F' \cap F \neq \emptyset$ contains a point from $H(F)$. We call $H(F)$ the set of hitting points that correspond to $F$.

It is well-known that if the objects in $\F$ are $\gamma$-fat under an appropriate definition of fatness, where $0 < \gamma < 1$ is some constant and the size of an object is its minimum enclosing ball, then $H(F)$ (as defined above) is of size $O(1)$; see, e.g.~\cite{StappenO94}.
In this paper, we need the following version of a fat-collection theorem, for $\alpha$-fat star-shaped objects in the plane. Recall that we define the \dfn{size} of a star-shaped object $P$ with respect to a given center point $o$ to be the radius of its minimum enclosing ball centered at $o$. 

Recall \Cref{lem:fat-collection}:

\fatCollection*

The proof of the lemma follows from \Cref{clm:hitting-grid} and \Cref{clm:hitting-grid-disk} below.

\begin{claim}\label{clm:hitting-grid}
	Let $G$ be a grid of side length $\frac{\alpha R}{c}$ ($c$ will be determined later). Any $\alpha$-fat star-shaped polygon of size at least $R$ contains a point from $G$.
\end{claim}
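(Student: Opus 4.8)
The plan is to reduce the claim to the following statement: there is an absolute constant $C_0$ such that every $\alpha$-fat star-shaped polygon of size at least $R$ contains a disk of radius at least $\alpha R/C_0$. Granting this, note that every point of the plane lies within $\frac{\sqrt 2}{2}\cdot\frac{\alpha R}{c}$ of a point of $G$ (half the diagonal of a grid cell); so choosing $c$ large enough (say $c\ge C_0$) makes $\alpha R/C_0\ge \frac{\sqrt2}{2}\cdot\frac{\alpha R}{c}$, whence the point of $G$ nearest the center of the disk lies inside the disk, hence inside $P$. This fixes $c$.

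It remains to find the disk. First use star-shapedness: since $\mathrm{size}(P)\ge R$ there is $q\in P$ with $\|q-o\|=\mathrm{size}(P)=:\Delta\ge R$; the segment $\overline{oq}$ lies in $P$, and $q\in\partial P$ (a point of $P$ farthest from $o$ cannot be interior). Write $r(\theta)$ for the distance from $o$ to $\partial P$ in direction $\theta$, $u_\theta$ for the corresponding unit vector, and $A_t=\{\theta:r(\theta)\ge t\}$. Apply $\alpha$-fatness at $q$ with the disks $D(q,\rho)$ for $0<\rho<\Delta$; these never contain $P$, since $o\in P$ lies at distance $\Delta>\rho$ from $q$. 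The connected component of $D(q,\rho)\cap P$ containing $q$ is contained in the ring $\{x\in P:\|x-o\|\ge\Delta-\rho\}$, whose area in polar coordinates about $o$ is $\int_{A_{\Delta-\rho}}\frac{r(\theta)^2-(\Delta-\rho)^2}{2}\,d\theta\le \Delta\rho\,|A_{\Delta-\rho}|$, using $r(\theta)\le\Delta$. Fatness then gives $\alpha\pi\rho^2\le\Delta\rho\,|A_{\Delta-\rho}|$, i.e. $|A_{\Delta-\rho}|\ge \alpha\pi\rho/\Delta$ for all $\rho\in(0,\Delta)$; in particular $|A_{\Delta/2}|\ge \alpha\pi/2$. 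Thus $P$ reaches radial distance $\ge\Delta/2$ along a set of directions of measure $\Omega(\alpha)$.

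To conclude, I would upgrade this "large measure of deep directions" to a genuine circular sector of angular width $\Omega(\alpha)$: if $A_{\Delta/2}$ were broken into arbitrarily short pieces separated by directions along which $P$ is very short, then $P$ would have arbitrarily thin spikes, which I would rule out by applying $\alpha$-fatness once more at points $o+\lambda r(\psi)u_\psi$ on the radial segments of directions $\psi$ with $r(\psi)$ large, and chaining this bound across a geometrically decreasing sequence of scales (the scale drops only by a constant factor per step, while the angular window and the "good" measure stay $\Omega(1)$ and $\Omega(\alpha)$). This yields an interval $J$ with $|J|=\Omega(\alpha)$ and $r(\theta)=\Omega(\Delta)$ on $J$, and then the sector $\{\,o+t\,u_\theta:\theta\in J,\ 0\le t\le\Omega(\Delta)\,\}\subseteq P$ contains a disk of radius $\Omega(|J|\,\Delta)=\Omega(\alpha\Delta)\ge\Omega(\alpha R)$, the disk we needed. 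The main obstacle is precisely this last step: making the "no thin spikes" propagation quantitative with absolute constants — in particular independent of $n$, which is exactly why the argument must run through fatness and star-shapedness rather than through a perimeter estimate.
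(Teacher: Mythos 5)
Your write-up has a genuine gap exactly where you say it does: the passage from ``the set $A_{\Delta/2}$ of deep directions has measure $\Omega(\alpha)$'' to ``$P$ contains a disk of radius $\Omega(\alpha R)$'' is only a plan, not a proof, and it is the entire content of the claim (a Cantor-like set of deep directions of measure $\Omega(\alpha)$ with no substantial interval is precisely what must be excluded). Everything before that point is correct --- the reduction to an inscribed-disk statement, and the polar-coordinate estimate $|A_{\Delta-\rho}|\ge\alpha\pi\rho/\Delta$ --- but a proof that stops at ``I would upgrade this \ldots by chaining across a geometrically decreasing sequence of scales'' is not a proof. For what it is worth, the gap is closable, and more cheaply than you propose: apply fatness once more at the point $x=o+(\Delta/2)u_{\theta_0}$ for some $\theta_0\in A_{\Delta/2}$, with radius $\Delta/4$. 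Any path from $x$ inside $D(x,\Delta/4)\cap P$ stays at distance $\ge\Delta/4$ from $o$, hence every direction $\theta$ it visits satisfies $r(\theta)\ge\Delta/4$, so the component of $x$ is confined to the angular connected component $I$ of $\{\theta: r(\theta)\ge\Delta/4\}$ containing $\theta_0$; its area is at most $|I|\Delta^2/2$, while fatness forces it to be at least $\alpha\pi\Delta^2/16$, so $I$ is an honest interval of length $\ge\alpha\pi/8$ on which $r\ge\Delta/4$, and the corresponding circular sector contains a disk of radius $\Omega(\alpha\Delta)$. No multi-scale iteration is needed.

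You should also know that the paper takes a different and more direct route that avoids the inscribed-disk statement entirely. It assumes for contradiction that the polygon $S$ (star center $p$, far point $q$ at distance $R$) misses every grid point; then the segment $\overline{pq}$ threads between two adjacent grid points $x,y$ at distance $\alpha R/c$ apart, and star-shapedness traps the connected component of $D(q,R/2)\cap S$ containing $q$ inside the thin wedge of points seen from $p$ through the gap between $x$ and $y$. A Thales computation bounds that wedge's area by $O(\alpha R^2/c)$, which for $c=6$ contradicts $\alpha$-fatness at $q$. Both arguments lean on the same two pillars (fatness applied at the far point, plus star-shapedness to control connectivity angularly), but the paper's version needs only one fatness application and elementary similar-triangle bounds, whereas yours, once completed, yields the stronger and independently interesting fact that every $\alpha$-fat star-shaped polygon of size $R$ contains a disk of radius $\Omega(\alpha R)$.
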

\noindent\emph{Proof.}
	Let $S$ be an $\alpha$-fat star-shaped polygon of size at least $R$, and denote by $p$ the center of $S$. Let $q$ be a point in $S$ on the boundary of $D(R,p)$ (such a point exists because $S$ has size at least $R$). Since $S$ is star shaped, $\overline{pq}\subseteq P$. Assume by contradiction that $S$ does not contain any point from $G$.
 
\begin{wrapfigure}{r}{0.4\textwidth}
	\centering
 \vspace{-20pt}
	\includegraphics[scale=1]{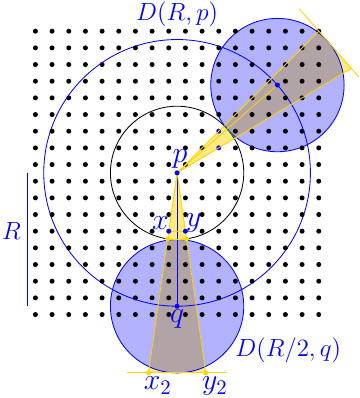}
 \vspace{-20pt}
\end{wrapfigure}
	Let $x,y\in G$ be two adjacent points in $G$ (i.e. on the same grid edge) such that $x,y\in D(R/2,p)$, the segment $\overline{pq}$ cross the edge $xy$, and $x$ is the closest such grid point to $q$.

	The two rays from $p$ to $x$ and $y$ and the tangent to $D(R/2,q)$ in the intersection point with $\overline{pq}$ are forming a triangle $\triangle px_1y_1$ of height $\frac{R}{2}$.
    Since $\|x-y\|=\frac{\alpha R}{c}$, we have by Thales's theorem
	$$\|x_1-y_1\|\le \frac{\frac{R}{2}\cdot \|x-y\|}{\frac{R}{2}-\frac{\alpha R}{c}}\le \frac{\frac{R}{2}\cdot\frac{\alpha R}{c}}{\frac{R}{2}-\frac{\alpha R}{c}}=\frac{\alpha R}{c-2\alpha},$$
	and the area of $\triangle px_1y_1$ is $$R/2\cdot\|x_1-y_1\|=\frac{1}{2c-4\alpha}\alpha R^2.$$
	
	Now consider the larger triangle $\triangle px_2y_2$ defined by the intersection of the same rays from $p$ and the tangent to $D(R/2,q)$ in the second intersection point with the line through $p,q$. The height of this triangle is $\frac{3R}{2}$, and by Thales's theorem we have that 
	$$\|x_2-y_2\|=\frac{\frac32 R\cdot \|x_1-y_1\|}{R/2}\le \frac{3\alpha R}{c-2\alpha}.$$ 
	Thus, the area of $\triangle px_2y_2$ is at most 
	$$\frac32 R\cdot \|x_2-y_2\|=\frac{9}{2c-4\alpha}\alpha R^2.$$
	
	Consider the set $Q$ of points $q'\in D(R/2,q)$ such that $\overline{pq'}$ is between $x$ and $y$, i.e., all points in $D(R/2,q)$ that $p$ sees to the right of $x$ and to the left of $y$.
	Notice that since $S$ does not contain both $x$ and $y$, all points in the connected component of $D(R/2,q)$ and $S$ that contains $q$ are in $Q$. Clearly, $Q$ is contained in $\triangle px_2y_2\setminus \triangle px_1y_1$, and therefore the area of the connected component of $D(R/2,q)$ and $S$ that contains $q$ is 
	$$\frac{9}{2c-4\alpha}-\frac{1}{2c-4\alpha}\cdot\alpha R^2=\frac{8}{2c-4\alpha}\cdot\alpha R^2,$$
	which is smaller than $\alpha\pi(\frac{R}{2})^2$ for $c=6$ and $\alpha\le\frac14$, in contradiction to the fatness of $S$.	
\qed

\begin{claim}\label{clm:hitting-grid-disk}
	Any $\alpha$-fat star-shaped polygon of size at least $R$ that intersects $D(R,o)$, contains a point from $G\cap D(4R,o)$.
\end{claim}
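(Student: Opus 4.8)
The plan is to re-run the mechanism behind \Cref{clm:hitting-grid}, but ``anchored'' so that the grid points it produces are forced to lie inside $D(4R,o)$. Write $S$ for the given $\alpha$-fat star-shaped polygon and $p$ for its center, so that $\overline{pw}\subseteq S$ for every $w\in S$ and $S$ has size at least $R$ with respect to $p$; fix a witness $z\in S\cap D(R,o)$ of the assumed intersection, and (using the size bound) a point $\hat q\in S$ with $\|p-\hat q\|=R$ and $\overline{p\hat q}\subseteq S$ (take a farthest point of $S$ from $p$, which is at distance $\ge R$, and walk back along the segment toward $p$). Recall the engine of \Cref{clm:hitting-grid}: given any $q\in S$ with $\overline{pq}\subseteq S$ and $\|p-q\|\ge R$, consider the two adjacent grid points $x,y$ that straddle $\overline{pq}$ near where it first meets $\partial D(R/2,q)$; if neither $x$ nor $y$ lies in $S$, then the connected component of $D(R/2,q)\cap S$ containing $q$ is pinched between two grid lines and hence has area $<\alpha\pi(R/2)^2$, contradicting $\alpha$-fatness of $S$ (which applies since $p\in S\setminus D(R/2,q)$). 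The computation in \Cref{clm:hitting-grid} is written for $\|p-q\|=R$, but it is insensitive to enlarging $\|p-q\|$: the pinched component has area at most $\|x_1-y_1\|\cdot\frac{\|p-q\|\,R}{\|p-q\|-R/2}$, a quantity that only decreases as $\|p-q\|$ grows beyond $R$, so the grid side length $\alpha R/6$ (with $\alpha\le 1/4$) still suffices. The extra fact we need is location bookkeeping: these grid points lie within $R/2+\alpha R/6<R$ of $q$, and at distance $\|p-q\|-R/2$ from $p$ (so within $R$ of $p$ precisely when $\|p-q\|=R$).

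Now case-split on whether the center $p$ is close to $o$. If $\|p-o\|\le 3R$, apply the engine with $q=\hat q$; then $x,y$ lie within $R$ of $p$, hence within $4R$ of $o$. If $\|p-o\|>3R$, then $\|p-z\|\ge\|p-o\|-\|z-o\|>2R$, so the point $q$ on $\overline{pz}$ at distance $R$ from $z$ satisfies $\overline{pq}\subseteq\overline{pz}\subseteq S$, $\|p-q\|=\|p-z\|-R>R$, and $\|q-o\|\le\|q-z\|+\|z-o\|\le 2R$; applying the engine with this $q$, the points $x,y$ lie within $R$ of $q$, hence within $3R<4R$ of $o$. In either case, if $S$ contained no grid point inside $D(4R,o)$, then in particular $x\notin S$ and $y\notin S$, and the engine yields a contradiction. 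Hence $S$ contains a grid point inside $D(4R,o)$, which together with the bound $|G\cap D(4R,o)|=O((R/(\alpha R/6))^2)=O(\alpha^{-2})$ completes \Cref{lem:fat-collection}.

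I expect the only genuinely delicate point to be the location bookkeeping of the first paragraph: one must check that \Cref{clm:hitting-grid}'s argument always places its two ``bad'' grid points near whichever of $p$ or $q$ happens to be close to $o$ (near $p$ when $\|p-q\|=R$, near $q$ for general $\|p-q\|\ge R$), and that the internal area estimate survives letting $\|p-q\|$ become arbitrarily large — which it does, since the relevant ratio $\frac{\|p-q\|}{\|p-q\|-R/2}$ is maximized at $\|p-q\|=R$. Everything else is routine triangle-inequality chasing on the radii $R,2R,3R,4R$.
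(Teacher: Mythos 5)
Your proof is correct, and although it runs on the same engine as the paper's --- the grid-pinching argument of \Cref{clm:hitting-grid} --- it diverges from the paper's proof of \Cref{clm:hitting-grid-disk} in a way that matters. The paper's argument is two lines: it asserts that if $S$ intersects $D(R,o)$ then the star center $p$ lies in $D(3R,o)$, picks $q\in S$ on $\partial D(R,p)$, and invokes \Cref{clm:hitting-grid} inside $D(R,p)\subseteq D(4R,o)$. That opening assertion does not follow from the hypotheses: ``size at least $R$'' is only a lower bound on the extent of $S$, so a large fat star-shaped region (say, a disk of radius $10R$ whose boundary clips $D(R,o)$) can intersect $D(R,o)$ while its center is arbitrarily far from $o$. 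Your Case 2 supplies exactly the missing argument: when $\|p-o\|>3R$ you re-anchor the pinching argument at the point $q\in\overline{pz}$ at distance $R$ from the witness $z\in S\cap D(R,o)$, so the two candidate grid points land within $R$ of $q$ and hence within $3R$ of $o$; and your observation that the area estimate only improves as $\|p-q\|$ grows past $R$ --- both the ratio $\frac{\|p-q\|}{\|p-q\|-R/2}$ and the Thales factor are maximized at $\|p-q\|=R$ --- is precisely what licenses running the engine with $\|p-q\|>R$. Your Case 1 coincides with the paper's intended argument. So: same key lemma, but your version closes a genuine gap in the paper's proof; the only residual work is the bookkeeping you already flagged, namely that the straddling grid points can be chosen just inside $D(\|p-q\|-R/2,\,p)$ so that the star-shapedness contradiction ($x\in\overline{pw}\subseteq S$) still applies.
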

\begin{proof}
	Let $S$ be an $\alpha$-fat star-shaped polygon of size at least $R$, and denote by $p$ the center of $S$. If $S$ intersect $D$, then $p$ must lie in $D(3R,o)$. Moreover, there exists a point $q$ in $S$ on the boundary of $D(R,p)$. We get that $D(R,p)\subseteq D(4R,o)$, and thus by applying the same arguments from \Cref{clm:hitting-grid} we get that $S$ contains a point from $G\cap D(4R,o)$.
\end{proof}

\Cref{lem:fat-collection} now follows, as $G\cap D$ contains $O(\alpha^{-2})$ points (in the bounding box of $D$ there are $(\frac{6}{\alpha})^2$ points).

\subsection{Additional proofs from \Cref{sec:robust-guarding}}\label{sec:proofs-robust-guarding}

Recall \Cref{obs:disk_in_cone}:
\diskInCone*
\begin{proof}
    Assume that the angle between $\overline{pq}$ and the ray $\rho_0$ is $c\cdot\theta$ (and it is smaller or equal to the angle between $\overline{pq}$ and the ray $\rho_1$). Denote by $r$ the distance from $q$ to $\rho_0$ (see \Cref{fig:disk_in_cone}). Then $r=\sin(c\theta)\|p-q\|\ge c\cdot\sin\theta\|p-q\|=c\alpha\|p-q\|$. (The inequality $\sin(cx)>c\cdot\sin(x)$ can be verified by considering the Maclaurin series of $\sin(x)$.) Thus the disk $D(q,c\alpha\|p-q\|)$ is contained in $K$.    
\end{proof}

\subsubsection{Proofs for \Cref{sec:robust-visibility-region} --- The robust visibility region}\label{sec:proofs_robust_vis}
Recall \Cref{clm:union_kites_fat}:
\unionKitesFat*

    \begin{figure}[h]
	\centering
	\includegraphics[scale=0.8]{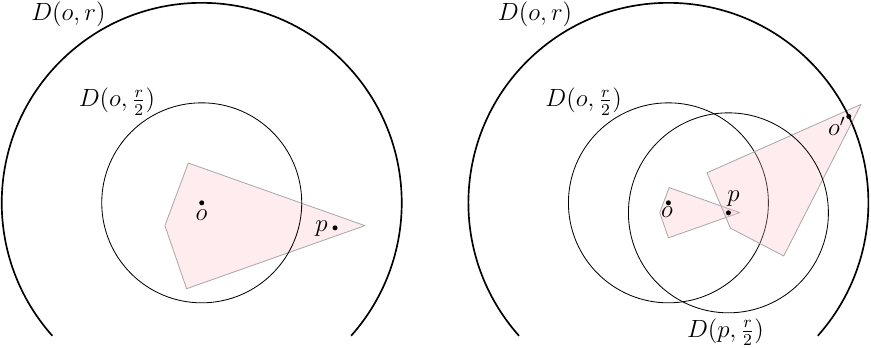}
	\caption{\label{fig:union_kites_fat} The two cases in the proof of \Cref{clm:union_kites_fat}.}
    \end{figure}
    
\begin{proof}
    Let $o$ be a point in $U$, and let $K\in \K$ be a kite containing $o$. Let $D(o,r)$ be a disk that does not fully contain $U$. 

    If $\frac{r}{2}\le \|o-p\|$, then $\partial D(o,\frac{r}{2})$ intersects $K$ (see \Cref{fig:union_kites_fat}, left). Since $K$ is $\gamma$-fat, the connected component of $K\cap D(o,\frac{r}{2})$ that contains $o$ has area at least $\gamma\pi\frac{r^2}{4}$, which is a $\frac{\gamma}{4}$-fraction of the area of $D(o,r)$.
    
    If $\frac{r}{2}>\|o-p\|$, then, since $D(o,r)$ does not fully contain $U$, $\partial D(o,r)$ must intersect $U$; let $o'$ be a point in $\partial D \cap U$, and let $K'\in \K$ be a kite containing $o'$ (see \Cref{fig:union_kites_fat}, right). Consider the disk $D(p,\frac{r}{2})$. Since $\frac{r}{2}>\|o-p\|$, we have $p\in D(o,\frac{r}{2})$, and therefore $D(p,\frac{r}{2})\subseteq D(o,r)$. In addition, $\|p-o'\|>\frac{r}{2}$ because $o'\in\partial D(o,r)$, and thus $D(p,\frac{r}{2})$ intersects $K$'. Since $K'$ is $\gamma$-fat, the connected component of $K'\cap D(p,\frac{r}{2})$ that contains $p$ (which is contained in the connected component of $U\cap D(o,r)$ that contains $o$) has area at least $\gamma\pi\frac{r^2}{4}$, which is a $\frac{\gamma}{4}$-fraction of the area of $D(o,r)$.
\end{proof}

Recall \Cref{clm:computing_vis_1_0}:
\computingVis*

\noindent\emph{Proof.}
    First, notice that by definition $\VP_\alpha(g)\subseteq\VP(g)$, and $\VP(g)$ can be computed in $O(n\log n)$ time. We therefore assume in this proof that $P=\VP(g)$.
  
    Let $v\in P$ be a reflex vertex, and assume w.l.o.g. that both $g,v$ are on the $x$-axis, and $g$ is to the left of $v$ (see the figure to the right, top). By \Cref{obs:p_is_close}, any point in $\VP_\alpha(g)$ is within distance $\frac{\|g-v\|}{\alpha}$ of $g$, and therefore $\VP_\alpha(g)\subseteq D(g,\frac{\|g-v\|}{\alpha})$.

    \begin{figure}[h]
        \centering
        \includegraphics[scale=0.5]{figures/compute_vis_1_0.pdf}
    \caption{Computing $\VP_\alpha(g)$.}
    \label{fig:compute-vis}
    \end{figure}
    
    Let $\ell$ be the vertical line through $v$ (see \Cref{fig:compute-vis}, left), and denote by $p_1,p_2$ the two intersection points of $\ell$ and $D(g,\frac{\|g-v\|}{\alpha})$ (so $\|p_1-g\|=\|p_2-g\|=\frac{\|g-v\|}{\alpha}$).
    The line $\ell$ is tangent to the disk $D(g,\|g-v\|)$ 
    at the point $v$, and we have $\|g-v\|=\alpha\|p_1-g\|$. Since $\angle gvp_1=\frac{\pi}{2}$, we have $\angle gp_1v=\theta$. Symmetrically, $\angle gp_2v=\theta$.
    Denote the part of $D(g,\frac{\|g-v\|}{\alpha})$ to the left of $\ell$ by $C_0$.
    Notice that for any point $p\in C_0$, the ice cream cone from $p$ to $g$ is contained in $C_0$.
    
    Let $B_1$ (resp. $B_2$) be the disk through $g,v,p_1$ (resp. $g,v,p_2$). Let $p$ be a point on the arc of $B_1$ between $p_1$ and $v$ (that does not contain $g$). We have $\angle gpv=\theta$, because $\angle gp_1v=\theta$ and both angles are subtended by the same arc of $B_1$. Since $\angle gpv=\theta$, the ray from $p$ in the direction of $v$ is tangent to $D(g,\alpha\|p-g\|)$. This means that for any such point $p$, the reflex vertex $v$ is almost ``blocking'' the view for $p$, so that any point to the right of this arc is not $\alpha$-robustly visible to $g$. More precisely, for any point $p$ on one of the arcs $\overset{\frown}{p_1v},\overset{\frown}{p_2v}$, the ice cream cone from $p$ to $g$ exactly touches $v$. 
    Let $C_1$ (resp. $C_2$) be the part of $B_1$ (resp. $B_2$) to the right of $\ell$. Then, for any point $p\in C_0\cup C_1\cup C_2$ the respective ice cream cone is contained in $C_0\cup C_1\cup C_2$, and for any point $p\notin C_0\cup C_1\cup C_2$ the respective ice cream cone contains $v$. We conclude that $\VP_\alpha(g)$ is contained in $C_0\cup C_1\cup C_2$, and that if not other features of $P$ are contained in $C_0\cup C_1\cup C_2$, then $\VP_\alpha(g)=C_0\cup C_1\cup C_2$.

    Therefore, we can compute $\VP_\alpha(g)$ in $O(n^2)$ time as follows. For each reflex vertex, compute the region bounded by the three arcs defining $C_0\cup C_1\cup C_2$. In addition, compute $R_g$ (note that it is possible that $R_g$ is the distance from $g$ to an edge of $P$). Compute the intersection between all those regions, the disk $D(g,\frac{R_g}{\alpha})$, and the polygon $P$. The total number of intersection points is $O(n^2)$, and thus we can compute the intersection of all these regions in $O(n^2)$ time; see \Cref{fig:compute-vis}, right.
\qed

\subsubsection{Proofs for \Cref{sec:inv_vis} --- The robust inverse visibility region}\label{sec:proofs_inv_vis}
Recall \Cref{clm:VP_0_1_fat_kite}:
\VPFatKite*

\begin{proof}
Consider the ice cream cone from $p$ to $g$, and assume w.l.o.g. that $p,g$ are on the $x$-axis. Let $\rho^a,\rho^b$ be the upper and lower tangents from $p$ to $D(g,\alpha\|g-p\|)$, and let $a,b$ be the tangency points (see \Cref{fig:line_visibility}, right). Let $\rho^a_{1/2}$ (resp. $\rho^b_{1/2}$) be the bisector of $\angle apg$ (resp. $\angle bpg$). By \Cref{obs:disk_in_cone}, for any point $q$ in the cone of $\rho^a_{1/2},\rho^b_{1/2}$, the disk $D(q,\frac\alpha2\|p-q\|)$ is contained in the cone of $\rho^a,\rho^b$.

Denote by $a'$ (resp. $b'$) the intersection point between $\overline{ga}$ (resp. $\overline{gb}$) and $\rho^a_{1/2}$ (resp. $\rho^b_{1/2}$). Since $\angle pag=\frac{\pi}{2}$, for any point $q\in\overline{ga'}$, the disk $D(q,\frac\alpha2\|p-q\|)$ is contained in $D(g,\alpha\|p-g\|)$ (because by \Cref{obs:disk_in_cone}, the upper tangent to $D(q,\frac\alpha2\|p-q\|)$ must be below $\rho^a$).
Consider the kite $K=ga'pb'$; then the angle at $p$ is $\angle a'pg+\angle b'pg=\frac{\theta}{2}+\frac{\theta}{2}=\theta$, and the angle at $g$ is $\pi-2\theta$. Since for any point $q\in K$, the disk $D(q,\frac\alpha2\|p-q\|)$ is contained in the ice cream cone from $p$ to $g$, we get that every $q\in K$ $\alpha/2$-robustly guards $p$.
Moreover, since the 4 angles of the kite are at least $\theta=\arcsin\alpha$, the kite is $O(\alpha)$-fat.
\end{proof}

Recall \Cref{lem:approx-alpha}:
\approxAlpha*

\begin{proof}
    For every point $g\in \VP^{\text{inv}}_\alpha(p)$, let $K(g)$ be the $O(\alpha)$-fat kite from \Cref{clm:VP_0_1_fat_kite}. Then $K(g)$ has $p$ and $g$ as its vertices, and any point $q\in K(g)$ $\alpha/2$-robustly guards $p$. Let $F_p$ be the union of all these kites, i.e., $\bigcup_{g\in  \VP^{\text{inv}}_\alpha(p)}K(g)$. Clearly, $F_p$ is star-shaped, it contains $\VP^{\text{inv}}_\alpha(p)$, and by \Cref{clm:union_kites_fat}, it is $O(\alpha)$-fat. The size (radius of the smallest enclosing disk centered at $p$) of $F_p$ is equal to the size of $\VP^{\text{inv}}_\alpha(p)$, because $F_p$ contains $\VP^{\text{inv}}_\alpha(p)$, and for any $g\in \VP^{\text{inv}}_\alpha(p)$, the furthest point from $p$ in a kite $K(g)$ is $g$.
\end{proof}

Recall \Cref{clm:computing_vis_0_1}:
\computingVisInv*

\begin{proof}
	By definition we have $\VP^{\text{inv}}_\alpha(p)\subseteq\VP(g)$, so we assume in this proof that $P=\VP(g)$.
	
	Given a point $g\in P$, recall that $R_g$ is the largest radius of a disk centered at $g$ and contained in $P$. Since $p$ can see any point in $P$, $p$ is $\alpha$-robustly guarded by $g$ if and only if $R_g\ge\alpha\|p-g\|$. Notice that $R_g$ is determined by the distance from $g$ to the closest edge or vertex of $P$.
	
	For any edge $\{u,v\}$ of $P$, if the distance from $g$ to $\overline{uv}$ is smaller than $\alpha\|p-g\|$, then $R_g<\alpha\|p-g\|$, and $p$ is not $\alpha)$-robustly guarded by $g$.	Therefore, for each edge of $P$ we compute the locus of all points $g\in P$ such that $\alpha\|p-g\|$ is equal to the distance between $g$ and $\overline{uv}$, as follows.
	
	To make the computations easier, assume w.l.o.g. (by scaling and rotation) that $u,v$ are on the $x$-axis, $u$ is to the right of $v$, and that $p=(0,1)$ (as illustrated in the figure below). 
 \begin{figure}[htb]
	\centering
	\includegraphics[scale=0.9]{compute_vis_0_1.pdf}
\end{figure}

    The distance between a point $g=(x_g,y_g)$ and the $x$-axis is simply $y_g$. We have $\|p-g\|=\sqrt{x_g^2+(1-y_g)^2}$, and thus the distance between $g$ and the $x$-axis is exactly $\alpha\|p-g\|$ for any point $g$ on the following hyperbola $H_e$ (colored red in the figure above):
	\begin{align*}
	&y_g=\alpha\sqrt{x_g^2+(1-y_g)^2}\\
	&\frac{y_g^2}{\alpha^2}=x_g^2+(1-y_g)^2=x_g^2+1-2y_g+y_g^2\\
	&x_g^2+(1-\alpha^{-2})y_g^2-2y_g+1=0
	\end{align*}
	
	The distance between $g$ and $v=(x_v,0)$ is $\sqrt{(x_v-x_g)^2+y_g^2}$, and it is equal to $\alpha\|p-g\|$ for any $g$ on the following disk (colored green in the figure above):
	\begin{align*}
	&\sqrt{(x_v-x_g)^2+y_g^2}=\alpha\sqrt{x_g^2+(1-y_g)^2}\\
	&(x_v-x_g)^2+y_g^2=\alpha^2 (x_g^2+(1-y_g)^2)\\
	&x_g^2+x_v^2-2x_vx_g+y_p^2=\alpha^2(x_g^2+1-2y_g+y_g^2)\\
	&(1-\alpha^2)x_g^2-2x_vx_g+(1-\alpha^2)y_g^2+2\alpha^2y_g-\alpha^2+x_v^2=0
	\end{align*}
	 We compute the disk corresponding to $u$ similarly (colored green in the figure above). 
    Denote by $C_v$ the disk corresponding to $v$, and by $C_u$ the disk corresponding to $u$. 
    
    Notice that $H_e,C_v$ intersect at a point $q_v$ directly above $v$, and $H_e,C_u$ intersect at a point $q_u$ directly above $u$. For any point $g\in P$, $D(g,\alpha\|p-g\|)$ intersect the edge $\{u,v\}$ if and only if $g$ is between the branches of $H_e$ and between $\overline{uq_u}$ and $\overline{vq_v}$, or in one of the disks $C_u,C_v$. Thus, for each edge of the polygon we have a constant number of constraints, and computing their intersection can be easily done in $O(n^2)$ time.
\end{proof}

\subsubsection{Hardness proof}\label{sec:proof_hardness}

Recall \Cref{thm:hardness}:
\hardness*

\begin{proof}
We do an approximation preserving reduction from the {\sc Hitting Lines} problem~\cite{broden2001guarding}:
Given a finite set ${\cal L}$ of lines in the plane, no two of which are parallel, find a minimum-cardinality set of points $H$ that hit all lines of ${\cal L}$ (i.e., there is at least one point of $H$ on each line of ${\cal L}$). 

\begin{figure}[h]
	\centering
	\includegraphics[scale=0.7]{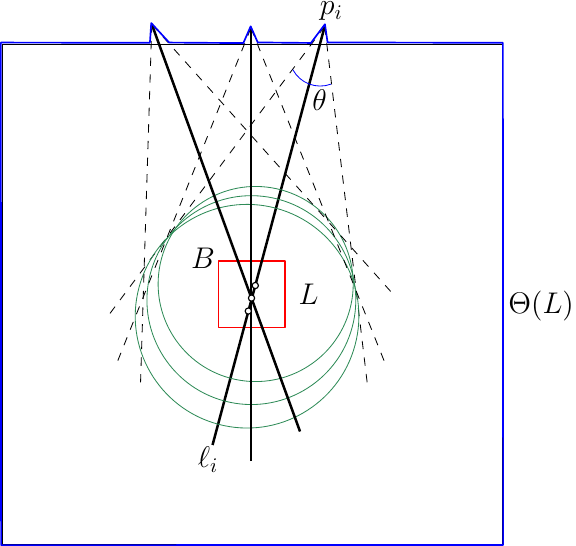}
	\caption{\label{fig:hitting-lines} Reducing an instance of Hitting Lines to an instance of robust guarding in the blue simple polygon (a ``spike box'').}
\end{figure}

Given the set ${\cal L}$ of lines, we construct a simple polygon $P$ (a ``spike box'') that consists of a box bounding all vertices in the arrangement of the lines ${\cal L}$, with triangular ``spikes'' extending out from it. Refer to Figure~\ref{fig:hitting-lines}. The interior angle, $2\theta$, at the tip (convex) vertex of each spike is chosen in accordance with the choice of robustness parameter $\alpha$, so that the only points within $P$ that can $\alpha$-robustly guard each tip $p_i$ are those points that lie on (hit) the corresponding line, $\ell_i$, that is the bisector of the cone defined by extending the edges incident on $p_i$. (In the figure, $\alpha$ is approximately 0.4.) Thus, there is a solution of size $k$ to the {\sc Hitting Set} instance ${\cal L}$ if and only if the polygon $P$ can be $\alpha$-robustly guarded by $k$ guard points.

One can check that the instance can be constructed with the necessary precision using a polynomial number of bits. In particular, each line of ${\cal L}$ can be assumed to be given as a pair of points from an integer grid, $[0,N]\times[0,N]$. Then, the smallest angle between any two lines is at least $\Theta(1/N^2)$, implying that all vertices of the arrangement lie within a box (shown in red in the figure) of side length $L=\Theta(N^3)$. The larger enclosing box that forms the basis of the spike box can then be taken of size $\Theta(L)$, and all vertices of $P$ can be chosen to be rational from a comparably sized grid.
\end{proof}

\subsection{Additional proofs from \Cref{sec:robust-guarding-algorithm}}\label{sec:proofs-const-approx}

Recall \Cref{obs:purple_regions}:

\purpleRegions*

\begin{proof}
If $g\in \Pi$,
we show that $\VP_\alpha(g)$ intersects at most four of the disks $D_{p_i}$ that were added to $\Pi$. 
Draw a perpendicular from $g$ onto the medial axis; let $h$ be the intersection point on the medial axis. Suppose $h$ is in between the two consecutive disk-centers $p_i$ and $p_{i+1}$. 
If $g$ sees a point $\alpha$-robustly, then the point is also seen $\alpha$-robustly by either $p_i$ or $p_{i+1}$. Hence, $\VP_\alpha(g)\subseteq \VP_\alpha(p_i)\cup \VP_\alpha(p_{i+1})$. Now $\VP_\alpha(p_i)$ can intersect three disks centered at $p_i$ and $p_{i-1}$ and $p_{i+1}$. Similarly, $\VP_\alpha(p_{i+1})$ can intersect three disks centered at $p_{i+1}$ and $p_{i}$ and $p_{i+2}$. Hence, $\VP_\alpha(g)$ can intersect at most two disks to the left of $h$ (disks centered at $p_i$ and $p_{i-1}$) and at most two disks to the right of $h$ (disks centered at $p_{i+1}$ and $p_{i+2}$).

If $g\notin \Pi$, then we will prove this observation by contradiction. Suppose that $\VP_\alpha(g)$ intersects more than 4 disks in $\Pi$. In that case there must be a point $p$ in $\Pi$ such that $p\in\VP_\alpha(g)$ and $\overline{gp}$ intersects more than 4 disks in $\Pi$. Then there must be another point $g'\in \Pi$ such that $g'$ $\alpha$-robustly sees $p$ and thus $\VP_\alpha(g')$ intersects more than 4 disks in $\Pi$, which is a contradiction. The observation is thus proved. 
\end{proof}

Recall \Cref{obs:boundary-cells}:

\boundaryCells*

\begin{proof}
    To prove the observation, we show that any blue region is bounded by two non-disjoint medial disks and one edge of $P$. 

    Consider a blue region $R$ (see \Cref{fig:medial_axis_cells}). Since $R\subseteq P\setminus \D$, it must be bounded by an edge $e\in P$ and a disk $D_v$ that touches $e$. Let $\{v,u\}$ be the medial axis edge such that $R$ is between $D_u$ and $D_v$. If $D_u\cap D_v=\emptyset$, then $\{v,u\}$ must be a straight-line edge (otherwise, if it were a subcurve of a parabola, both $D_v$ and $D_u$ would touch the reflex vertex defining this parabola). In this case, the region of the polygon between $D_u$ and $D_v$ is purple. We get that $D_u$ and $D_v$ are non-disjoint.
\end{proof}

\subsubsection{Additional proofs for \Cref{sec:robust-candidate-guards}}\label{sec:proofs_candidates}

The following claim completes the proof of \Cref{lem:cone_intersecting_disk}.

\begin{claim} \label{claim:quad}
Quadrilateral $z_1y_1y_2 z_2$ must contain a grid point from $Q_v$.
\end{claim}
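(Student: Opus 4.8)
The plan is to prove that the quadrilateral $T := z_1y_1y_2z_2$ — which is a (convex) trapezoid contained in $D_v$ by the displayed property established in the proof of \Cref{lem:cone_intersecting_disk} (``for any $q$ in the quadrilateral, $D(q,\tfrac\alpha4\|p-q\|)\subseteq D_v$'', in particular $q\in D_v$) — contains an inscribed disk of diameter $\Omega(\alpha^2 R_v)$. Since $Q_v$ is an axis-parallel grid of spacing $\Theta(\alpha^2 R_v)$ restricted to $D_v$, once the hidden constant in that spacing is fixed small enough (which still leaves $|Q_v|=\Theta(\alpha^{-4})$), such an inscribed disk, being contained in $T\subseteq D_v$, must contain a grid point of $Q_v$; and that grid point $\tfrac\alpha4$-robustly guards $p$ by the defining property of $T$ used in the lemma. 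So the whole content is a size estimate for $T$.

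I would first fix coordinates with $p$ at the origin and $\rho_{5/8}$ along the positive $x$-axis. Then $\rho_{1/2}$ and $\rho_{3/4}$ are the two rays from $p$ making angle $\tfrac\theta8$ with $\rho_{5/8}$ on opposite sides, while $\overline{y_1y_2}$ and $\overline{z_1z_2}$ are the perpendiculars to $\rho_{5/8}$ at $o=(\|p-o\|,0)$ and $w_4=(\|p-w_4\|,0)$; hence $T=\{(x,y):\ \|p-w_4\|\le x\le\|p-o\|,\ |y|\le x\tan\tfrac\theta8\}$. I would read off its two dimensions. The distance between its parallel sides is $\|o-w_4\|=\tfrac\alpha4\|p-o\|$ (it is the radius of $D(o,\tfrac\alpha4\|o-p\|)$, by the definition of $w_4$), and its shorter parallel side has length $\|z_1-z_2\|=2\|p-w_4\|\tan\tfrac\theta8=2(1-\tfrac\alpha4)\|p-o\|\tan\tfrac\theta8$. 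Using $\|p-w_1\|=\cos\tfrac{3\theta}8\|p-x_2\|\ge\cos\tfrac{3\theta}8\|x_1-x_2\|$, $\|x_1-x_2\|\ge 2\sin\tfrac\theta4\,R_v$, and $\|p-o\|=\|p-w_1\|/(1+\tfrac\alpha4)$, together with the elementary bounds $x\le\tan x\le 2x$, $x-\tfrac{x^3}6\le\sin x\le x$, $\cos x\ge 1-\tfrac{x^2}2$, and $\alpha\le\theta=\arcsin\alpha\le 2\alpha$ (all valid since $\alpha\le\tfrac12$, so $\theta\le\tfrac\pi6$), both dimensions are $\Theta(\alpha\,\|p-w_1\|)$, hence $\Theta(\|x_2-w_1\|)$ since $\|x_2-w_1\|=\tan\tfrac{3\theta}8\,\|p-w_1\|=\Theta(\theta\,\|p-w_1\|)$; and $\|x_2-w_1\|\ge\sin\tfrac{3\theta}8\cdot 2\sin\tfrac\theta4\,R_v=\Omega(\alpha^2 R_v)$, so both dimensions are $\Omega(\alpha^2 R_v)$.

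Next I would exhibit the inscribed disk explicitly. Put its center $c=(\xi,0)$ at the midpoint $\xi=\tfrac12(\|p-w_4\|+\|p-o\|)$ on the axis. Then $c$ is at distance $\tfrac12\|o-w_4\|=\tfrac\alpha8\|p-o\|$ from each of the two parallel sides $\overline{y_1y_2},\overline{z_1z_2}$, and at distance $\xi\sin\tfrac\theta8\ge\|p-w_4\|\sin\tfrac\theta8=(1-\tfrac\alpha4)\sin\tfrac\theta8\,\|p-o\|$ from each of the two slanted sides (the lines $y=\pm x\tan\tfrac\theta8$ through $p$). Hence $T$ contains the disk about $c$ of radius $\rho=\min\!\big(\tfrac\alpha8,\,(1-\tfrac\alpha4)\sin\tfrac\theta8\big)\|p-o\|$; with $\sin\tfrac\theta8\ge\tfrac\theta8(1-\tfrac{(\theta/8)^2}6)\ge\tfrac9{10}\cdot\tfrac\alpha8$ and $1-\tfrac\alpha4\ge\tfrac34$, and the bound $\|p-o\|\ge\tfrac89\cos\tfrac{3\theta}8\cdot 2\sin\tfrac\theta4\,R_v=\Omega(\alpha R_v)$ from the previous paragraph, this gives $\rho\ge c'\alpha^2 R_v$ for an explicit constant $c'>0$ (tracking the numbers above yields, e.g., $c'\ge\tfrac1{25}$). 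Finally, a disk of radius $\rho$ contains an axis-parallel square of side $\rho\sqrt2$, which contains a point of any axis-parallel grid of spacing at most $\rho\sqrt2$; so choosing the constant in the $\Theta(\alpha^2 R_v)$ spacing of $Q_v$ to be at most $c'\sqrt2$ forces the inscribed disk — contained in $T\subseteq D_v$ — to meet $Q_v$, which proves the claim. The only real work is the constant bookkeeping: one must check that the slanted sides of $T$ do not cut the inscribed disk below $\Omega(\alpha^2 R_v)$, and that the two $\Theta(\alpha^2 R_v)$ quantities (the grid spacing and the inscribed-disk diameter) can be ordered in the right direction — both routine given the Maclaurin bounds listed in the lemma.
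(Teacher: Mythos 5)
Your proposal is correct and follows essentially the same route as the paper: both arguments show that the trapezoid $z_1y_1y_2z_2$ has its height and its shorter parallel side of length $\Theta(\alpha^2)R_v$, deduce that it contains an inscribed disk of diameter $\Theta(\alpha^2)R_v$, and conclude that a grid point of $Q_v$ (with a suitably chosen spacing constant) must lie inside. Your version is somewhat more explicit in exhibiting the center and radius of the inscribed disk and in tracking constants, but the underlying idea and the key estimates ($\|x_1-x_2\|\ge 2\sin(\theta/4)R_v$, the ratio $\|p-w_4\|/\|p-w_1\|=(1-\alpha/4)/(1+\alpha/4)$, and the height $\tfrac{\alpha}{4}\|p-o\|$) are the same.
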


\begin{proof}
We show that quadrilateral $z_1y_1y_2 z_2$ contains a disk of diameter $\Theta(\alpha^2)\cdot R_v$, and thus must contain a grid point from $Q_v$.

The $\triangle px_2x_3$ is an isosceles triangle. Since $\overline{p w_1}$ is the angle bisector of $\angle x_2 p x_3$ and also a perpendicular to base $\overline{x_2 x_3}$, we get $|\overline{x_2 w_1}| = |\overline{w_1 x_3}|$. This gives $|\overline{x_2 x_3}| = 2\cdot |\overline{x_2w_1}| = \Theta(\theta^2)\cdot R_v$.
From the construction of $\rho_{1/4}, \rho_{1/2}, \rho_{3/4}$ and $\rho_1$, we get $|\overline{w_2 w_3}| = 1/3 \cdot |\overline{x_2 x_3}| = \Theta(\theta^2)\cdot R_v$.

Consider $\triangle p z_1 z_2$ and $\triangle p w_2 w_3$. We get the following ratio between the edges of both triangles.
$$\frac{|\overline{z_1 z_2}|}{|\overline{w_2 w_3}|} = \frac{|\overline{p w_4}|}{|\overline{p w_1}|} = \frac{|\overline{p o}| - |\overline{o w_4}|}{|\overline{p o}| + |\overline{o w_4}|} = \frac{1 - \alpha/4}{1+ \alpha/4}>\frac12.$$    

Thus, $|\overline{z_1 z_2}| >\frac12\cdot |\overline{w_2 w_3}| = \Theta(\theta^2)\cdot R_v$. 

From the construction of quadrilateral $z_1y_1y_2 z_2$, it is a trapezoid (both $\overline{y_1 y_2}$ and $\overline{z_1 z_2}$ are perpendicular to $\overline{po}$, and thus $\overline{y_1 y_2}$ and $\overline{z_1 z_2}$ are parallel). The height of trapezoid $z_1 y_1 y_2 z_2$ is the radius of disk $D(o,\alpha/4\|o-p\|)$, which is $\alpha/4 \cdot |\overline{po}| = \Theta(\alpha) \cdot |\overline{pw_1}| = \Theta(\alpha \cdot \theta)\cdot R_v = \Theta(\alpha^2)\cdot R_v$.

Since the length of the smallest side and the height of trapezoid $z_1 y_1 y_2 z_2$ are $\Theta(\alpha^2)\cdot R_v$, the trapezoid can contain a disk of diameter $\Theta(\alpha^2)\cdot R_v$, and thus must contain a grid point from $Q_v$. The claim is thus proved.
\end{proof}

The following claim completes the proof of \Cref{lem:robust-vision-candidates}.

\begin{claim}\label{clm:angle_is_large}
	$\angle w_1pw_2\ge\theta$.
\end{claim}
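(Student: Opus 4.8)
The plan is to reduce the statement to a one‑variable inequality and then dispatch it with elementary calculus.

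First I would set up coordinates with $p$ at the origin and the ray $\rho_0$ (from $p$ toward $g$) along the positive $x$-axis, and normalize $\|p-g\|=1$; then $g=(1,0)$ and the relevant ice‑cream disk is $D(g,s)$ with $s:=\sin\theta=\alpha$. Write $\delta:=\|v-g\|$, $R:=R_v$, $\hat v:=(v-g)/\delta=(\cos\mu,\sin\mu)$, $\beta:=\frac{s^2+\delta^2-R^2}{2s\delta}$ and $\nu:=\arccos\beta$. Parametrizing $\partial D(g,s)$ by $\phi\mapsto g+s(\cos\phi,\sin\phi)$, the two points $w_1,w_2$ of $\partial D(g,s)\cap\partial D_v$ satisfy $\langle\frac{w_i-g}{s},\hat v\rangle=\beta$, so they are exactly the points with $\phi=\mu\pm\nu$. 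Since $\angle w_1pw_2$ equals the difference of the polar angles of $w_1-p$ and $w_2-p$, the claim becomes: for $\psi(\phi):=\arctan\frac{s\sin\phi}{1+s\cos\phi}$ (the direction of the parameter‑$\phi$ point as seen from $p$), one has $\bigl|\psi(\mu+\nu)-\psi(\mu-\nu)\bigr|\ge\theta$.

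Next I would extract three facts from the scenario. (i) $g\in D_v$ gives $\delta\le R$, and the scenario gives $R\ge s$, so $s^2+\delta^2-R^2\le\min(s^2,\delta^2)$, whence $\beta\le\frac12$ and $\nu\ge\pi/3$; this is precisely where $R_v\ge\alpha\|p-g\|$ together with $g\in D_v$ enters. (ii) $v$ lies above $\rho_0$, so $\mu\in(0,\pi)$. (iii) $a\notin D_v$, i.e.\ $\|a-v\|>R$; since $a-g=s(-\sin\theta,\cos\theta)$ (so $a$ has parameter $\phi_a=\frac\pi2+\theta$), this is equivalent to $\cos(\phi_a-\mu)<\beta=\cos\nu$, i.e.\ $\phi_a\notin[\mu-\nu,\mu+\nu]$ (in particular $\mu\neq\phi_a$, as $\beta<1$). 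Recall also the standing assumption $\alpha\le\frac12$, so $\theta\le\pi/6$. I would then split on $\mu<\phi_a$ versus $\mu>\phi_a$. If $\mu<\phi_a$, fact (iii) forces $\mu+\nu<\phi_a$, and together with $\nu<\phi_a-\mu$ and $\mu>0$ it gives $\mu-\nu>2\mu-\phi_a>-\phi_a=-\frac\pi2-\theta$; hence $[\mu-\nu,\mu+\nu]\subset(-\frac\pi2-\theta,\frac\pi2+\theta)$, on which $\psi$ is increasing from $-\theta$ to $\theta$. Using $\nu\ge\pi/3$ and monotonicity, $\angle w_1pw_2=\psi(\mu+\nu)-\psi(\mu-\nu)\ge\psi(\mu+\frac\pi3)-\psi(\mu-\frac\pi3)=:f(\mu)$ for $\mu\in[0,\frac\pi6+\theta]$. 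I would check that $\psi$ is odd, $\psi'$ even, and that $\psi'$ is decreasing on $(0,\frac\pi2+\theta)$; this makes $f'$ decreasing with $f'(0)=0$, so $f$ is non‑increasing and $f(\mu)\ge f(\frac\pi6+\theta)=\psi(\frac\pi2+\theta)-\psi(\theta-\frac\pi6)=\theta-\psi(\theta-\frac\pi6)\ge\theta$, the last step since $\theta-\frac\pi6\le0$ and $\psi(0)=0$. The case $\mu>\phi_a$ is symmetric: now $[\mu-\nu,\mu+\nu]$ sits inside the near‑arc interval $(\frac\pi2+\theta,\frac{3\pi}2-\theta)$, on which the analogous function $\tilde\psi(\phi''):=\psi(\pi+\phi'')=\arctan\frac{-s\sin\phi''}{1-s\cos\phi''}$ is decreasing with the same structure, and the identical monotonicity argument gives $\angle w_1pw_2\ge\theta-\tilde\psi(\theta+\frac\pi6)>\theta$. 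These two cases are exhaustive, so the claim follows.

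The step I expect to be the main obstacle is establishing that $\psi'$ (and $\tilde\psi'$) is monotone on the relevant half‑branch, which is what reduces the bound on $f$ to an endpoint evaluation. After simplification, $\frac{d}{d\phi}\psi'(\phi)$ has the sign of $\sin\phi\,(s\cos\phi+2s^2-1)$, which is negative on $(0,\frac\pi2+\theta)$ exactly because $s=\alpha\le\frac12$ forces $s\cos\phi+2s^2-1<\frac12+\frac12-1=0$; thus the argument genuinely uses the standing assumption $\alpha\le\frac12$. Everything else is bookkeeping about which arc of $\partial D(g,\alpha\|p-g\|)$ the disk $D_v$ carves out, together with the elementary point that $g\in D_v$ and $R_v\ge\alpha\|p-g\|$ force the central angle at $g$ of that arc to be at least $2\pi/3$.
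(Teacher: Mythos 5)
Your proof is correct, and it reaches the bound by a genuinely different route than the paper. Both arguments rest on the same two geometric inputs: the law-of-cosines step showing that the half-angle $\nu=\angle w_1gv$ satisfies $\nu\ge\pi/3$ (because $g\in D_v$ and $R_v\ge\alpha\|p-g\|$ make $\overline{vw_1}$ the longest side of $\triangle gvw_1$), and the use of $a\notin D_v$ to pin down which arc of $\partial D(g,\alpha\|p-g\|)$ carries $w_1,w_2$. From there the paper argues extremally: it perturbs the disk $D(v^*,r^*)$ (shrinking the radius, then sliding the center along $\partial D(g,\alpha\|p-g\|)$) to show the angle is minimized in the degenerate configuration $R_v=\alpha\|p-g\|$, $v\in\partial D(g,\alpha\|p-g\|)$, $w_1=a$, and evaluates it there. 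You instead encode everything in the visual-angle function $\psi(\phi)=\arg(1+se^{i\phi})$ and finish by monotonicity and concavity, turning the worst-case analysis into an endpoint evaluation of a one-variable function; this trades the paper's perturbation steps (whose monotonicity claims are asserted rather than computed) for routine calculus that is fully checkable. One small correction: your sign formula for $\frac{d}{d\phi}\psi'$ is off — the computation gives $\psi''(\phi)=\frac{-s(1-s^2)\sin\phi}{(1+2s\cos\phi+s^2)^2}$, whose sign is that of $-\sin\phi$ for every $s<1$ — so the concavity of $\psi$ on $(0,\pi)$ does not in fact require $\alpha\le\frac12$. That standing assumption is genuinely used elsewhere in your argument (in the endpoint evaluation, where $\theta\le\pi/6$ gives $\theta-\pi/6\le 0$, and in making the case $\mu>\phi_a$ sit inside the far arc), so the conclusion is unaffected.
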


\begin{figure}[h]
	\centering
	\includegraphics[scale=1]{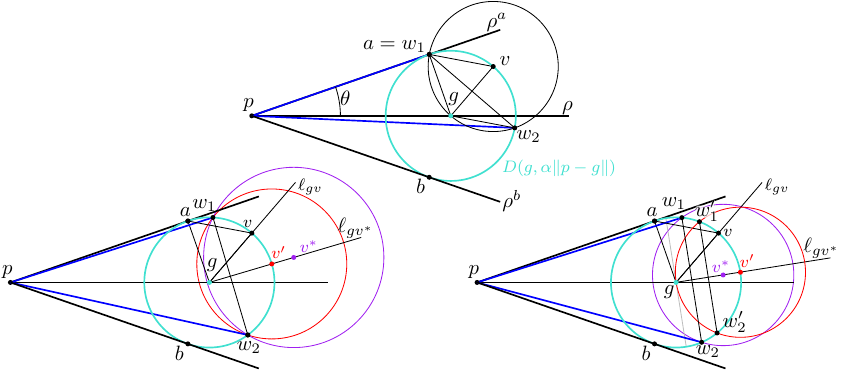}
	\caption{An illustration of the proof for \Cref{clm:angle_is_large}. Top: $v$ is on $\partial D(g,\alpha\|p-g\|)$, $R_v=\alpha\|p-g\|$, and $w_1=a$.
 Bottom left: $v^*\notin D(g,\alpha\|p-g\|)$. Bottom right: $v^*\in D(g,\alpha\|p-g\|)$.}
	\label{fig:min_angle}
\end{figure}

\begin{proof}
     Consider the triangle $\triangle gvw_1$ (see \Cref{fig:min_angle}). We have $\|g-w_1\|=\alpha\|p-g\|$, $\|v-w_1\|=R_v$, and since $g\in D_v$, $\|g-v|\le R_v$. Thus, $\angle w_1 gv$ is opposite the largest side of $\triangle gvw_1$, and it cannot be the smallest of the 3 angles of the triangle, so we know that $\angle w_1 gv\ge\frac\pi3$. Similarly, $\angle w_2 gv\ge\frac\pi3$, and thus $\angle w_1 gw_2\ge\frac23\pi$. We have $\theta\le\frac\pi6$ (for $\alpha\le\frac12$), and since $w_1$ is below $\rho^a$, we have $\angle pw_1g\le\angle pag=\frac\pi2$. We get that $\angle pgw_1\ge \frac\pi3$, and thus $w_2$ must be on or below $\rho$.
    
    Now assume that $v$ is on $\partial D(g,\alpha\|p-g\|)$, $R_v=\alpha\|p-g\|$, and $w_1=a$. We have $\angle w_1pg=\theta$, and $\angle gw_1p=\frac\pi2$ (as $w_1=a$, the tangent point of $\rho^a$ and $D(g,\alpha\|p-g\|)$). Since $w_2$ is on or below $\rho$, $w_1=a$, and $\angle w_1pg=\theta$, we have $\angle w_1pw_2\ge\theta$.

    We now claim that the above case gives the lower bound. In other words, the angle $\angle w_1pw_2$ is minimized when $v$ is on $\partial D(g,\alpha\|p-g\|)$, $R_v=\alpha\|p-g\|$, and $w_1$ is (almost) on $a$. Consider a disk that minimizes the angle $\angle w_1pw_2$, and let $D(v^*,r^*)$ be such disk with \textbf{minimum} radius $r$. Recall that $D(v^*,r^*)$ contains $g$ but not $a$, $r^*\ge\alpha\|p-g\|$, and $v^*$ is above $\rho$.

    Let $v$ be the point on $\partial D(g,\alpha\|p-g\|)$ at distance $\alpha\|p-g\|$ from both $g$ and $a$, and denote by $\ell_{gv}$ the line through $g$ and $v$. Notice that if $v^*$ is above $\ell_{gv}$, then since $D(v^*,r^*)$ contains $g$ it must also contain $a$. Hence, $v^*$ is below $\ell_{gv}$. The line $\ell_{gv^*}$ through $g$ and $v^*$ crosses $\partial D(g,\alpha\|p-g\|)$ at a point $v'$ below $\ell_{gv}$ and above $\rho$.

    If $v^*\notin D(g,\alpha\|p-g\|)$, then $r^*=\|v^*-w_1\|>\|v'-w_1\|$ (because $\overline{gv^*}$ is bisecting $\overline{w_1w_2}$).
    Construct a disk $D(v',r')$ such that $r'=\|v'-w_1\|$. Thus the intersection points between $\partial D(v',r')$ and $\partial D(g,\alpha\|p-g\|)$ remain $w_1,w_2$. Now, as $\angle w_1 gv'=\angle w_1 gv^*\ge\frac\pi3$, and $\|g-w_1\|=\|g-v'\|$, we get that $r'=\|v'-w_1\|\ge \|g-v'\|=\alpha\|p-g\|$ and thus $D(v',r')$ contains $g$. We get that $D(v',r')$ contains $g$ but not $a$, and $r'\ge\alpha\|p-g\|$. The angle $\angle w_1pw_2$ remains the same, but $r^*>\|v'-w_1\|=r'$, in contradiction to $r^*$ being minimal.
    
    If $v^*\in D(g,\alpha\|p-g\|)$, consider the disk $D(v',\alpha\|p-g\|)$, and let $w'_1,w'_2$ be the two intersection points between $D(v',\alpha\|p-g\|)$ and $D(g,\alpha\|p-g\|)$. Consider the diameter of $D(g,\alpha\|p-g\|)$ perpendicular to $\overline{gv'}$, then $w_1,w_2$ are on the side as $v'$, and moreover, as we slide $v^*$ towards $v'$ and decrease $r^*$, we get $\|w'_1-w'_2\|<\|w_1-w_2\|$ and $\angle w'_1pw'_2<\angle w_1pw_2$. So we can assume that $r^*=\alpha\|p-g\|$ and $v^*=v'$.
    Now, slide $v^*$ along $\partial D(g,\|p-g\|)$ until $w_1=a$, and let $w$ be the new location of $w_2$. 
    As we slide $v^*$ along $\partial D(g,\|p-g\|)$ until $w_1=a$, the length $\|w_1-w_2\|$ remains the same and both endpoints of $\overline{w_1w_2}$ stay on $\partial D(g,\|p-g\|)$. Since the position of $p$ is fixed, $\angle w_1 p w_2$ is minimized when $w_1 = a$. 
\end{proof}

\end{document}